\title{Crossing Numbers of Beyond Planar Graphs Re-revisited: A Framework Approach} %
\author{Markus Chimani}{Theoretical Computer Science, Osnabrück University, Germany}{markus.chimani@uos.de}{https://orcid.org/0000-0002-4681-5550}{}%
\author{Torben Donzelmann}{Theoretical Computer Science, Osnabrück University, Germany}{tdonzelmann@uos.de}{}{}
\author{Nick Kloster}{Theoretical Computer Science, Osnabrück University, Germany}{nkloster@uos.de}{}{}
\author{Melissa Koch}{Theoretical Computer Science, Osnabrück University, Germany}{melikoch@uos.de}{}{}
\author{Jan-Jakob Völlering}{Theoretical Computer Science, Osnabrück University, Germany}{jvoellering@uos.de}{}{}
\author{Mirko H.\ Wagner}{Theoretical Computer Science, Osnabrück University, Germany}{mirko.wagner@uos.de}{https://orcid.org/0000-0003-4593-8740}{}
\authorrunning{M. Chimani, T. Donzelmann, N. Kloster, M. Koch, J. Völlering, and M.\,H. Wagner} %
\keywords{Beyond planarity, crossing number, crossing ratio, proof framework} %
\tikzstyle{blueE}=[blue,dash pattern=on 12pt off .6pt]
\tikzstyle{yellowE}=[yellow!50!brown,dash pattern=on 12pt off .6pt on 1.2pt off .6pt on 1.2pt off .6pt]
\tikzstyle{redE}=[red,dash pattern=on 9pt off .6pt on 1.2pt off .6pt]
\tikzstyle{grayE}=[gray]%
\newcommand{\frameDrawing}[7][grayE]{
    \begin{tikzpicture}[very thick]
        \path[use as bounding box] (2.7,1.9) rectangle (-2.7,-1.4);
        \node[draw,       inner sep=2pt,minimum size=7pt] (tl) at ( -1.0,  0.8) {};
        \node[draw,       inner sep=2pt,minimum size=7pt] (bl) at ( -1.0, -1.0) {};
        \ifstrequal{#7}{}{
            \node[draw,circle,inner sep=2pt,minimum size=7pt] (tr) at (  1.0,  0.8) {};
        }{
            \node[draw,circle,inner sep=2pt,minimum size=7pt] (tr) at #7 {};
        }
        \node[draw,circle,inner sep=2pt,minimum size=7pt] (br) at (  1.0, -1.0) {};
        \node[draw,circle,inner sep=2pt,minimum size=7pt] (l)  at (-2.5,  1.5) {};
        \node[draw,       inner sep=2pt,minimum size=7pt] (r)  at ( 2.5,  1.5) {};
        
        \path[grayE,draw] (l) to (r);
        
        \draw[grayE] (l) -- (tl);
        \draw[grayE] (l) -- (bl);
        \draw[grayE] (r) -- (br);
        \draw[#1]   (r) -- (tr);
        
        \draw[#2]  (br) -- (bl);%
        \draw[#3]  (tl) -- (tr);%
        \draw[#4]  (tr) -- (bl);%
        \draw[#5]  (br) -- (tl);%
    \end{tikzpicture}
}
\newcommand{\FCFvsNNICdrawing}[1]{
    \begin{tikzpicture}[thick,scale=1.3]
        \node[draw,circle,fill=violet!20,inner sep=1pt, minimum size=9pt] (i) at (  2.81, -0.75 )  {}; %
        \node[draw,circle,               inner sep=1pt, minimum size=9pt] (j) at (  2.81,  0.75 )  {}; %
        \node[draw,circle,               inner sep=1pt, minimum size=9pt] (a) at (  1.5 , -1.5 )  {}; %
        \node[draw,circle,               inner sep=1pt, minimum size=9pt] (e) at (  1.5 ,  1.5 )  {}; %
        \node[draw,circle,fill=violet!30,inner sep=1pt, minimum size=9pt] (b) at ( -0.0 , -1.5 )  {}; %
        \node[draw,circle,fill=violet!30,inner sep=1pt, minimum size=9pt] (d) at ( -0.0 ,  1.5 )  {}; %
        \node[draw,circle,               inner sep=1pt, minimum size=9pt] (h) at ( -0.0 ,  0.0 )  {}; %
        \node[draw,circle,               inner sep=1pt, minimum size=9pt] (c) at ( -1.31, -0.75)  {}; %
        \node[draw,circle,               inner sep=1pt, minimum size=9pt] (f) at ( -1.31,  0.75)  {}; %
        
        \ifstrequal{#1}{true}{
            \node[minimum size=5pt,fill=white,circle,inner sep=0.2pt] at (0.85,0.0) {$\gamma_7$};
            \node[minimum size=5pt] at (3.5,0.0) {$\gamma_8$};
        }{}

        \foreach \u/\v in {d/f, e/d, j/e, i/j, a/i, b/a, c/b, h/c, h/b, d/h, f/h, f/c}{
            \miniKfive{(\v)}{(\u)}
        }

        \ifstrequal{#1}{true}{
            \path[draw, name path=IC, violet, bend right=90, looseness=1.0] (c) to node[pos=0.6,above]{1} (i);
            \path[draw, name path=JF, violet, bend right=90, looseness=1.0] (j) to node[pos=0.4,below]{1} (f);
            \path[draw, name path=BD, violet,bend left=110,looseness=2.5] (b) to node[left]{3} (d);
            \draw[violet] (a) -- (h) node[pos=0.7,below]{2};
            \draw[violet] (e) -- (h) node[pos=0.7,above]{2};
            \path[draw, name path=BJ, violet] (b) -- (j) node[pos=0.45,below] {4};
            \path[draw, name path=DI, violet] (d) -- (i) node[pos=0.45,above] {4};
        }{
            \path[draw, name path=IC, violet, bend right=90, looseness=1.0] (c) to (i);
            \path[draw, name path=JF, violet, bend right=90, looseness=1.0] (j) to (f);
            \path[draw, name path=BD, violet,bend left=110,looseness=2.5] (b) to (d);
            \path[draw, name path=BJ, violet] (b) -- (j);
            \path[draw, name path=DI, violet] (d) -- (i);
            \draw[violet] (a) -- (h);
            \draw[violet] (e) -- (h);
        }
        
        \path[name intersections={of=DI and BJ, by=BJDI}];
        \node[minimum size=5pt] (BJDInode) at (BJDI) {};
        \draw[<-,ultra thick] (BJDInode) -- +(0,0.33);
        \path[name intersections={of=IC and BD, by=ICBD}];
        \node[minimum size=5pt] (ICBDnode) at (ICBD) {};
        \draw[<-,ultra thick] (ICBDnode) -- +(0,-0.33);
    \end{tikzpicture}
}
\newcommand{\miniKfive}[2]{
    \node[draw,fill, circle, inner sep=0pt, minimum size=1.8pt,gray!50] (u) at ($#2!.6!#1!.27!-90:#1$) {};
    \node[draw,fill, circle, inner sep=0pt, minimum size=1.8pt,gray!50] (v) at ($#1!.5!#2!.34!90:#2$) {};
    \node[draw,fill, circle, inner sep=0pt, minimum size=1.8pt,gray!50] (w) at ($#1!.6!#2!.27!90:#2$) {};
    \draw[gray!50,relative,in=180,out= 3] #1 to (u);
    \draw[gray!50,relative,in=180,out=-3] #2 to (w);
    \draw[gray!50] (u) -- (v) -- (w) -- (u);
    \draw[gray!50,relative,in=-160,out=-3] #2 to (v);
    \draw[gray!50,relative,in= 160,out=-3] #2 to (u);
    \draw[gray!50,relative,in=-160,out= 3] #1 to (w);
    \draw[gray!50,relative,in= 160,out= 3] #1 to (v);
    \draw[red, ultra thick,relative,in=-180,out=-0] #1 to #2;
}
\newcommand{\crossbundle}[6]{
	\def\scl{0.2}
	\begin{tikzpicture}
		\pgfmathparse{#1+#4}
		\xdef\maxW{\pgfmathresult}
		\pgfmathparse{#2+#3}
		\xdef\maxH{\pgfmathresult}
		\node[draw, circle, inner sep=1pt,minimum size=3pt] (u) at ($\scl*(\maxW*0.5,   \maxH*1+1.5)$) {\footnotesize $s_1$};
		\node[draw, circle, inner sep=1pt,minimum size=3pt] (v) at ($\scl*(\maxW*0.5,   -1.5)$)        {\footnotesize $t_1$};
		\node[draw, circle, inner sep=1pt,minimum size=3pt] (w) at ($\scl*(-1.5,        \maxH*0.5)$)   {\footnotesize $s_2$};
		\node[draw, circle, inner sep=1pt,minimum size=3pt] (z) at ($\scl*(\maxW*1+1.5, \maxH*0.5)$)   {\footnotesize $t_2$};

		\pgfmathparse{\maxW/#4}
		\xdef\xRatio{\pgfmathresult}
		\pgfmathparse{1#6}
		\xdef\xSum{\pgfmathresult}
		\foreach \xCnt [evaluate=\xCnt as \x using \xCnt*1,parse=true] in {1,2,...,\maxW-1}{
			\pgfmathparse{\xSum >= \xRatio}
			\ifthenelse{\equal{\pgfmathresult}{1}}{
				\xdef\xDraw{1}
				\pgfmathparse{\xSum - \xRatio+1}
				\xdef\xSum{\pgfmathresult}
			}{
				\xdef\xDraw{0}
				\pgfmathparse{\xSum + 1}
				\xdef\xSum{\pgfmathresult}
			}
			\pgfmathparse{\maxH/#2}
			\xdef\yRatio{\pgfmathresult}
			\pgfmathparse{1#5}
			\xdef\ySum{\pgfmathresult}
			\foreach \yCnt [evaluate=\yCnt as \y using \yCnt*1,parse=true] in {1,2,...,\maxH-1}{
				\pgfmathparse{\ySum >= \yRatio}
				\ifthenelse{\equal{\pgfmathresult}{1}}{
					\pgfmathparse{\ySum - \yRatio}
					\xdef\ySum{\pgfmathresult}
				}{
					\ifthenelse{\equal{\xDraw}{1}}{
						\node[fill=black, circle, scale=1.0*\scl] (wz\xCnt\yCnt) at ($\scl*(\x,\y)$) {};
					}{}
					\ifthenelse{\equal{\xCnt}{1}}{
						\path[rounded corners=\scl*10,draw] (w) -- ($\scl*(0.25,    \y)$) -- ($\scl*(\x,\y)$);
					}{
						\path[rounded corners=\scl*10,draw] ($\scl*(\x-1, \y)$) -- ($\scl*(\x,\y)$);
					}
					\pgfmathparse{\xCnt == \maxW-1}
					\ifthenelse{\equal{\pgfmathresult}{1}}{
						\path[rounded corners=\scl*10,draw] (z) -- ($\scl*(\maxW-0.25,\y)$) -- ($\scl*(\x,\y)$);
					}{}
				}
				\pgfmathparse{\ySum + 1}
				\xdef\ySum{\pgfmathresult}
			}
		}
		\pgfmathparse{\maxH/#2}
		\xdef\xRatio{\pgfmathresult}
		\pgfmathparse{1#5}
		\xdef\xSum{\pgfmathresult}
		\foreach \xCnt [evaluate=\xCnt as \x using \xCnt*1,parse=true] in {1,2,...,\maxH-1}{
			\pgfmathparse{\xSum >= \xRatio}
			\ifthenelse{\equal{\pgfmathresult}{1}}{
				\xdef\xDraw{1}
				\pgfmathparse{\xSum - \xRatio+1}
				\xdef\xSum{\pgfmathresult}
			}{
				\xdef\xDraw{0}
				\pgfmathparse{\xSum + 1}
				\xdef\xSum{\pgfmathresult}
			}
			\pgfmathparse{\maxW/#4}
			\xdef\yRatio{\pgfmathresult}
			\pgfmathparse{1#6}
			\xdef\ySum{\pgfmathresult}
			\foreach \yCnt [evaluate=\yCnt as \y using \yCnt*1,parse=true] in {1,2,...,\maxW-1}{
				\pgfmathparse{\ySum >= \yRatio}
				\ifthenelse{\equal{\pgfmathresult}{1}}{
					\pgfmathparse{\ySum - \yRatio}
					\xdef\ySum{\pgfmathresult}
				}{
					\ifthenelse{\equal{\xDraw}{1}}{
						\node[fill=black, circle, scale=1.0*\scl] (uv\xCnt\yCnt) at ($\scl*(\y,\x)$) {};
					}{}
					\ifthenelse{\equal{\xCnt}{1}}{
						\path[rounded corners=\scl*10,draw] (v) -- ($\scl*(\y,    0.25)$) -- ($\scl*(\y,\x)$);
					}{
						\path[rounded corners=\scl*10,draw] ($\scl*(\y, \x-1)$) -- ($\scl*(\y,\x)$);
					}
					\pgfmathparse{\xCnt == \maxH-1}
					\ifthenelse{\equal{\pgfmathresult}{1}}{
						\path[rounded corners=\scl*10,draw] (u) -- ($\scl*(\y,\maxH*1-0.25)$) -- ($\scl*(\y,\x)$);
					}{}
				}
				\pgfmathparse{\ySum + 1}
				\xdef\ySum{\pgfmathresult}
			}
		}
	\end{tikzpicture}
}
\newcommand{\ksharp}[1][1]{#1-jumping} %
\newcommand{\ksharpness}{1-jumping} %
\newcommand{\redSymbol}{'}
\newcommand{\C}{\ensuremath{\mathfrak{c}}\xspace}
\newcommand{\Calt}{\ensuremath{\tilde{\mathfrak{c}}}\xspace}
\newcommand{\D}[1][]{\ensuremath{\mathcal{D}#1}\xspace}
\newcommand{\Dred}{\D[\redSymbol]}
\renewcommand{\O}{\ensuremath{\mathcal{O}}\xspace}
\newcommand{\Gbpnn}[1][\bp]{\ensuremath{\mathcal{G}_{#1}}\xspace}
\newcommand{\Gbp}[1][\bp]{\ensuremath{\mathcal{G}_{#1}(n)}\xspace}
\newcommand{\Gbpd}[1][\bp]{\ensuremath{\mathcal{G}^{\mathrm{dense}}_{#1}(n)}\xspace}
\newcommand{\Gbps}[1][\bp]{\ensuremath{\mathcal{G}^{\mathrm{sparse}}_{#1}(n)}\xspace}
\newcommand{\connections}{\ensuremath{C}\xspace}
\newcommand{\kecrParam}{k}
\newcommand{\conceptsym}{\mathfrak{B}}
\newcommand{\bp}[1][(k)]{\ensuremath{\conceptsym#1}\xspace}
\newcommand{\kpl}[1][k]{\ensuremath{#1\mathrm{\text{-}pl}}\xspace}
\newcommand{\kgappl}[1][k]{\ensuremath{#1\mathrm{\text{-}gap\text{-}pl}}\xspace}
\newcommand{\kfcf}[1][k]{\ensuremath{#1\mathrm{\text{-}fcf}}\xspace}
\newcommand{\slrac}{\ensuremath{\mathrm{sl\text{-}RAC}}\xspace}
\newcommand{\IC}{\ensuremath{\mathrm{\text{}IC}}\xspace}
\newcommand{\NIC}{\ensuremath{\mathrm{\text{}NIC}}\xspace}
\newcommand{\kvp}[1][k]{\ensuremath{#1\mathrm{\text{-}vp}}\xspace}
\newcommand{\NNIC}{\ensuremath{\mathrm{\text{}NNIC}}\xspace}
\newcommand{\skewk}[1][k]{\ensuremath{\mathrm{skew\text{-}}#1}\xspace}
\newcommand{\kapex}[1][k]{\ensuremath{#1\mathrm{\text{-}apex}}\xspace}
\newcommand{\ac}{\ensuremath{\mathrm{\text{}ac}}\xspace}
\newcommand{\fc}{\ensuremath{\mathrm{\text{}fc}}\xspace}
\newcommand{\wfp}{\ensuremath{\mathrm{\text{}wfp}}\xspace}
\newcommand{\sfp}{\ensuremath{\mathrm{\text{}sfp}}\xspace}
\newcommand{\kecr}[1][k]{\ensuremath{#1\mathrm{\text{-}ec}}\xspace}
\newcommand{\crossingnum}[1]{\ensuremath{\mathrm{cr}_{#1}}}
\newcommand{\crn}{\ensuremath{\crossingnum{}}\xspace}
\newcommand{\bcr}{\overline{\mathrm{cr}}}
\newcommand{\bcrbp}[1][\bp]{\ensuremath{\bcr_{#1}}}
\newcommand{\crbp}[1][\bp]{\ensuremath{\crossingnum{#1}}\xspace}
\newcommand{\crkpl}[1][k]{\ensuremath{\crossingnum{\kpl[#1]}}\xspace}
\newcommand{\crkvp}[1][k]{\ensuremath{\crossingnum{\kvp[#1]}}\xspace}
\newcommand{\crsfp}[1][k]{\ensuremath{\crossingnum{\sfp}}\xspace}
\newcommand{\crwfp}[1][k]{\ensuremath{\crossingnum{\wfp}}\xspace}
\newcommand{\crfc}[1][k]{\ensuremath{\crossingnum{\fc}}\xspace}
\newcommand{\crac}[1][k]{\ensuremath{\crossingnum{\ac}}\xspace}
\newcommand{\crkecr}[1][k]{\ensuremath{\crossingnum{\kecr[#1]}}\xspace}
\newcommand{\crkfcf}[1][k]{\ensuremath{\crossingnum{\kfcf[#1]}}\xspace}
\newcommand{\crIC}{\ensuremath{\crossingnum{\IC}}\xspace}
\newcommand{\crNIC}{\ensuremath{\crossingnum{\NIC}}\xspace}
\newcommand{\bcrr}[2][(n)]{\ensuremath{\overline{\varrho}_{#2}#1}\xspace}
\newcommand{\bcrrbp}[1][(n)]{\ensuremath{\bcrr[#1]{\bp}}\xspace}
\newcommand{\crr}[2][(n)]{\ensuremath{\varrho_{#2}#1}\xspace}
\newcommand{\crrbp}[1][(n)]{\ensuremath{\crr[#1]{\bp}}\xspace}
\newcommand{\width}[1]{\ensuremath{w(#1)}\xspace}
\newcommand{\widthRed}[1]{\ensuremath{w'(#1)}\xspace}
\newcommand{\widthRkfcfD}{\ensuremath{2k}\xspace}
\newcommand{\Gl}[1][\ell]{\ensuremath{G_{#1}}\xspace}
\newcommand{\Kuras}[1][]{\ensuremath{\mathcal{K}_{#1}}\xspace}
\newcommand{\KurasRed}{\ensuremath{\mathcal{K}_{\Dred}}\xspace}
\newcommand{\KS}[1][s]{Kuratowski subdivision#1\xspace}
\newcommand{\otks}[1][]{of the \Kuras[#1]-subdivisions\xspace}
\newcommand{\otksRed}{of the \KurasRed-subdivisions\xspace}
\newcommand{\col}{\mathit{col}}
\newcommand{\cols}{\mathfrak{col}}
\newcommand{\Rs}[1][\bp]{\crr[^{\mathrm{sparse}}(n)]{#1}}
\newcommand{\bundle}[3][]{\ensuremath{\ifthenelse{\equal{#3}{}}{(#2,2)^{#1}}{(#2,#3)^{#1}}}\text{-bundle}\xspace}
\begin{document}
\nolinenumbers
\makeatletter
\long\def\@secondoffive#1#2#3#4#5{#2}
\long\def\@fifthoffive#1#2#3#4#5{#5}
\makeatother
\maketitle

\begin{abstract}
Beyond planarity concepts (prominent examples include $k$-planarity or fan-planarity) apply certain restrictions on the allowed patterns of crossings in drawings. It is natural to ask, how much the number of crossings may increase over the traditional (unrestricted) crossing number. Previous approaches to bound such ratios, e.g.~\cite{CKMV19,BPS21}, require very specialized constructions and arguments for each considered beyond planarity concept, and mostly only yield asymptotically non-tight bounds.
We propose a very general proof framework that allows us
to obtain asymptotically tight bounds, and
where the concept-specific parts of the proof typically boil down to a couple of lines.
We show the strength of our approach by giving improved or first bounds for several beyond planarity concepts.
\end{abstract}

\section{Introduction}

Throughout this paper, we only consider \emph{simple} graphs, i.e., no self-loops or multi-edges. Given a graph $G$, let $n$ and $m$ denote its number of vertices $V(G)$ and edges $E(G)$, respectively. A drawing $\mathcal{D}$ of $G$ is a mapping of $V(G)$ to distinct points in $\mathbb{R}^2$, and $E(G)$ to curves connecting the respective end points. Such a curve must not contain any vertex point other than its end points. When two edge curves intersect on an internal point $x$, this is a \emph{crossing} of these two edges at $x$. The \emph{crossing number} $\crn(G)$ is the smallest number of crossings over all drawings. To achieve it, we can safely assume \emph{simple drawings}, i.e., no three edges cross at a common point, no edge crosses itself or adjacent edges, and no pair of edges crosses multiple times. A graph is \emph{planar} if it allows a crossing-free drawing. In the last decades, several \emph{beyond planarity} concepts have been established that generalize planar graphs in that certain special crossing patters are allowed or forbidden, see, e.g.,~\cite{SurveyBeyondplanarity,beyondBook} for overviews. A prominent example is $k$-planarity, where any edge may be crossed at most $k$ times.

We are interested in the minimum number of crossings within any beyond planarity restricted drawing. Intuitively speaking, a beyond planarity concept is a way to (try to) formalize our understanding of what aspects constitute a readable drawing. Still, within these restrictions it is natural to ask for a drawing with the least number of crossings. Alternatively, we can ask how much it ``costs'' in terms of crossings to follow a certain drawing paradigm.

Formally, let \bp denote any beyond planarity concept, where we omit the parameter~$k$ for parameterless beyond planarity concepts. Then, $\Gbpnn$ ($\Gbp$) denotes the set of graphs (on $n$ vertices, respectively) allowing a \bp-drawing.
The \emph{\bp-crossing number} $\crbp(G)$ for a graph $G\in\Gbpnn$ is the least number of crossings over all \bp-drawings of $G$. The \emph{crossing ratio} $\crrbp$ of \bp is the largest attainable ratio between the \bp-crossing number and the (normal) crossing number, over all $n$-vertex \bp-graphs:
\[\crrbp \coloneqq \sup_{G \in \Gbp} \frac{\crbp(G)}{\crn(G)}.\]

Crossing ratios of beyond planarity concepts have for the first time been explicitly considered in~\cite{CKMV19a,CKMV19}, where linear lower bounds $\Omega(n)$ where established for 1-planar, $k$-quasi-planar, and (weakly) fan-planar graphs. In all but the first case, the upper bounds where (at least) quadratic in $n$, and it was conjectured in~\cite{CKMV19} that the real lower bounds should rather be $\Omega(n^2)$. Later,~\cite{BPS21} generalized the results to $k$-planarity and seven more beyond planarity concepts. However, none of the provided bounds were tight except for planarly-connected and straight-line RAC. In both publications, each planarity measure needs a very specific construction and intricate specialized arguments to prove the lower bounds (which tend to always be the most complicated part in crossing number proofs).
Furthermore, both only consider simple drawings, when establishing upper bounds on the crossing ratio. Simple and non-simple crossing ratios may differ, if there are \bp-graphs that only allow non-simple \bp-drawings. Our bounds below work with and without the simplicity assumption.

\subparagraph*{Our contribution.}
In this paper, we propose a general framework that, when applicable, simplifies crossing ratio proofs down to only a couple of \bp-specific lines (10--20 lines for the lower bound).
In contrast to previous schemes, this framework is able to prove asymptotically tight bounds for all considered beyond planarity concepts, and our proofs further show that the crossing ratio is already achieved between any two subsequent parameterizations $k$ and $k+1$, in all considered parametrized concepts.
\Cref{tab:res} summarizes our results. A key idea lies in the simplification of the necessary lower bound arguments, by turning them into counting arguments over a set of (sufficiently) disjoint Kuratowski-subdivisions (see \Cref{sec:prelim}). In particular, this allows us to avoid any intricate discussions about alternative drawing possibilities, the simplicity of drawings, etc.
We present the proof framework in \Cref{sec:framework}, and showcase its versatility and strength for various beyond planarity concepts in \Cref{sec:use}.

\begin{table}
\newcommand{\Ryes}{\checkmark}
\newcommand{\Rno}{$\times$}
\newcommand{\markA}{$^*$}
\newcommand{\markB}{}
\caption{Overview on our results on the crossing ratios of beyond planarity concepts for $n$-vertex graphs.
Formally, \cite{CKMV19} considers weakly fan-planar, but their proof works for all 4 variants.
The column ``$\overline{\varrho}$?'' denotes whether we also show the (same) rectilinear crossing ratios in \Cref{sec:rectilinear}.
The $^*$-marked previous upper bounds need no explicit discussion of non-simple drawings by the nature of the beyond planarity concept.
All our bounds, except the upper bounds for adjacency-crossing and fan-crossing, work with and without the simplicity assumption.}
\label{tab:res}
\centering
\setlength{\aboverulesep}{0pt}
\setlength{\belowrulesep}{0pt}
\setlength{\extrarowheight}{1.5pt}
\begin{tabular}{l@{}c@{}c>{\columncolor{black!5}}crc}
\toprule
    beyond planarity concept        & previous best                             &                & our results                & Section          & $\overline{\varrho}$? \\
\hline
$k$-planar                  & $\Omega(n/k)\cap\O(k\sqrt{k}n)$ \markA    & \cite{BPS21}   & $\Theta(n)$                & \ref{sec:kpl}    & \Ryes                 \\
$k$-vertex-planar           & ---                                       &                & $\Theta(n)$                & \ref{sec:kvp}    & \Ryes                 \\
IC-planar                   & ---                                       &                & $\Theta(n)$                & \ref{sec:ic}     & \Ryes                 \\
NIC-planar                  & ---                                       &                & $\Theta(n)$                & \ref{sec:nic}    & \Ryes                 \\
NNIC-planar                 & ---                                       &                & $\Theta(n^2)$              & \ref{sec:nnic}   & \Ryes                 \\
$k$-fan-crossing-free       & $\Omega(n^2/k^3)\cap\O(k^2n^2)$ \markA    & \cite{BPS21}   & $\Theta(n^2/k)$            & \ref{sec:nnic}   & \Ryes                 \\
straight-line RAC           & $\Theta(n^2)$ \markA                      & \cite{BPS21}   & \emph{(direct corollary)}  & \ref{sec:nnic}   & ---                   \\
$\begin{array}{@{}l} 
\text{adjacency-crossing}  \\
\text{fan-crossing}        \\
\text{weakly fan-planar}   \\
\text{strongly fan-planar} \\ \end{array}\hspace{-1em}
\left.\begin{array}{@{}l}  \\                                          \\               \\                           \\ \end{array}\right\rbrace$
                            & $\Omega(n)\cap\O(n^2)$ \markB             & \cite{CKMV19}  & $\Theta(n^2)$              & \ref{sec:fp}     & \Rno                  \\
$k$-edge-crossing           & ---                                       &                & $\Theta(k)$                & \ref{sec:ecr}    & \Ryes                 \\
$k$-gap-planar              & $\Omega(n/k^3)\cap\O(k\sqrt{k}n)$ \markA  & \cite{BPS21}   & $\Theta(n/k)$              & \ref{sec:kgappl} & \Ryes                 \\
$k$-apex-planar             & $\Omega(n/k)\cap\O(k^2n^2)$ \markB    & \cite{BPS21}   & $\Theta(n^2/k)$            & \ref{sec:kapex}  & \Ryes                 \\
skewness-$k$                & $\Omega(n/k)\cap\O(kn)$ \markB        & \cite{BPS21}   & $\Theta(n)$                & \ref{sec:kapex}  & \Ryes                 \\
\bottomrule
\end{tabular}
\end{table}

\section{Preliminaries}\label{sec:prelim}
\subparagraph*{Kuratowski subdivisions.} A graph $G$ is planar if and only if it does not contain a $K_5$ or $K_{3,3}$ subdivision (summarily called \emph{Kuratowski subdivisions})~\cite{kuratowski1930probleme}. Let $K$ be a $K_{3,3}$ subdivision in $G$. The six degree-3 vertices of $K$ are the \emph{Kuratowski nodes} of $K$, and the paths between them (not containing other Kuratowski nodes) its \emph{Kuratowski paths}; by replacing each Kuratowski path by a single edge, one obtains $K_{3,3}$. Two Kuratowski paths are \emph{adjacent}, if they share a common Kuratowski node.
Clearly, we have $\crn(G)\geq 1$, and any drawing \D of $G$ (even a non-simple one) contains at least one crossing $x$ between two edges from \emph{non-adjacent} Kuratowski paths. We say that $x$ \emph{covers}~$K$.
Every Kuratowski subdivision in $G$ has to be covered in \D; a single crossing may cover several such subdivisions.

\subparagraph*{Upper bounds.}
The \emph{crossing lemma}~\cite{CrossingLemmaPaper} states that any graph $G$ with $n$ vertices and $m>4n$ edges has $\crn(G)\in\Omega(\frac{m^3}{n^2})$.
We may split $\Gbp$ into graphs $\Gbpd\coloneqq\{G\in\Gbp \mid m > 4n\}$ that are sufficiently dense for the crossing lemma, and $\Gbps\coloneqq\Gbp\setminus\Gbpd$.
It will later turn out that we attain the largest ratios on graphs of the latter subset. Let $\Rs\coloneqq\sup_{G \in \Gbps}\frac{\crbp(G)}{\crn(G)}$.
    Then 
    \[
\crrbp 
= \max\left\{ \Rs \ ,\  \sup_{G \in \Gbpd} \frac{\crbp(G)}{\crn(G)} \right\},\]
where we want to use the crossing lemma to bound the second term in the maximum.
\begin{observation}\label{thm:ub}
    We have:
    \begin{enumerate}
        \item[(a)] If $\crbp(G)\in \O(m^2)$, then $\crrbp \in {\color{gray}\O(\Rs + m^2\cdot\frac{n^2}{m^3}) \subseteq }\; \O(\Rs+n).$
        \item[(b)] If $\crbp(G)\in \O(mk)$, then
  $\crrbp \in {\color{gray}\O(\Rs + mk\cdot\frac{n^2}{m^3}) \subseteq }\; \O(\Rs+k).$
    \end{enumerate}
\end{observation}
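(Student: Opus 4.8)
The plan is to treat separately the two regimes set up just before the statement: sparse graphs $G\in\Gbps$ (those with $m\le 4n$) are simply absorbed into the term $\Rs$, while dense graphs $G\in\Gbpd$ are controlled by the crossing lemma. Since $\Gbp$ is the disjoint union $\Gbps\cup\Gbpd$, the supremum defining $\crrbp$ splits as $\crrbp=\max\{\Rs,\ \sup_{G\in\Gbpd}\crbp(G)/\crn(G)\}$, exactly as already observed, so everything reduces to bounding the second term. For any $G\in\Gbpd$ we have $m>4n$, hence the crossing lemma applies and provides a universal constant $c>0$ with $\crn(G)\ge c\,m^3/n^2$.

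For part~(a) I would substitute the hypothesis $\crbp(G)\le C m^2$ into this lower bound, obtaining $\crbp(G)/\crn(G)\le (C/c)\cdot n^2/m$; since $m>4n$ the right-hand side is below $(C/4c)\cdot n$, so $\sup_{G\in\Gbpd}\crbp(G)/\crn(G)\in\O(n)$ and therefore $\crrbp\in\O(\Rs+n)$. Part~(b) has the identical shape: from $\crbp(G)\le C m k$ together with $\crn(G)\ge c\,m^3/n^2$ one gets $\crbp(G)/\crn(G)\le (C/c)\cdot k n^2/m^2<(C/16c)\cdot k\in\O(k)$, again using $m>4n$, which yields $\crrbp\in\O(\Rs+k)$. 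The grey intermediate expressions displayed in the statement are precisely $m^2\cdot n^2/m^3$ and $m k\cdot n^2/m^3$, i.e.\ these ratios before the final simplification via $m>4n$.

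There is no genuine obstacle here. The only subtlety worth flagging is that the crossing lemma requires exactly the density threshold $m>4n$ that defines $\Gbpd$ — which is the whole point of partitioning $\Gbp$ into $\Gbps$ and $\Gbpd$ — and that the asymptotic hypotheses on $\crbp(G)$ are to be read as holding uniformly over the class under consideration (which is the case in every application in \Cref{sec:use}). One should also tacitly restrict the suprema to non-planar graphs, so that $\crn(G)>0$ and the ratios are well defined.
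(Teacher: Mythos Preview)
Your argument is correct and is precisely the approach the paper intends: the observation is stated without a separate proof because it follows immediately from the split $\Gbp=\Gbps\cup\Gbpd$ and the crossing lemma applied to the dense part, exactly as you spell out. The grey intermediate terms in the statement are indeed just the raw ratios before simplifying via $m>4n$.
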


All simple drawings satisfy the prerequisite of \cref{thm:ub}(a).
Furthermore, in the context of connected sparse graphs in $\Gbps$ (and thus for $\Rs$), we know $m\in\Theta(n)$.

\subparagraph*{\ksharpness.}
The crossing ratio $\crrbp$ relates the \bp-restricted crossing number to the normal one. It is natural to ask, how small a parameter $\Delta\geq1$ can be chosen such that $\crr[^{\Delta}(n)]{\bp}\coloneqq
\sup_{G\in \Gbp} \frac{\crbp(G)}{\crbp[{\bp[(k+\Delta)]}](G)}$ attains the same asymptotic bound.
We say $\crrbp$ is \emph{\ksharp[$\Delta$]} if $\crr[^{\Delta}(n)]{\bp}\in\Omega(\crrbp)$.
For all concepts in \Cref{sec:use} except $k$-gap-planarity (\cref{sec:kgappl}), we will observe \emph{\ksharpness} as our upper bounds on the normal crossing numbers are already attained by $\bp[(k+1)]$-drawings.

\section{Framework for Proving Lower Bounds on Crossing Ratios}\label{sec:framework}
\begin{figure}
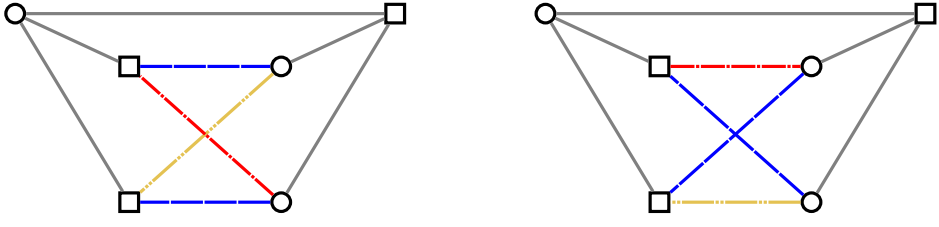

	\centering
	\begin{subfigure}[b]{0.49\textwidth}
		\centering
        \frameDrawing{blueE}{blueE}{yellowE}{redE}{}{}
	\end{subfigure}
	\begin{subfigure}[b]{0.49\textwidth}
		\centering
        \frameDrawing{yellowE}{redE}{blueE}{blueE}{swapped}{}
	\end{subfigure}
	\caption{Red-yellow \textbf{\textsf{(left)}} and blue-blue \textbf{\textsf{(right)}} drawing of $F$ used for standard drawings of~$\Gl$.}
	\label{fig:base}
\end{figure}

Let $\bp$ be a beyond planarity concept, and $\psi\colon \mathbb{N}^2 \to \mathbb{N}$ a function dependent on $k$ and the graph's number of vertices $n$.
To show that the crossing ratio $\varrho_{\bp}\geq \psi(k,n)$, we construct an infinite family $\{\Gl\}_{\ell\geq\ell_0}$, for some $\ell_0\in\mathbb{N}$, such that for each \Gl we have
\[\frac{\crbp[\bp](\Gl)}{\crn(\Gl)}\geq\psi(k,|V(\Gl)|).\]

To construct this family, we always start with a \emph{frame} $F=(N,C,\cols)$, an edge-colored $K_{3,3}$.
To avoid ambiguity with the graphs constructed based on $F$, we call elements of $N\coloneqq\{v_1,v_2,v_3,w_1,w_2,w_3\}$ \emph{nodes} and elements of $C\coloneqq \{ \{v_i,w_j\} \mid i,j\in\{1,2,3\}\}$ \emph{connections}.
Each connection $\mathfrak{c}\in C$ has a color $\cols(\mathfrak{c})$ that is either \emph{red}, \emph{blue}, \emph{yellow}, or \emph{gray}.

Based on this frame $F$ and some parameter $\ell$, we construct \emph{framework graphs} \Gl.
To this end, we define for each of the four colors $\col$ a graph $S_{\col}$ (which we will call \emph{con-graph} in the following) with two pole vertices $s,t$.
The size of $S_{\col}$ often depends on $\ell$, $k$, or both.
Intuitively, we may talk about, e.g., a \emph{red} con-graph if it is associated with the color red.
We then replace each connection $\{a,b\}\in C$ with a new copy $S_{\{a,b\}}$ of its color's con-graph $S_{\cols(\{a,b\})}$, identifying this con-graphs' poles $s$ and $t$ with $a$ and $b$, respectively.

In most of our proofs, the con-graphs are rather simple: an \emph{$(i,j)$-bundle} is a con-graph consisting of $i$ internally-vertex-disjoint $s$-$t$-paths, each of length $j$ (i.e., $j$ edges in each path); an \emph{$(i,j)^+$-bundle} furthermore contains the edge $\{s,t\}$.
In all our constructions, the yellow con-graph is the single edge $\{s,t\}$, which can equivalently be seen as a $(1,1)$-bundle.

\subparagraph*{Standard Drawings of \boldmath$\Gl$.}
In our framework, we need to show (i) that $\Gl$ is indeed in \Gbp, (ii) an upper bound on $\crn(\Gl)$, and (iii) a lower bound on $\crbp(\Gl)$.
We show (i) and (ii) by constructing \emph{standard drawings} of $\Gl$:
These drawings are constructed by first drawing the frame $F$; in particular this maps connections to curves in the plane.
We obtain a drawing of $\Gl$ by drawing each con-graph in a small neighborhood of its corresponding connection's curve (instead of drawing the connection itself).
Whenever possible, we draw con-graphs planarly with their poles on their local outer face; otherwise we explicitly specify their drawing.
Apart from this, we only need to specify how to draw crossing con-graphs.

Unless specified otherwise, our frame $F$ contains a 4-cycle of a blue, a yellow, another blue, and a red connection in this order; its remaining five connections are gray.
Thus, to cover the $K_{3,3}$, there has to be a red-yellow crossing, a blue-blue crossing, or a crossing involving a gray connection.
We can classify drawings of $F$ with a single crossing by the colors of the crossing connections; we are in particular interested in a red-yellow and a blue-blue drawing of $F$, cf.\ \cref{fig:base}.
Such drawings yield corresponding standard drawings of $\Gl$.
We use a red-yellow standard drawing to show an upper bound on $\crn(\Gl)$. As the yellow con-graph is only a single edge, the number of crossings is kept low.
However, the red con-graph is chosen such that this drawing is not a \bp-drawing.
We use a blue-blue standard drawing to show that $\Gl$ is in \Gbp. However, the blue con-graphs are chosen such that many crossings arise.
Lastly, the gray con-graphs are chosen such that a standard drawing with a crossed gray connection is not a \bp-drawing and also requires many crossings.

\subparagraph*{Lower bounding \boldmath$\crbp(\Gl)$.}
To prove a lower bound on $\crbp(\Gl)$, we aim to show that the number of crossings in the drawing establishing $\Gl\in\Gbp$ is asymptotically optimal.
Until now, we only considered standard drawings of $\Gl$, where each con-graph was treated as a unit.
For the proof of the lower bound on $\crbp(\Gl)$, we consider an \emph{arbitrary} \bp-drawing \D of \Gl. In particular, therein con-graphs may, for example, partially cross themselves or one another.
For every connection $\C \in \connections$, let $P_{\C}$ be a (not necessarily maximal) set of $s$-$t$-paths in the con-graph $S_\C$.
If a con-graph $S_\C$ is a bundle, then we use the set of all its edge-disjoint paths as $P_\C$; otherwise we will define it specifically. 
Let $\width{\C} \coloneqq \width{S_\C} \coloneqq |P_\C|$ be the \emph{width} of $S_\C$; further, let the \emph{height} $h(\C) \coloneqq h(S_\C) \coloneqq \max_{p \in P_\C} |p|$ be the number of edges in the longest path in $P_\C$.
For an edge $e \in S_\C$, let $P_{\C}[e]$ be those paths in $P_{\C}$ that contain $e$.
We define $\Kuras\coloneqq\Kuras({G_\ell,N)}$ as the set of all $K_{3,3}$ subdivisions of $\Gl$ that have $N$ as their Kuratowski nodes and where each Kuratowski path is from $\bigcup_{\C \in \connections} P_{\C}$.
There is a mapping between the Kuratowski paths of $\Kuras$ and the connections $\connections$ of $F$, such that each Kuratowski path consists of an $s$-$t$-path in the respective connection's con-graph.

Consider some \KS[] $K \in \Kuras$ that is covered in \D by a crossing $x$ between $e_1,e_2 \in E(G)$.
For $i\in\{1,2\}$, let $\C_i \in \connections$ be the connection with $e_i \in S_{\C_i}$. The connections $\C_1$ and $\C_2$ are non-adjacent in $F$.
The crossing $x$ may cover all \KS that have a Kuratowski path from $P_{\C_1}[e_1]$ and one from $P_{\C_2}[e_2]$, but no other of~\Kuras.
By definition of \Kuras, the crossing $x$ covers at most $\frac{|P_{\C_1}[e_1]| \cdot |P_{\C_2}[e_2]|}{|P_{\C_1}| \cdot |P_{\C_2}|}$ of the \KS in \Kuras.
In most cases (in particular whenever $S_{\C_1},S_{\C_2}$ are both bundles) we have $|P_{\C_1}[e_1]|=|P_{\C_2}[e_2]|=1$, and then the crossing $x$ covers $\frac{1}{|P_{\C_1}| \cdot |P_{\C_2}|}$ \otks.

    In our proofs, we sometimes consider a specific subdrawing of \D.
    Let $\C\in C$ be a connection with con-graph $S_\C$ and let $R$ be an $s$-$t$-path in $S_\C$.
    Subdrawing \Dred is the \emph{$R$-restriction} of \D where all of 
    $S_\C$ except for $R$ (a bundle of width~$1$) is removed. Within \Dred, $R$ is considered to be the con-graph of \C, and we are interested in the minimum number of crossings covering the Kuratowski subdivisions $\KurasRed\subseteq\Kuras$ in \Dred. We may consider an $(R_1,R_2)$-restriction, where we perform this operation twice, for distinct connections. 

\section{Proving Crossing Ratios}\label{sec:use}
We now use our framework to prove the crossing ratio for several beyond planarity concepts. 
Generally, whenever we define such a concept \bp, we do so via \bp-drawings. 
This is \emph{always} to be understood to implicitly define the corresponding graph class as those graphs that allow such a \bp-drawing: e.g., a graph is $k$-planar if it allows a $k$-planar drawing. See \Cref{tab:res} for previously established crossing ratios (if any).

\subsection{\texorpdfstring{$\boldmath\text{}k$}{k}-Planar\texorpdfstring{: $\boldmath\crr{\kpl}$}{}}
\label{sec:kpl}
In a $k$-planar ($k\geq 1$, shorthand ``$k$-pl'') drawing, no edge is crossed more than $k$ times~\cite{kplanarFirstIntroducedPaper,PT97}.

\begin{figure}
	\centering
	\input{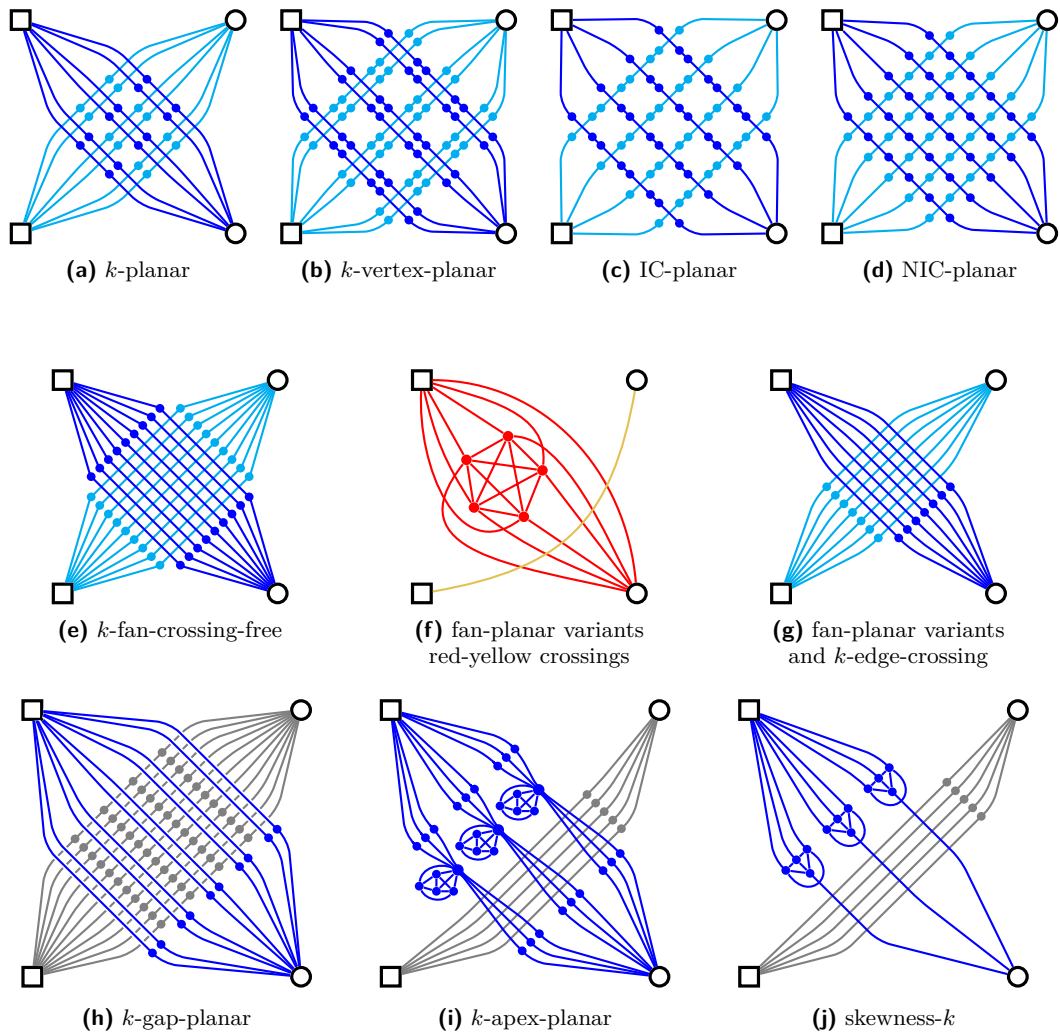}
 	\caption{Visualizations of standard drawings of crossing con-graphs, for different beyond planarity concepts. The con-graphs are colored consistently to their color type (blue, gray, red, yellow); we use two shades of blue when showing crossing blue con-graphs. 
 The depicted values for $\ell$ are chosen for easy comprehension of the drawing paradigm (e.g., \textbf{\textsf{(a)}} shows $\ell=3$, $k=2$); they are smaller than required in the proofs.}
	\label{fig:crossings}
\end{figure}

\begin{theorem}\label{thm:kpl:lb}
    For every $\ell \geq 41$, there exists a $k$-planar graph \Gl with $n\in \Theta(\ell^2 k)$ vertices such that $\crkpl(\Gl) \in \Omega(nk)$ and $\crn(\Gl) \in \O(k)$. Thus $\crr{\kpl} \in \Omega(n)$.
\end{theorem}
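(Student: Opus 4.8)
The plan is to instantiate the framework of \Cref{sec:framework} with the standard frame $F$ and the following con-graphs: the yellow con-graph is the single edge $\{s,t\}$; the red con-graph is a $(2k,2)$-bundle; each of the two blue con-graphs is an $(\ell k,\ell)$-bundle; and each of the five gray con-graphs is an $(\ell k,2)$-bundle. Counting internal vertices, the resulting graph $\Gl$ has $n\in\Theta(\ell^2 k)$ vertices --- dominated by the two blue con-graphs, which contribute $\ell k(\ell-1)$ internal vertices each --- and $\Gl$ is simple since every con-graph with an internal vertex joins a distinct pair of frame nodes.

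For the upper bound $\crn(\Gl)\in\O(k)$, I would use the red--yellow standard drawing (\Cref{fig:base}, left): drawing the frame with a single red--yellow crossing, all con-graphs except the red one can be drawn planarly and crossing-free, and the $2k$ paths of the red bundle each cross the single yellow edge exactly once, so $\crn(\Gl)\le 2k$. To prove that $\Gl$ is $k$-planar, I would use the blue--blue standard drawing (\Cref{fig:base}, right): its only crossings lie in the region where the two blue $(\ell k,\ell)$-bundles meet, and I would route them as a ``brick wall'' --- partition that region into an $\ell\times\ell$ array of blocks, with block $(a,b)$ carrying the $k^2$ crossings between the $a$-th group of $k$ paths of one bundle and the $b$-th group of $k$ paths of the other, so that the $b$-th edge of each path of the first bundle (resp.\ the $a$-th edge of each path of the second) picks up exactly $k$ crossings. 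This is a $k$-planar drawing with $\ell^2 k^2$ crossings; in particular $\Gl$ is $k$-planar and $\crkpl(\Gl)\le\ell^2 k^2$.

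The heart of the proof is the lower bound $\crkpl(\Gl)\in\Omega(\ell^2 k^2)=\Omega(nk)$, obtained from the framework's counting scheme. Fix any $k$-planar drawing $\D$ of $\Gl$. By the framework, all $T$ subdivisions of $\Kuras$ (one for each choice of an $s$-$t$-path in each con-graph, so that $T$ is the product of the con-graph widths) must be covered, while a crossing between edges of two non-adjacent connections whose con-graphs have widths $w_1,w_2$ covers at most $T/(w_1w_2)$ of them. I would classify the crossings of $\D$ by the colours of the incident connections. Since the yellow con-graph is a single edge, $k$-planarity lets at most $k$ crossings be incident to it, and each covers at most $T/(2k)$ subdivisions (yellow is non-adjacent only to connections of width at least $2k$); hence all yellow-incident crossings together cover at most $T/2$. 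The red con-graph has only $4k$ edges, so at most $4k^2$ crossings are red-incident, and the red--gray crossings among them cover at most $4k^2\cdot T/(2k\cdot\ell k)=o(T)$. Similarly each gray con-graph has $O(\ell k)$ edges and hence $O(\ell k^2)$ incident crossings, and every gray--blue or gray--gray crossing covers at most $T/(\ell k)^2$, so all of these together cover $O(T/\ell)=o(T)$. Therefore every crossing that is not blue--blue accounts for, in total, at most $(1/2+o(1))T$ of $\Kuras$, which forces $\D$ to contain at least $(1/2-o(1))\ell^2 k^2\in\Omega(\ell^2 k^2)$ blue--blue crossings; the hypothesis $\ell\ge 41$ merely ensures the error terms are small enough for this count to be positive. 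Thus $\crkpl(\Gl)\ge\Omega(\ell^2 k^2)=\Omega(nk)$, and consequently $\crr{\kpl}\ge\crkpl(\Gl)/\crn(\Gl)\ge\Omega(\ell^2 k^2)/\O(k)=\Omega(\ell^2 k)=\Omega(n)$.

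I expect the main obstacle to be the calibration of the red width. It is essential that the red con-graph has width \emph{linear} in $k$ with a constant strictly larger than $1$ (here $2k$), and not merely $k+1$: with width $k+1$, the at most $k$ crossings allowed on the single yellow edge could cover a $k/(k+1)$-fraction of $\Kuras$, leaving only a $\Theta(1/k)$-fraction to be paid for by the expensive blue--blue crossings and yielding merely $\Omega(n/k)$, exactly the previously known bound. Forcing the yellow--red crossings to account for at most a constant fraction bounded away from $1$ is what upgrades the result to the asymptotically tight $\Omega(n)$; making this quantitative, together with checking that all gray-incident crossings are genuinely negligible via the per-edge crossing cap of $k$-planarity, is where the real care is needed, whereas the combinatorial counting itself is handled by the framework of \Cref{sec:framework}.
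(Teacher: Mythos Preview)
Your argument is correct and yields the stated bound, but it diverges from the paper's proof in one structural respect. You take the red con-graph to be a $(2k,2)$-bundle and bound the yellow-incident coverage directly by $k\cdot T/(2k)=T/2$. The paper instead uses a $(k+1,2)$-bundle for red and invokes the \emph{$R$-restriction}: since the single yellow edge has at most $k$ crossings in a $k$-planar drawing, some red $s$-$t$-path $R$ is not crossed by it; restricting the red con-graph to $R$ (width~$1$) produces a subdrawing \Dred with \emph{no} red--yellow crossings at all. After this restriction, every covering crossing that is not blue--blue involves a gray edge, and a uniform $2/\ell$ bound per (gray, non-adjacent) pair gives $5\cdot 4\cdot 2/\ell\le 40/41$---which is precisely where the constant $41$ comes from.

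Your final paragraph therefore rests on a misconception: width $k+1$ does \emph{not} collapse to the old $\Omega(n/k)$ bound once one uses the restriction. You avoid the restriction by paying with a wider red bundle; the paper avoids a wider red bundle by paying with the restriction.

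What each buys: your route is a bit more elementary (no auxiliary subdrawing), and the remaining accounting ($2T/\ell$ for red--gray, $10T/\ell$ for gray--gray/gray--blue, so blue--blue must cover at least $(1/2-12/\ell)T$) is clean and works already for $\ell\ge 25$. The paper's route, however, has a side benefit you lose: with only $k+1$ red paths, the red--yellow standard drawing is itself $(k+1)$-planar, which immediately gives the \ksharp{} property stated in \Cref{thm:kpl:crr}. With your $2k$ paths the yellow edge is crossed $2k$ times, so that corollary would need a separate argument.
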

\begin{proof}
    Let \Gl be a framework graph, where the red con-graph is a $\bundle{k+1}{}$, the blue con-graphs are $\bundle{\ell k}{\ell}$s, and the gray con-graphs are $\bundle{\ell k}{2}$s, so $n \in \Theta(\ell^2 k)$.
    A red-yellow standard drawing shows $\crn(\Gl) \in \O(k)$.
    A blue-blue standard drawing, see \cref{fig:kpl_blue_blue_cr} for how the blue con-graphs cross, shows that \Gl is $k$-planar.

    Let \D be a $k$-planar drawing of \Gl. At least one $s$-$t$-path $R$ in the red con-graph is not crossed by the unique yellow edge, as the latter has at most $k$ crossings.
    From now on, we thus only consider the $R$-restriction \Dred of \D, which does not have any red-yellow crossings.
    Let $\C, \Calt$ be two connections.
    Any crossing between edges from $\C$ and $\Calt$ can cover at most $\frac{1}{\width{\C} \cdot \width{\Calt}}$ \otksRed (recall that the width of the red con-graph here in \Dred is $1$).
    By $k$-planarity, there can be at most $k \cdot \min\{|E(\C)|, |E(\Calt)|\}$ crossings between edges from $\C$ and $\Calt$.
    As all con-graphs are bundles, the number of edges in each con-graph is the product of its height and width.
    Thus, such crossings can overall cover at most $k \cdot \frac{\min\{\width{\C} \cdot h(\C), \width{\Calt} \cdot h(\Calt)\}}{\width{\C} \cdot \width{\Calt}}$ \otksRed.
    Let $\C$ be a gray connection, then crossings between $\C$ and another connection $\Calt$ can cover at most
    \[k \cdot \frac{\min\{\ell k \cdot 2, \width{\Calt} \cdot h(\Calt)\}}{\ell k \cdot \width{\Calt}} =
    \left.
    \begin{cases}
        \frac{1    \cdot 1}{\ell \cdot 1} & \text{if $\Calt$ is yellow}\\
        \frac{1   \cdot 2}{\ell \cdot 1} & \text{if $\Calt$ is red}\\
        \frac{\ell k \cdot 2}{\ell \cdot \ell k} & \text{if $\Calt$ is blue}\\
        \frac{\ell k \cdot 2}{\ell \cdot \ell k} & \text{if $\Calt$ is gray}\\
    \end{cases}\right\}\leq \frac{2}{\ell}\]
    \otksRed.
    Five con-graphs are gray and four connections are non-adjacent to a given connection, so crossings involving a gray edge can cover at most $5 \cdot 4 \cdot \frac{2}{\ell} \leq \frac{40}{41}$ \otksRed.
    Thus, blue-blue crossings cover at least $\frac{1}{41}$ \otksRed.
    Since each such crossing covers at most $\frac{1}{(\ell k)^2}$ \otksRed, this yields a total of $\Omega((\ell k)^2)$ crossings in \D and so $\crkpl(G) \in \Omega(n k)$ and $\crr{\kpl} \in \Omega(n)$.
\end{proof}

A non-$k$-planar drawing has an edge that is crossed at least $k+1$ times, and $\crkpl(\Gl)\leq mk$, as in a $k$-planar drawing each edge is crossed at most $k$ times.
Thus, \Cref{thm:kpl:lb} and \Cref{thm:ub}(b) with $\Rs[\kpl]\leq \frac{mk}{k+1}\in \O(n)$ yield:
\begin{corollary}
	\label{thm:kpl:crr}
	The $k$-planar crossing ratio $\crr{\kpl}$ is in $\Theta(n)$ and \ksharp.
\end{corollary}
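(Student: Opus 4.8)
The plan is to assemble \Cref{thm:kpl:crr} from two facts already in hand: the lower bound of \Cref{thm:kpl:lb} and the generic upper-bound machinery of \Cref{thm:ub}(b). For the lower bound there is nothing further to do: for every $\ell\ge 41$, \Cref{thm:kpl:lb} provides a $k$-planar graph $\Gl$ on $n\in\Theta(\ell^2k)$ vertices with $\crkpl(\Gl)\in\Omega(nk)$ and $\crn(\Gl)\in\O(k)$, so $\crr{\kpl}\ge\crkpl(\Gl)/\crn(\Gl)\in\Omega(n)$, and since $n\to\infty$ as $\ell\to\infty$ this is a genuine $\Omega(n)$ lower bound on the ratio.

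For the matching upper bound I would first check the hypothesis of \Cref{thm:ub}(b): in any $k$-planar drawing each of the $m$ edges is crossed at most $k$ times, so the drawing has at most $\lfloor mk/2\rfloor\le mk$ crossings and hence $\crkpl(G)\in\O(mk)$. It then only remains to bound $\Rs[\kpl]$, the ratio on the sparse graphs of the class. For this I would take a non-planar $G\in\Gbps[\kpl]$ together with a crossing-minimal drawing of it (which may be assumed simple): either that drawing is already $k$-planar, so $\crkpl(G)\le\crn(G)$ and the ratio equals $1$; or some edge is crossed at least $k+1$ times, which forces $\crn(G)\ge k+1$ and hence $\crkpl(G)/\crn(G)\le mk/(k+1)$. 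Since $m\in\Theta(n)$ for connected sparse graphs, this gives $\Rs[\kpl]\le\tfrac{mk}{k+1}\in\O(n)$, and \Cref{thm:ub}(b) yields $\crr{\kpl}\in\O(\Rs[\kpl]+k)\subseteq\O(n)$. Together with the first paragraph, $\crr{\kpl}\in\Theta(n)$.

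For showing that $\crr{\kpl}$ is \ksharp, I would go back to the red--yellow standard drawing that was used to certify $\crn(\Gl)\in\O(k)$: its only crossings are between the single yellow edge and the red $\bundle{k+1}{}$, where the yellow edge meets exactly one edge of each of the $k+1$ red paths, so no edge is crossed more than $k+1$ times. Hence that same drawing is also a $(k+1)$-planar drawing of $\Gl$ with $\O(k)$ crossings, i.e.\ $\crkpl[(k+1)](\Gl)\in\O(k)$. Dividing the lower bound $\crkpl(\Gl)\in\Omega(nk)$ by it gives $\crr[^{1}(n)]{\kpl}\ge\crkpl(\Gl)/\crkpl[(k+1)](\Gl)\in\Omega(n)=\Omega(\crr{\kpl})$, so $\crr{\kpl}$ is \ksharp.

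I do not expect a real obstacle beyond having \Cref{thm:kpl:lb} and \Cref{thm:ub} on the table; the one point that deserves a line of care is the $\Rs[\kpl]$ estimate, where one must observe that a crossing-minimal drawing that fails to be $k$-planar automatically has crossing number at least $k+1$ (so the denominator of the ratio is large enough) and that restricting to connected sparse graphs --- where $m\in\Theta(n)$ --- costs nothing.
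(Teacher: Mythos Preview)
Your proposal is correct and follows essentially the same route as the paper: the lower bound is taken directly from \Cref{thm:kpl:lb}; the upper bound uses $\crkpl(G)\le mk$ to invoke \Cref{thm:ub}(b), together with the observation that a crossing-minimal drawing failing $k$-planarity forces $\crn(G)\ge k+1$, giving $\Rs[\kpl]\le mk/(k+1)\in\O(n)$; and \ksharpness\ comes from noting that the red--yellow standard drawing is already $(k{+}1)$-planar. Your two-case split for $\Rs[\kpl]$ and the explicit check that the yellow edge is crossed exactly $k{+}1$ times are a bit more detailed than the paper's one-line justification, but the argument is the same.
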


\subsection{\texorpdfstring{$\boldmath\text{}k$}{k}-Vertex-Planar\texorpdfstring{: $\boldmath\crr{\kvp}$}{}}
\label{sec:kvp}
A vertex $v$ is \emph{adjacent} to a crossing $x$, if $x$ is on an edge incident to $v$. 
In a \emph{$k$-vertex-planar} (shorthand ``$\kvp$'') drawing, no vertex may be adjacent to more than $k$ crossings.
\begin{theorem}\label{thm:kvp:lb}
    For every $\ell \geq 11$, there exists a $k$-vertex-planar graph \Gl with $n\in \Theta(\ell^2 k)$ vertices such that $\crkvp(\Gl) \in \Omega(n k)$ and $\crn(\Gl) \in \O(k)$. Thus $\crr{\kvp} \in \Omega(n)$.
\end{theorem}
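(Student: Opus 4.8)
The plan is to instantiate the framework of \Cref{sec:framework} with essentially the same con-graphs as for $k$-planarity (\Cref{thm:kpl:lb}): the red con-graph is a $(k+1,2)$-bundle, the gray con-graphs are $(\ell k,2)$-bundles, and the blue con-graphs are $(\ell k,2\ell)$-bundles, so that $n\in\Theta(\ell^2 k)$. A red--yellow standard drawing, in which the unique yellow edge crosses the $k+1$ edges incident to one pole of the red con-graph, has exactly $k+1$ crossings, hence $\crn(\Gl)\in\O(k)$. To see that \Gl is $k$-vertex-planar I use a blue--blue standard drawing in which the two blue bundles cross in a grid so that, along each blue $s$-$t$-path, its $2\ell$ edges alternate between carrying a block of $k$ crossings (sweeping past $k$ consecutive paths of the other bundle) and bridging the gap between two such blocks with no crossing at all; then every internal blue vertex is adjacent to exactly $k$ crossings and every pole to none, and since all remaining con-graphs are drawn planarly, the drawing is $k$-vertex-planar.

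For the lower bound on $\crkvp(\Gl)$ I would argue in the style of \Cref{sec:framework}. Let \D be an arbitrary $k$-vertex-planar drawing of \Gl. Every crossing on the yellow edge lies on an edge incident to one of the yellow edge's two endpoints, each of which is adjacent to at most $k$ crossings; hence the yellow edge is crossed at most $k$ times and so meets at most $k$ of the $k+1$ red $s$-$t$-paths. Choosing a red path $R$ that the yellow edge does not meet and passing to the $R$-restriction \Dred, the red con-graph now has width $1$ and participates in no red--yellow crossing, so in \Dred the only crossings that can cover a \KurasRed-subdivision, apart from blue--blue crossings, are those involving a gray connection.

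The crucial observation is that each gray con-graph participates in at most $2k$ crossings \emph{in total}: every edge of an $(\ell k,2)$-bundle is incident to one of its two poles, and each pole is adjacent to at most $k$ crossings. A crossing between a width-$\ell k$ gray con-graph and any other con-graph covers at most $\frac{1}{\ell k}$ \otksRed, so the five gray con-graphs together cover at most $5\cdot 2k\cdot\frac{1}{\ell k}=\frac{10}{\ell}\le\frac{10}{11}$ \otksRed. Consequently blue--blue crossings cover at least $\frac{1}{11}$ \otksRed; since each blue--blue crossing covers at most $\frac{1}{(\ell k)^2}$ \otksRed, \D has at least $\frac{1}{11}(\ell k)^2\in\Omega((\ell k)^2)=\Omega(nk)$ crossings. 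Hence $\crkvp(\Gl)\in\Omega(nk)$, and together with $\crn(\Gl)\in\O(k)$ this gives $\crr{\kvp}\ge\crkvp(\Gl)/\crn(\Gl)\in\Omega(n)$.

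The step I expect to be the main obstacle is verifying that the blue--blue standard drawing really is $k$-vertex-planar. Unlike the $k$-planar situation one cannot send each blue path transversally through all $\ell k$ strands of the opposite bundle: then every internal blue vertex would be incident to two edges each crossed $k$ times, i.e.\ adjacent to $2k>k$ crossings. This is exactly what forces the blue bundles to have height $\Theta(\ell)$ instead of $\ell$ and requires routing the $\ell k$ crossings of each blue path into only every other edge (the grid must be organized into blocks of $k$ parallel strands with bridging edges in between), while keeping the poles crossing-free; making this drawing precise and confirming that the vertex count remains $\Theta(\ell^2 k)$ is the delicate part. Everything else -- the bound on the number of crossings on the yellow edge, the $R$-restriction, and the pole-degree bound for the gray con-graphs -- is short. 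As for $k$-planarity, one may moreover note that the $\O(k)$ bound is already realized by $\bp[(k+1)]$-drawings, which together with \Cref{thm:ub}(a) upgrades the result to $\crr{\kvp}\in\Theta(n)$, which is \ksharp.
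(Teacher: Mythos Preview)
Your approach is essentially identical to the paper's: same frame, same red and gray bundles, same yellow-edge bound via a pole of the yellow edge, same $R$-restriction, and the same ``$2k$ crossings per gray con-graph because every gray edge touches a pole'' observation leading to the $\frac{10}{11}$ vs.\ $\frac{1}{11}$ split.

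The one concrete slip is the blue height: $2\ell$ does not suffice, and the paper uses $2\ell+1$. With $h=2\ell$ edges $e_1,\dots,e_{2\ell}$ on a blue path carrying $c_1,\dots,c_{2\ell}$ crossings, the internal-vertex constraint $c_i+c_{i+1}\le k$ together with $\sum_i c_i=\ell k$ forces $c_1+c_{2\ell}\ge k$ (pair $(c_2+c_3)+\cdots+(c_{2\ell-2}+c_{2\ell-1})\le(\ell-1)k$ and subtract). Summing over all $\ell k$ paths gives $\sum_p(c_1^{(p)}+c_{2\ell}^{(p)})\ge \ell k^2$, but the two poles can be adjacent to at most $k$ crossings each, so this sum is at most $2k$---a contradiction for $\ell\ge 11$. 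With $h=2\ell+1$ the pattern $0,k,0,k,\dots,k,0$ works and keeps both poles crossing-free, exactly as in the paper's drawing; nothing else in your argument changes.
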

\begin{proof}
    Let \Gl be a framework graph, where the red con-graph is a $\bundle{k+1}{}$s, the blue con-graphs are $\bundle{\ell k}{2\ell+1}$s, and the gray con-graphs are $\bundle{\ell k}{}$s, so $n \in \Theta(\ell^2 k)$.
    A red-yellow standard drawing shows $\crn(\Gl) \in \O(k)$.
    A blue-blue standard drawing shows that \Gl is $k$-vertex-planar, cf.\ \cref{fig:kvp_blue_blue_cr}.

    Let \D be a $k$-vertex-planar drawing of \Gl. Since the yellow edge $e$ has at most $k$ crossings, there exists an $s$-$t$-path $R$ in the red con-graph not crossed by $e$.
    Let \Dred be the $R$-restriction of \D; it does not have any red-yellow crossings.
    Let $\C$ be a gray connection with con-graph $S_\C$; it has width $\ell k$. Since any crossing on $S_\C$ is adjacent to one of the poles of $S_\C$, there are at most $2k$ crossings on $S_\C$ and thus
    crossings involving at least one edge from $S_\C$ can cover at most $\frac{2k}{\ell k}=\frac{2}{\ell}$ \otksRed.
    Five con-graphs are gray, so crossings involving a gray edge can cover at most $5 \cdot \frac{2}{\ell} \leq \frac{10}{11}$ \otksRed.
    Thus, blue-blue crossings cover at least $\frac{1}{11}$ \otksRed.
    Since each such crossing covers at most $\frac{1}{(\ell k)^2}$ \otksRed, this yields $\crkvp(G) \in \Omega((\ell k)^2) = \Omega(nk)$.
\end{proof}

A non-$k$-vertex-planar drawing has at least $k+1$ crossings, and $\crkvp(G)\leq nk$, as every vertex may be adjacent to at most $k$ crossings.
Thus, \Cref{thm:kvp:lb} and \Cref{thm:ub}(b) with $\Rs[\kvp]\leq \frac{nk}{k+1}\in \O(n)$ yield:
\begin{corollary}
	\label{thm:kvp:crr}
	The $k$-vertex-planar crossing ratio $\crr{\kvp}$ is in $\Theta(n)$ and \ksharp.
\end{corollary}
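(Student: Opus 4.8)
The lower bound $\crr{\kvp}\in\Omega(n)$ is exactly \Cref{thm:kvp:lb}, so the plan is to add the matching upper bound $\crr{\kvp}\in\O(n)$ and the \ksharpness claim, in the same way as for $k$-planarity. For the upper bound I would invoke \Cref{thm:ub}(b), whose hypothesis is readily verified: in any $k$-vertex-planar drawing, summing over the vertices the number of crossings each is adjacent to gives at most $nk$, and since every crossing is adjacent to at least one vertex, the drawing has at most $nk$ crossings; for connected $G$ (to which one may reduce) $m\ge n-1$, so $\crkvp(G)\le nk\in\O(mk)$. Thus \Cref{thm:ub}(b) yields $\crr{\kvp}\in\O(\Rs[\kvp]+k)$, and it remains to bound $\Rs[\kvp]$.

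To bound $\Rs[\kvp]$ I would split into two cases for a non-planar $G\in\Gbps[\kvp]$. If $\crn(G)\le k$, then an optimal unrestricted drawing of $G$ has at most $k$ crossings in total, so no vertex is adjacent to more than $k$ of them; that drawing is therefore already $k$-vertex-planar, hence $\crkvp(G)=\crn(G)$ and the ratio equals $1$. If $\crn(G)\ge k+1$, then with $\crkvp(G)\le nk$ the ratio is at most $\frac{nk}{k+1}\in\O(n)$. In both cases $\frac{\crkvp(G)}{\crn(G)}\le\frac{nk}{k+1}$, so $\Rs[\kvp]\in\O(n)$, and hence $\crr{\kvp}\in\O(\Rs[\kvp]+k)=\O(n)$; together with \Cref{thm:kvp:lb} this gives $\crr{\kvp}\in\Theta(n)$.

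For the \ksharpness claim I would observe that the red-yellow standard drawing of $\Gl$ used in the proof of \Cref{thm:kvp:lb} to witness $\crn(\Gl)\in\O(k)$ has at most $k+1$ crossings (the single yellow edge crosses each of the $k+1$ paths of the red $\bundle{k+1}{}$ once). Consequently every vertex is adjacent to at most $k+1$ crossings, so this drawing is $(k{+}1)$-vertex-planar and $\crkvp[(k+1)](\Gl)\le k+1\in\O(k)$. Combined with $\crkvp(\Gl)\in\Omega(nk)$ from \Cref{thm:kvp:lb}, this yields $\crr[^{1}(n)]{\kvp}\ge\frac{\crkvp(\Gl)}{\crkvp[(k+1)](\Gl)}\in\Omega(n)=\Omega(\crr{\kvp})$, i.e.\ $\crr{\kvp}$ is \ksharp.

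This corollary is an assembly of earlier ingredients, so I do not anticipate a serious obstacle. The only mildly delicate point is the observation that every drawing with at most $k$ crossings is automatically $k$-vertex-planar (and, symmetrically, that the red-yellow standard drawing, having at most $k+1$ crossings, is automatically $(k{+}1)$-vertex-planar): this is exactly what lets one pin down $\Rs[\kvp]$ and obtain \ksharpness with no further analysis of drawings. Beyond that, one only has to keep the application of \Cref{thm:ub}(b) within the connected sparse regime, where $m\in\Theta(n)$.
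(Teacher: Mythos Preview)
Your proposal is correct and follows essentially the same approach as the paper: the paper establishes $\Rs[\kvp]\leq \frac{nk}{k+1}$ via the observation that any drawing with at most $k$ crossings is automatically $k$-vertex-planar (so $\crn(G)\geq k+1$ whenever the ratio exceeds~$1$) together with $\crkvp(G)\leq nk$, then applies \Cref{thm:ub}(b); your two-case split merely makes this implicit reasoning explicit. The \ksharpness argument via the red-yellow standard drawing being $(k{+}1)$-vertex-planar is likewise exactly what the paper intends.
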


\subsection{IC-Planar\texorpdfstring{: $\boldmath\crr{\IC}$}{}}
\label{sec:ic}
Two crossings \emph{share a vertex} $v$ if they are both on edges incident to $v$. A drawing is \emph{IC-planar} (\emph{independent crossings} planar, shorthand ``IC''), if no two crossings share a vertex~\cite{ic-intro1,ic-intro2,nic-intro2}.
Note that \crIC must not be mistaken for the \emph{independent crossing number}, where there are no restrictions on the crossings, but one only counts crossings between non-adjacent edges~\cite{MarcusSchaeferCrNoSurvey}.
In fact, IC-planarity is equivalent to $1$-vertex-planarity.
We can obtain near-tight non-asymptotic bounds for this special case.

\begin{theorem}
    \label{thm:ic:lb}
    For every $\ell \geq 2$, there exists an IC-planar graph \Gl with $n=4\ell^2+12$ vertices such that $\crIC(\Gl) = \frac{n}{4}-3$ and $\crn(G)\leq 2$. Thus $\crr{\IC}\geq \frac{n}{8}-\frac32$.
\end{theorem}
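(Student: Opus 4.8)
The plan is to specialize the framework of \Cref{sec:framework} to the parameter $k=1$, using that IC-planarity is precisely $1$-vertex-planarity, and to dimension every con-graph minimally so that the resulting bounds come out exact rather than merely asymptotic. Concretely, $\Gl$ will be a framework graph whose two blue con-graphs are width-$\ell$ bundles with paths long enough ($2\ell+1$ edges, as in \cref{fig:ic_blue_blue_cr}) that the natural grid of $\ell^2$ blue--blue crossings has no vertex --- internal vertex or pole --- adjacent to two crossings; whose red con-graph is a width-$2$ bundle, so that the yellow edge, which in an IC drawing carries at most one crossing, cannot meet every red $s$-$t$-path; and whose gray con-graphs are chosen just large enough to forbid the cheap ``non-blue'' drawings while keeping the vertex count at $n=4\ell^2+12$. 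One then checks the construction is already meaningful for $\ell=2$.

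For the upper bounds I would produce two standard drawings. A red--yellow standard drawing draws $F$ with a single red--yellow crossing and each con-graph planarly in a neighbourhood of its connection's curve; since the yellow con-graph is a single edge and the red con-graph has width $2$, this drawing has exactly two crossings, so $\crn(\Gl)\le 2$ (and $\crn(\Gl)\ge 1$ since $\Gl$ contains a $K_{3,3}$-subdivision). A blue--blue standard drawing realizes the two blue bundles as a crossing grid; by the choice of path length it is $1$-vertex-planar, hence IC-planar, which simultaneously shows that $\Gl$ is IC-planar and gives $\crIC(\Gl)\le\ell^2$.

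For the lower bound $\crIC(\Gl)\ge\ell^2$ I would run the framework's counting scheme in the manner of \Cref{thm:kvp:lb}, but tracking exact constants. Fix an arbitrary IC-planar drawing $\D$ of $\Gl$. Since the yellow edge is crossed at most once and the red con-graph has width $2$, some red $s$-$t$-path $R$ is uncrossed by the yellow edge; pass to the $R$-restriction $\Dred$ --- in which red and yellow have width $1$ and there is no red--yellow crossing --- and to its Kuratowski subdivisions $\KurasRed$, of which there are $\Theta(\ell^2)$. Every member of $\KurasRed$ is covered by some crossing, and a crossing between the con-graphs of two non-adjacent connections $\C,\Calt$ covers at most a $1/(\width{\C}\,\width{\Calt})$-fraction of $\KurasRed$. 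The crux is to show that all crossings other than blue--blue ones together cover strictly less than all of $\KurasRed$: IC-planarity forces the yellow edge to have at most one crossing and forces every crossing on a gray or red con-graph to be adjacent to one of its (few) poles or internal vertices, so each of these con-graphs is crossed only boundedly often; this, together with an inspection of the non-adjacency pattern in $F$, leaves a positive remainder that only blue--blue crossings can cover. Since each blue--blue crossing covers at most a $1/\ell^2$-fraction of $\KurasRed$ (both crossed blue paths have width $1$), there must be at least $\ell^2$ of them, so $\crIC(\Gl)\ge\ell^2$.

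Putting the pieces together gives $\crIC(\Gl)=\ell^2=\tfrac n4-3$ and $\crn(\Gl)\le 2$, whence $\crr{\IC}\ge\crIC(\Gl)/\crn(\Gl)\ge\ell^2/2=\tfrac n8-\tfrac32$. The main obstacle is the lower bound in the small-$\ell$ regime: the comfortable slack of the asymptotic $k$-vertex-planar argument (which only needs $\ell\ge11$) is gone, so the gray and red con-graphs must be designed with extreme economy --- large enough that no single red-, yellow-, or gray-involving crossing can cover a whole family of Kuratowski subdivisions at once, yet small enough to keep $n$ at exactly $4\ell^2+12$ --- and the ``non-blue crossings cover little'' estimate must be carried out with exact constants rather than order-of-magnitude bounds.
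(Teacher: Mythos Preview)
Your construction and upper bounds align with the paper's --- in fact, hitting $n=4\ell^2+12$ forces the red and gray con-graphs to be $(1,2)^+$-bundles (triangles), each contributing one internal vertex. The gap is in the lower bound. You plan to mimic the $k$-vertex-planar argument: pass to an $R$-restriction killing red--yellow crossings, then bound the fraction of $\KurasRed$ that non-blue--blue crossings can cover, leaving a ``positive remainder'' to be handled by blue--blue crossings. But a remainder $1-f$ only gives $\crIC(\Gl)\ge(1-f)\ell^2$, whereas the theorem asserts the exact value $\ell^2$, so you need $f=0$. Your bounding scheme cannot deliver this: a gray triangle has width~$2$, so a single crossing between $R$ (width~$1$ in $\Dred$) and a non-adjacent gray con-graph already covers a $\tfrac12$-fraction of $\KurasRed$, and nothing in your plan excludes such a crossing. ``Carried out with exact constants'' is not enough here; the constants simply do not come out to zero.

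The paper replaces the fractional estimate by a structural observation you are missing: in a triangle on $\{s,v,t\}$, any two edges share a vertex, so IC-planarity allows at most one of its three edges to be crossed, and hence at least one of its two $s$-$t$-paths (the direct edge, or the length-$2$ path through $v$) is \emph{entirely uncrossed}. Choosing such an uncrossed path in every red and gray con-graph simultaneously and restricting $\Kuras$ to those choices leaves exactly $\ell^2$ subdivisions, indexed by pairs of blue paths. No crossing on a red or gray edge can cover any of them, and yellow is adjacent to both blue connections; hence each of the $\ell^2$ subdivisions is covered by a blue--blue crossing, and each blue--blue crossing covers at most one. This yields $\crIC(\Gl)\ge\ell^2$ on the nose, with no fractional bookkeeping at all.
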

\begin{proof}
    Let $\Gl$ be a framework graph, where the red and gray con-graphs are $\bundle[+]{1}{2}$s (i.e., triangles), and the blue con-graphs are $\bundle{\ell}{2\ell+1}$s, so $n = 4\ell^2+12$.
    The graph has $n =6+6+2\cdot2\ell\cdot \ell = 4\ell^2+12$ vertices.
    A red-yellow standard drawing shows that $\crn(G_\ell) \leq 2$. A blue-blue standard drawing shows that the graph is IC-planar, cf.\ \Cref{fig:ic_blue_blue_cr}. 

    Consider an IC-planar drawing $\D$ of $\Gl$.
    For each red or gray connection $\C=\{s,t\}$,
    at least one of the two $s$-$t$-paths in its con-graph has to be uncrossed, as otherwise there would be non-independent crossings.
    Since the yellow connection is adjacent to the blue connections, all \KS have to be covered by blue-blue crossings. Since blue con-graphs have width $\ell$, any blue-blue crossing can only cover at most $\frac{1}{\ell^2}$ \otks, yielding $\crn(G_\ell) \geq \ell^2 = \frac{n}{4}-3$.
\end{proof}

A non-IC-planar drawing has at least $2$ crossings, and $\crIC(G)\leq n/4$, as no crossings share a vertex, thus $\crr{\IC}\leq \frac{n/4}2=\frac{n}8$. Together with \Cref{thm:ic:lb} we have:

\begin{corollary}
    \label{thm:ic:crr}
    For the IC-planar crossing ratio it holds that $\crr{\IC}\in [\frac{n}{8}-\frac32, \frac{n}{8}]$.
\end{corollary}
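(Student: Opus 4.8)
The plan is to read off the two endpoints of the interval from two independent ingredients. The left endpoint is immediate from \Cref{thm:ic:lb}: for every $\ell\ge 2$ it supplies an IC-planar graph $\Gl$ on $n=4\ell^2+12$ vertices with $\crIC(\Gl)=\tfrac{n}{4}-3$ and $\crn(\Gl)\le 2$, hence $\crr{\IC}\ge \crIC(\Gl)/\crn(\Gl)\ge(\tfrac n4-3)/2=\tfrac n8-\tfrac32$ for every $n$ of this form. So the only thing left to prove is the matching upper bound $\crr{\IC}\le n/8$.

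For the upper bound I would take an arbitrary $n$-vertex IC-planar graph $G$ together with a crossing-minimal IC-planar drawing. The key point is that in an IC-planar drawing every edge participates in at most one crossing (two crossings on a common edge $e$ would both lie on edges incident to an endpoint of $e$, contradicting the IC-condition), and by the standard uncrossing move — which here costs nothing precisely because the two crossing edges have no other crossings — we may assume the two edges of each crossing are independent. Thus each crossing occupies four distinct vertices, the four-vertex sets of distinct crossings are pairwise disjoint (again by the IC-condition), and therefore $\crIC(G)\le \lfloor n/4\rfloor\le n/4$. Now I would split on $\crn(G)$: if $\crn(G)\ge 2$, then $\crIC(G)/\crn(G)\le (n/4)/2=n/8$; if $\crn(G)\le 1$, then a crossing-minimal drawing of $G$ already has at most one crossing and is trivially IC-planar, so $\crIC(G)=\crn(G)$ and the ratio is at most $1$. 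Since the construction of \Cref{thm:ic:lb} forces $n\ge 28$, we have $1\le n/8$, so in every case the ratio is at most $n/8$; taking the supremum over all $n$-vertex IC-planar $G$ gives $\crr{\IC}\le n/8$, and combining with the lower bound yields the claimed interval.

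Essentially nothing here is hard once \Cref{thm:ic:lb} is available; the one step deserving a sentence of care is the bound $\crIC(G)\le n/4$ — i.e.\ verifying that in a crossing-minimal IC-planar drawing the crossings really are pairwise vertex-disjoint and occupy four vertices each — and noting that this argument needs no simplicity hypothesis (a minimum IC-drawing automatically has no self-crossing and no pair of edges crossing twice, and after the free uncrossing move no two adjacent edges cross), in line with the table's claim that the IC-planar bounds hold with and without the simplicity assumption.
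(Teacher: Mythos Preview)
Your argument is correct and essentially identical to the paper's: the paper also derives the lower bound directly from \Cref{thm:ic:lb} and obtains the upper bound from $\crIC(G)\le n/4$ (``no crossings share a vertex'') combined with the fact that any non-IC-planar drawing has at least two crossings, giving $\crr{\IC}\le (n/4)/2 = n/8$. Your explicit case split on $\crn(G)\le 1$ and your justification that crossing-minimal IC-drawings have no adjacent-edge crossings are careful elaborations of steps the paper leaves implicit, not a different approach.
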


We note that by shrinking the length of a single path in each of the two blue con-graphs by $2$, our construction still works and (since then $n=4\ell^2+8$) we would get the slightly tighter lower bound of $\frac{n}{8}-1$, at the expense of a slightly longer proof.

 \subsection{NIC-Planar\texorpdfstring{: $\boldmath\crr{\NIC}$}{}}
\label{sec:nic}
In an \emph{NIC-planar} (\emph{nearly independent crossings}, shorthand ``NIC'') drawing, any two crossings may share at most one vertex~\cite{nic-intro1, nic-intro2}. 
Clearly, each edge is involved in at most one crossing.

\begin{theorem}
\label{thm:nic:lb}
For every $\ell\geq 4$, there exists an NIC-planar graph $\Gl$ with 
$n \in \Theta(\ell^2)$ vertices such that $\crNIC(\Gl)\in \Omega(n)$ and $\crn(\Gl) \leq 2$. Thus $\varrho_{IC}(n) \in \Omega(n)$.
\end{theorem}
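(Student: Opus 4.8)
The plan is to instantiate the framework of \Cref{sec:framework}, closely mirroring the IC‑planar argument (\Cref{thm:ic:lb}). I would take $\Gl$ to be a framework graph whose red con‑graph is a $\bundle{2}{}$ (a $4$‑cycle through the poles $s,t$), whose two blue con‑graphs are $\bundle{\ell}{2\ell+1}$s (hence of width $\ell$), and whose five gray con‑graphs are width‑$\Theta(\ell)$ bundles (for concreteness, $\bundle{\ell}{}$s), with the precise sizes tuned so that $n\in\Theta(\ell^2)$ and so that the constant bookkeeping in the lower bound already goes through at $\ell=4$.

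For the two easy parts I would use standard drawings. In a red–yellow standard drawing the only crossings are the two points where the single yellow edge meets the two $s$–$t$‑paths of the red $\bundle{2}{}$; everything else is planar, so $\crn(\Gl)\le 2$, and since the yellow edge is crossed twice, the drawing is not NIC‑planar. In a blue–blue standard drawing (as in \cref{fig:nic_blue_blue_cr}) the two blue con‑graphs meet in a grid pattern in which every edge is crossed at most once and no two crossings share a vertex; this is precisely where the blue path length $\Theta(\ell)$ is needed, so that each of the $\ell$ paths of one blue con‑graph can cross each of the $\ell$ paths of the other on a fresh edge. Hence $\Gl$ is NIC‑planar.

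For the lower bound, take an arbitrary NIC‑planar drawing $\D$ of $\Gl$. As the yellow edge lies in at most one crossing, it meets at most one of the two $s$–$t$‑paths of the red $\bundle{2}{}$; $R$‑restricting the red con‑graph to an uncrossed‑by‑yellow path yields $\Dred$ of red width $1$ with no red–yellow crossing. In $F$ the two blue connections form the only non‑adjacent pair among $\{\text{blue}_1,\text{yellow},\text{blue}_2,\text{red}\}$, so every Kuratowski subdivision in $\KurasRed$ is covered either by a blue–blue crossing or by a crossing incident to a gray edge, and a blue–blue crossing covers at most $\tfrac1{\ell^2}$ of the subdivisions. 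The crux is to show that gray‑incident crossings cover strictly less than, say, $\tfrac12$ of them. Two NIC‑specific facts drive this: (i) each edge — in particular each gray edge — is in at most one crossing, which caps the number of crossings incident to a given gray con‑graph by its number of edges; and (ii) any two crossings share at most one vertex, which, applied to crossings sitting at a common pole (a frame node), forces the ``partner'' edges of those crossings to be pairwise vertex‑disjoint, so that between the pole‑incident edges of two non‑adjacent width‑$\ell$ con‑graphs there can be only a bounded number of subdivision‑covering crossings. Summing these bounds over the five gray con‑graphs and the connections non‑adjacent to each of them (using that the gray con‑graphs have width $\ell$ while the red and yellow con‑graphs they may meet have tiny width and tiny edge count) bounds the gray contribution by $O(1/\ell)$; choosing the hidden constants in the gray con‑graphs small enough makes this $<\tfrac12$ for all $\ell\ge 4$. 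Then blue–blue crossings must cover $\ge\tfrac12|\KurasRed|=\Omega(\ell^2)$ subdivisions, so $\D$ contains $\Omega(\ell^2)$ blue–blue crossings, giving $\crNIC(\Gl)\in\Omega(\ell^2)=\Omega(n)$ and hence $\crr{\NIC}\in\Omega(n)$.

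I expect the gray‑con‑graph analysis to be the main obstacle. Unlike IC‑planarity (equivalently $1$‑vertex‑planarity), where a triangle con‑graph automatically contains an uncrossed $s$–$t$‑path because the ``$s$‑slot'' and ``$t$‑slot'' are each used up at most once, NIC‑planarity permits arbitrarily many crossings at a single vertex, so no gray path is uncrossed for free. One therefore has to design the gray con‑graphs — their width, their path length, and how the frame positions them — so that the two NIC constraints \emph{together} force the subdivision‑covering gray crossings down to a small enough constant fraction, and making this already hold at $\ell\ge 4$, rather than only for large $\ell$, is the delicate point.
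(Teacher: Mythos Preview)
Your construction is essentially the paper's—the paper takes a $(1,2)^+$-bundle for red and $(\ell,\ell+2)$-bundles for blue, but your choices work just as well. The gap is in the lower-bound analysis.

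You propose to show that \emph{all} gray-incident covering crossings together account for less than~$\tfrac12$ of~$\KurasRed$, leaving the rest to blue--blue. But gray--blue and gray--gray crossings each cover only $\tfrac{1}{\ell^2}$ of the subdivisions, exactly like blue--blue; there is no need to show that there are few of them—indeed, having many of them is precisely what yields the $\Omega(\ell^2)$ count. What you actually must bound is the contribution of covering crossings that involve a \emph{red or yellow} edge, since those are the only ones that can cover as much as $\tfrac{1}{\ell}$ apiece. And this is immediate: after your $R$-restriction, yellow has one edge and $R$ has two, so by NIC-planarity (each edge lies in at most one crossing) there are at most three such crossings, covering at most $\tfrac{3}{\ell}\le\tfrac34$ for $\ell\ge4$. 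The remaining $\ge\tfrac14$ must be covered by crossings between width-$\ell$ con-graphs (blue or gray on both sides), forcing $\ge\ell^2/4$ crossings. Your fact~(ii) about shared vertices is never needed; the one-crossing-per-edge consequence of NIC suffices.

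The paper's argument is even terser because it avoids the $R$-restriction altogether. With a width-$2$ red con-graph, the single possible crossing on the yellow edge covers at most $\tfrac12$ of $\Kuras$ (every non-adjacent partner has width $\ge2$), and the three red edges yield at most three red--gray crossings covering at most $\tfrac{3}{2\ell}$ in total; hence at least $1-\tfrac12-\tfrac{3}{2\ell}\ge\tfrac18$ of the subdivisions are covered by crossings among blue and gray edges only, each covering $\le\tfrac{1}{\ell^2}$.
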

\begin{proof}
    Let \Gl be a framework graph where the red con-graph is a \bundle[+]{1}{}, the blue con-graphs are $\bundle{\ell}{\ell+2}$s, and the gray con-graphs are $\bundle{\ell}{2}$s, so $n\in \Theta(\ell^2)$.
    A red-yellow standard drawing shows $\crn(\Gl) \leq 2$.
    A blue-blue standard drawing, shows that \Gl is NIC-planar, cf.\ \Cref{fig:nic_blue_blue_cr}.
    
    Consider a NIC-planar drawing \D of \Gl. Since the yellow edge can be crossed at most once and all other con-graphs have width at least 2, a crossing on the yellow edge covers at most $\frac{1}{2}$ \otks.
    Any red-gray crossing can cover at most $\frac{1}{2\ell}$ \otks. As each of the three red edges is involved in at most one crossing, such crossings cover at most $\frac{3}{2\ell}$ \otks in total. Therefore, at least $1- \frac{1}{2}- \frac{3}{2\ell}\geq \frac{1}{8}$ \otks have to be covered by crossings involving only gray and blue edges.
    Any such crossing covers at most $\frac{1}{\ell^2}$ \otks, yielding $\crNIC(\Gl)\in \Omega(\ell^2)=\Omega(n)$. 
\end{proof}

A non-NIC-planar drawing has at least $2$ crossings, and $\crIC(G)\leq m/2$, as each crossing requires two edges not involved in any other crossing.
Since NIC-planar graphs have at most $\frac{18}{5}n$ edges~\cite{NIC-Planar},
we have $\crr{\NIC} \leq \frac{9}{10}n$ and together with \Cref{thm:nic:lb}:
\begin{corollary}
    \label{thm:nic:crr}
    The NIC-planar crossing ratio $\crr{\NIC}$ is in $\Theta(n)$.
\end{corollary}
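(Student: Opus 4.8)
The plan is to pair the lower bound just established in \Cref{thm:nic:lb} with a matching linear upper bound, so that the corollary becomes the expected short wrap‑up. \Cref{thm:nic:lb} already gives $\crr{\NIC}\in\Omega(n)$, hence it only remains to show $\crr{\NIC}\in\O(n)$, i.e., that for every non‑planar NIC‑planar graph $G$ on $n$ vertices we have $\crNIC(G)/\crn(G)\in\O(n)$.

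For the upper bound I would first control $\crNIC(G)$ in terms of $m$. In any NIC‑planar drawing each edge participates in at most one crossing — as noted right after the definition of NIC‑planarity, two crossings lying on a common edge would share both of that edge's endpoints, contradicting NIC‑planarity — so such a drawing has at most $m/2$ crossings, giving $\crNIC(G)\le m/2$. Next I would invoke the known edge bound for the NIC‑planar graph class, $m\le\frac{18}{5}n$~\cite{NIC-Planar}, to obtain $\crNIC(G)\le\frac{9}{10}n$. Finally, since $G$ is not planar, $\crn(G)\ge1$, so $\crNIC(G)/\crn(G)\le\frac{9}{10}n\in\O(n)$; taking the supremum over all $n$‑vertex NIC‑planar graphs yields $\crr{\NIC}\le\frac{9}{10}n$. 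Combining this with $\crr{\NIC}\in\Omega(n)$ from \Cref{thm:nic:lb} gives $\crr{\NIC}\in\Theta(n)$, as claimed. (Alternatively, one could route the upper bound through \Cref{thm:ub}(a), using $\crNIC(G)\le m/2\in\O(m^2)$ and the observation that on sparse graphs $m\in\Theta(n)$, but the direct argument via the edge bound is cleaner and gives an explicit constant.)

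There is no genuine obstacle here; the only point requiring a little care is that both ingredients of the upper bound — "each edge is crossed at most once" and the $\frac{18}{5}n$ edge bound — are intrinsic to NIC‑planarity and do not depend on the drawing being simple. Consequently the bound $\crr{\NIC}\le\frac{9}{10}n$, and hence the $\Theta(n)$ classification, holds both with and without the simplicity assumption, matching what is recorded in \Cref{tab:res}.
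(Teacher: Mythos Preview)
Your approach is essentially the paper's: combine \Cref{thm:nic:lb} for the lower bound with $\crNIC(G)\le m/2$ and the edge bound $m\le\tfrac{18}{5}n$ for the upper bound. Note a small arithmetic slip, though: $m/2\le\tfrac{18}{5}n/2=\tfrac{9}{5}n$, not $\tfrac{9}{10}n$; the paper obtains the constant $\tfrac{9}{10}n$ only after additionally observing that any drawing violating NIC-planarity has at least two crossings (so one may take $\crn(G)\ge 2$ in the denominator), which you omitted---this does not affect the $\Theta(n)$ conclusion.
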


\subsection{NNIC-Planar and \texorpdfstring{$\boldmath\text{}k$}{k}-Fan-Crossing-Free\texorpdfstring{: $\boldmath\crr{\NNIC}$ and $\boldmath\crr{\kfcf}$}{}}
\label{sec:nnic}
A simple drawing is \emph{NNIC-planar} (\emph{nearly nearly independent crossings}, shorthand ``NNIC''), if any two crossings share at most two vertices.
This beyond planarity concept has seemingly not been investigated before, but it is closely related to \emph{fan-crossing-free} (FCF) drawings~\cite{kFanCrossingFreePaper}: simple drawings where we disallow a single edge to cross over a fan, i.e., multiple edges that share a common incident vertex. Observe that if an edge $\{x,y\}$ would cross two edges incident to a common vertex $z$, these two crossings would share the three vertices $x,y,z$.
Thus, NNIC may at first sight seem identical to FCF. However, while
every NNIC-planar graph is clearly also FCF, the inverse is not necessarily the case:

\begin{restatable}{theorem}{THMnnicfcf}
\label{theoremNNICNotEqualToFanCrossing}
    There exist fan-crossing-free graphs that are not NNIC-planar. 
\end{restatable}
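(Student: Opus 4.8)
The plan is to exhibit an explicit witness, the complete multipartite graph $G \coloneqq K_{1,1,1,1,t}$ (equivalently, $K_{t+4}$ with a $t$-clique removed) for a small constant $t$; I expect $t=5$ to suffice. Write $a,b,c,d$ for the four singleton parts and $e_1,\dots,e_t$ for the remaining part, so that $\{a,b,c,d\}$ induces a $K_4$, every $e_i$ is adjacent to all of $a,b,c,d$, and the $e_i$ are pairwise non-adjacent. Then $G$ has $n=t+4$ vertices and $m=\binom 42+4t=4n-10$ edges, so it is not excluded by the $4n-8$ edge bound for fan-crossing-free graphs.

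For the fan-crossing-free side I would first draw the $K_4$ on $\{a,b,c,d\}$ planarly (a triangle with the fourth vertex in the centre, giving four triangular faces), distribute the $e_i$ among the faces, and for each $e_i$ lying in a face $F$ draw its three edges to the vertices of $F$ inside $F$ and its fourth edge out of $F$ across a single boundary edge, choosing these boundary edges so that each $K_4$-edge carries at most one crossing and, in a face holding two of the $e_i$, the unavoidable internal crossing is between non-adjacent edges. A finite check of this fixed routing shows that no edge crosses two edges sharing an endpoint, so the drawing is fan-crossing-free; this is best accompanied by a figure. For $t$ small (certainly $t=5$) the required distribution is easily arranged.

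The substantive part is to show that $G$ admits \emph{no} NNIC-planar drawing, and the main obstacle is that this must be argued for an arbitrary simple drawing \D of $G$, with no control over how edges are routed. The strategy is to combine the Kuratowski-covering principle of \Cref{sec:prelim} with a counting argument over crossings, exploiting that the ``cores'' of $G$ --- the $K_5$'s on the sets $\{a,b,c,d,e_i\}$ and the $K_{3,3}$'s with parts $\{e_i,e_j,\alpha\}$ and $\{a,b,c,d\}\setminus\{\alpha\}$ --- share almost all of their structure. Since every such Kuratowski subgraph must be covered in \D by a crossing between two of its non-adjacent edges, bookkeeping which edges can meet at such a crossing yields three facts: (i) a crossing both of whose edges lie in the $K_4$ on $\{a,b,c,d\}$ spans exactly the vertex set $\{a,b,c,d\}$, and two of them on different edge pairs already share four vertices; (ii) in the absence of any such crossing, the crossing covering the $K_5$ on $\{a,b,c,d,e_i\}$ has the form $e_i\alpha_i\times\beta_i\gamma_i$ with $\{\alpha_i,\beta_i,\gamma_i\}\subseteq\{a,b,c,d\}$, hence touches three of $a,b,c,d$; and (iii) a crossing covering one of the listed $K_{3,3}$'s either touches three of $a,b,c,d$, or has the form $e_i\beta\times e_j\gamma$ and then covers only the (two) such $K_{3,3}$'s of the pair $\{i,j\}$ whose third part avoids $\{\beta,\gamma\}$.

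I would then close the argument by a case distinction that always produces two distinct crossings sharing at least three vertices. Two $K_4$-internal crossings: done by (i), noting two on the same edge pair is impossible in a simple drawing. Exactly one, call it $x^*$: it covers none of the apex-pair $K_{3,3}$'s, so by (iii) either some such $K_{3,3}$ is covered by a crossing touching three of $a,b,c,d$ --- which with $x^*$ gives the desired pair --- or every apex pair $\{i,j\}$ needs at least two crossings of the form $e_i\beta\times e_j\gamma$ whose right-vertex pairs partition $\{a,b,c,d\}$; there being only three such partitions while each $e_i$ sits in $t-1\ge 4$ apex pairs, two pairs $\{i,j\},\{i,j'\}$ must reuse a partition part $\{p,q\}$, and the corresponding crossings have vertex sets $\{e_i,e_j,p,q\}$ and $\{e_i,e_{j'},p,q\}$, sharing $\{e_i,p,q\}$. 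No $K_4$-internal crossing: by (ii) each $e_i$'s covering crossing omits one vertex $\delta_i\in\{a,b,c,d\}$, and with $t\ge5$ two apexes omit the same vertex, so their crossings share the remaining three. In every case \D fails to be NNIC-planar, so $G$ is fan-crossing-free but not NNIC-planar.
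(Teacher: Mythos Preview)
Your approach is genuinely different from the paper's. The paper exhibits a graph whose fan-crossing-free drawings are essentially unique (forced by $K_5$ ``walls'' that cannot be crossed), and then observes that this unique drawing violates NNIC. You instead pick a small explicit candidate $K_{1,1,1,1,5}$ and argue non-NNIC-planarity for \emph{all} drawings via a neat Kuratowski-covering pigeonhole. That combinatorial half of your argument is essentially sound (one small point you glossed over: two type-1 crossings $e_i\beta\times e_j\gamma$ and $e_i\beta'\times e_j\gamma'$ for the same pair $\{i,j\}$ already share the three vertices $\{e_i,e_j,\beta\}$ whenever $\{\beta,\gamma\}\cap\{\beta',\gamma'\}\neq\emptyset$, which is why the two surviving crossings must use a \emph{partition}---you use this implicitly but should state it).

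The real gap is on the fan-crossing-free side. Your described routing does \emph{not} produce an FCF drawing, and the obstruction is structural, not a matter of tweaking the picture. With the planar $K_4$ on $\{a,b,c,d\}$, every two triangular faces share an edge. If $e_i$ sits in face $F=xyz$ and its fourth edge $e_iw$ crosses the boundary edge $xy$ into the adjacent face $F'=xyw$, then whenever $F'$ contains some $e_j$, the three edges $e_jx,e_jy,e_jw$ cut $F'$ into three sub-triangles and $e_iw$ lands in the one bordered by $xy$; to reach $w$ it must cross $e_jx$ or $e_jy$, and either of those shares a vertex with $xy$---so $e_iw$ crosses a fan. Hence every exiting edge must enter an \emph{empty} face. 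But with five apices in four faces and every two faces adjacent, you cannot arrange this: if some face holds two apices, their two exiting edges (both ending at the same missing vertex $w$) cannot use the same boundary edge (that edge would be crossed by a fan at $w$), so at least one exits into a non-empty face; and if occupancy is $(2,1,1,1)$ there is no empty face at all. Drawing $K_4$ with one crossing does not help either: any $e_i$-edge crossing a diagonal is forced to be $e_ib$ or $e_id$ (the only edges non-adjacent to that diagonal), which then forms a fan with the other diagonal. So your ``finite check'' fails, and it is not clear that $K_{1,1,1,1,5}$ is fan-crossing-free at all; you would need a fundamentally different drawing or a different witness graph. The paper avoids this difficulty by building the FCF drawing into the construction (the $K_5$ walls are locally FCF and globally block all alternatives), at the cost of a longer uniqueness argument.
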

\begin{proof}[Proof sketch]
  \Cref{fig:nnic_vs_fcf} shows an FCF graph that is not NNIC-planar.
 We refer to the
 appendix
 for a detailed proof of that fact.
  A central idea is that copies of $K_5$ can be used as ``blocking walls'' as they cannot be crossed in an FCF drawing. 
\end{proof}

\begin{figure}
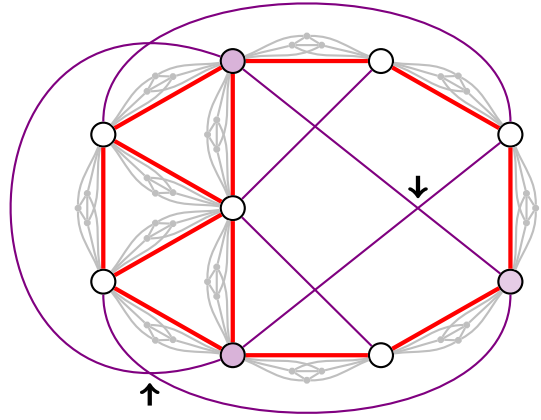

\centering
        \FCFvsNNICdrawing{false}
    \caption{A fan-crossing-free but not NNIC-planar graph. The red edges cannot be crossed in any FCF drawing. The depicted fan-crossing-free subembedding of the non-gray edges is unique (up to mirroring), but not NNIC-planar: arrows point to two crossings that share three (shaded) vertices.}
    \label{fig:nnic_vs_fcf}
\end{figure}

More generally, we can consider the generalization of FCF graphs, the \emph{$k$-fan-crossing-free} (shorthand ``$k$-fcf'') graphs, $k\geq 2$, where any edge may cross at most $k-1$ edges of a fan in a simple drawing~\cite{kFanCrossingFreePaper}. Standard FCF is thus identical to $2$-fcf.

\begin{theorem}\label{thm:kfcf:lb}
	For every $\ell \geq 109$, there exists a $k$-fan-crossing-free graph \Gl with $n\in \Theta(\ell k)$ vertices such that $\crkfcf(\Gl) \in \Omega(n^2)$ and $\crn(\Gl) \in \O(k)$. Thus $\crr{\kfcf} \in \Omega(\frac{n^2}{k})$.
\end{theorem}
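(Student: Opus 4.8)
The plan is to apply the framework of \cref{sec:framework} with the default frame $F$ (the usual $4$-cycle of a blue, a yellow, another blue and a red connection, the other five connections gray). I let $\Gl$ be the framework graph whose yellow con-graph is the single edge $\{s,t\}$, whose red con-graph is a $\bundle{2k}{2}$, whose two blue con-graphs are $\bundle{\ell k}{3}$s, and whose five gray con-graphs are $\bundle{\ell k}{2}$s; counting the six nodes, the $2k$ red internal vertices, the $2\ell k$ internal vertices of each blue bundle and the $\ell k$ of each gray bundle gives $n\in\Theta(\ell k)$. For the upper bound $\crn(\Gl)\in\O(k)$ I use a red-yellow standard drawing in which the yellow edge is routed through the red bundle, crossing exactly $k$ of its $s$-incident edges and exactly $k$ of its $t$-incident edges (possible since, lens by lens, one may choose on which side of each internal vertex to pass); this uses only $2k$ crossings and is $(k+1)$-fan-crossing-free but not $k$-fan-crossing-free, which also establishes \ksharpness. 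For membership I use the blue-blue standard drawing of \cref{fig:kfcf_blue_blue_cr}: the two blue $\bundle{\ell k}{3}$s are drawn so that all their mutual crossings form a grid between the pole-free \emph{middle} edges of their paths; since middle edges of distinct paths of a bundle are pairwise non-adjacent, no edge crosses two edges of a common fan, so this drawing is even fan-crossing-free and hence $k$-fan-crossing-free. Thus $\Gl$ is $k$-fan-crossing-free.

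For the lower bound I take an arbitrary $k$-fan-crossing-free drawing $\D$ of $\Gl$ and argue by Kuratowski counting. The only structural input needed is: since the edge set of a height-$2$ bundle $S_\C$ is the union of the fan at its pole $s$ and the fan at its pole $t$, every edge with no endpoint in $\{s,t\}$ crosses at most $2(k-1)$ edges of $S_\C$. Applied to the red con-graph, the yellow edge crosses fewer than $2k$ red edges and therefore misses some $s$-$t$-path $R$ of it entirely; I pass to the $R$-restriction $\Dred$, where the red con-graph has width $1$ and there are no yellow-red crossings. In $\Dred$ the only non-adjacent pair of connections not involving a gray one is the blue-blue pair, so every subdivision in $\KurasRed$ is covered either by a blue-blue crossing or by a crossing incident to a gray con-graph. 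For a gray connection $\C$ (width $\ell k$) and a non-adjacent connection $\Calt$, the structural fact bounds the number of crossings between $\C$ and $\Calt$ by $2(k-1)\cdot|E(S_{\Calt})| = 2(k-1)\,\width{\Calt}\,h(\Calt)$, so such crossings cover at most $\frac{2(k-1)\,\width{\Calt}\,h(\Calt)}{\ell k\cdot\width{\Calt}} < \frac{2h(\Calt)}{\ell}\le\frac{6}{\ell}$ of $\KurasRed$ (all con-graph heights being at most $3$). There are $16$ non-adjacent pairs involving a gray connection (the remaining two being blue-blue and the already-eliminated yellow-red), so gray-incident crossings cover at most $16\cdot\frac{6}{\ell} < 1$ of $\KurasRed$ whenever $\ell\ge109$; a term-by-term count over the $6$ gray-blue, $3$ gray-red, $3$ gray-yellow and $4$ gray-gray pairs even gives $\frac{70}{\ell}$. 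Hence a positive constant fraction of $\KurasRed$ must be covered by blue-blue crossings, and each such crossing lies on one path of each blue con-graph and thus covers at most $\frac{1}{(\ell k)^2}$ subdivisions; so $\D$ contains $\Omega((\ell k)^2)=\Omega(n^2)$ crossings. Therefore $\crkfcf(\Gl)\in\Omega(n^2)$, and together with $\crn(\Gl)\in\O(k)$ and $n\in\Theta(\ell k)$ this yields $\crr{\kfcf}\in\Omega(n^2/k)$.

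I expect the main obstacle to be choosing the con-graphs so that both standard drawings are genuinely $k$-fan-crossing-free while the counting stays clean. Two points are more delicate than in the $k$-planar case: the blue bundles must have height at least $3$ so their bulk crossings fall on pole-free middle edges (a crossing edge meeting an entire pole-fan would violate $k$-fan-crossing-freeness), and the red bundle needs width exceeding $2(k-1)$ rather than the $k+1$ that suffices for $k$-planarity, because in a $k$-fan-crossing-free drawing the single yellow edge may legitimately cross up to $k-1$ edges at each pole of the red bundle, so only a wider red bundle guarantees an uncrossed path $R$ for the restriction step. Everything else --- the vertex count, the $\O(k)$ bound on $\crn(\Gl)$, the fan-crossing estimate and the gray bookkeeping --- is routine and mirrors the preceding subsections.
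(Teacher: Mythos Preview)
Your proposal is correct and follows essentially the same route as the paper: identical con-graphs (red $\bundle{2k}{2}$, blue $\bundle{\ell k}{3}$s, gray $\bundle{\ell k}{2}$s), the same red-yellow and blue-blue standard drawings, the same $R$-restriction step via the $2(k-1)$ bound on yellow--red crossings, and the same endgame of forcing $\Omega((\ell k)^2)$ blue--blue crossings. The only difference is bookkeeping for the gray coverage: the paper bounds, for each edge $e$, the number of gray edges $e$ can cross (at most $4(k-1)$, summed over $9$ connections of height $\le 3$, giving $108/\ell$), whereas you bound, for each non-adjacent pair $(\C,\Calt)$ with $\C$ gray, the total crossings by $2(k-1)\,|E(S_{\Calt})|$ and sum over the $16$ such pairs to get $96/\ell$ (or $70/\ell$ term-by-term). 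Your accounting is slightly tighter but not a different idea; both comfortably fit under the $\ell\ge 109$ hypothesis.
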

\begin{proof}
	Let \Gl be a framework graph, where the red con-graph is a \bundle{\widthRkfcfD}{}, blue con-graphs are $\bundle{\ell k}{3}$s, and gray con-graphs are $\bundle{\ell k}{2}$s, so $n\in \Theta(\ell k)$.
	A red-yellow standard drawing shows $\crn(\Gl) \in \O(k)$.
        Observe that in a blue con-graph, the ``middle'' edges of all the $\ell k$ paths are pairwise non-adjacent. We can thus obtain a fan-crossing-free drawing by considering a blue-blue standard drawing where all middle edges of the paths of one blue con-graph cross all middle edges of the paths of the other blue con-graph, cf.\ \Cref{fig:kfcf_blue_blue_cr}.

    Let \D be a $k$-fan-crossing-free drawing of \Gl.
    Each red or gray edge is adjacent to one of the six frame nodes.
    Therefore, the yellow edge $e$ can cross at most $2(k-1)$ of the red edges, so there exists an $s$-$t$-path $R$ in $S_\mathrm{red}$ not crossed by $e$.
    Let \Dred be the $R$-restriction of \D; it does not have any red-yellow crossings.
    Consider some edge $e$ in~\Dred.
    Let $\C=\{s,t\}$ be the connection with $e \in E(\C)$.
    For each of the four frame nodes $a \in N\setminus \{s,t\}$, edge $e$ can have at most $k-1$ crossings with gray edges incident to $a$.
    The crossings of $e$ with any gray edge cover at most $4(k-1) \cdot \frac{1}{\width{S_{\textrm{gray}}}\cdot \width{\C}} = \frac{4(k-1)}{\ell k \cdot\width{\C}}$ \otksRed (recall that the width of the red con-graph here in \Dred is $1$).
    As all con-graphs are bundles, there are $\sum_{\Calt \in \connections} \width{\Calt} \cdot h(\Calt)$ choices for $e$, so crossings involving a gray edge can cover at most $\sum_{\Calt \in \connections} \frac{4(k-1) \cdot \widthRed{\Calt} \cdot h(\Calt)}{\ell k \cdot \widthRed{\Calt}}$ \otksRed.
    Since $|\connections| = 9$, $\max_{\Calt \in \connections} \{h(\Calt)\} = 3$, and $\ell\geq109$, this simplifies to at most $9 \cdot \frac{4(k-1)\cdot 3}{\ell k} \leq \frac{108 (k-1)}{109k} \leq \frac{108}{109}$.
    Thus, blue-blue crossings have to cover at least $\frac{1}{109}$ \otksRed.
	As each such crossing covers at most $\O(\frac{1}{(\ell k)^2})$ \otksRed, we have $\crkfcf(\Gl) \in \Omega((\ell k)^2)=\Omega(n^2)$.
\end{proof}

In a non-$k$-fan-crossing-free drawing there has to be an edge that crosses $k\in \O(n)$ edges of a fan, so there are at least $k$ crossings. The $k$-fan-crossing-free drawing may require $\O(m^2)$ crossings, as by definition $k$-fan-crossing-free drawings are simple.
Therefore, \cref{thm:kfcf:lb} together with $\Rs[\kfcf]\in \O(\frac{n^2}{k})$ in the context of \Cref{thm:ub}(a), yields:
\begin{corollary}
	\label{thm:kfcf:crr}
	The $k$-fan-crossing-free crossing ratio $\crr{\kfcf}$ is in $\Theta(\frac{n^2}{k})$ and \ksharp.
\end{corollary}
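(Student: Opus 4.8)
The plan is to combine three ingredients: the lower bound of \Cref{thm:kfcf:lb}, a matching upper bound obtained through \Cref{thm:ub}(a), and a short observation establishing \ksharpness. \Cref{thm:kfcf:lb} already provides $\crr{\kfcf}\in\Omega(n^2/k)$, so only the upper bound and \ksharpness remain. For the upper bound, I would first note that $k$-fcf drawings are simple by definition, hence have at most $\binom{m}{2}\in\O(m^2)$ crossings; so the hypothesis of \Cref{thm:ub}(a) is satisfied and $\crr{\kfcf}\in\O(\Rs[\kfcf]+n)$. We may further assume $k\in\O(n)$, since a fan has fewer than $n$ edges, so for $k\geq n$ no edge can cross $k$ edges of a fan, \emph{every} drawing is $k$-fcf, and the ratio is $1$; in particular $n\in\O(n^2/k)$, so it only remains to prove $\Rs[\kfcf]\in\O(n^2/k)$.

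For this I would bound $\crkfcf(G)/\crn(G)$ for an arbitrary connected sparse graph $G\in\Gbps[\kfcf]$, where $m\in\Theta(n)$. If $\crn(G)\leq k-1$, then an optimal (simple) drawing of $G$ has at most $k-1$ crossings on each edge, hence at most $k-1$ crossings with the edges of any single fan, so it is already a $k$-fcf drawing and the ratio equals $1$. Otherwise $\crn(G)\geq k$, and since $\crkfcf(G)\in\O(m^2)=\O(n^2)$ we get $\crkfcf(G)/\crn(G)\in\O(n^2/k)$. Thus $\Rs[\kfcf]\in\O(n^2/k)$, and with the above $\crr{\kfcf}\in\O(\Rs[\kfcf]+n)=\O(n^2/k)$; together with the lower bound this yields $\crr{\kfcf}\in\Theta(n^2/k)$.

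For \ksharpness it suffices to show that the family $\{\Gl\}$ of \Cref{thm:kfcf:lb} also witnesses $\crr[^{1}(n)]{\kfcf}\in\Omega(n^2/k)$, i.e.\ to exhibit a $(k{+}1)$-fcf drawing of $\Gl$ with only $\O(k)$ crossings. I would take the red--yellow standard drawing used there to bound $\crn(\Gl)$, and route the single yellow edge through the red con-graph (a bundle of $2k$ length-$2$ paths) so that it crosses exactly $k$ of the edges incident to each red pole. Then the yellow edge meets at most $k$ edges of each red fan, each red edge is crossed at most once, and all other con-graphs remain crossing-free, so the drawing is $(k{+}1)$-fcf with $2k\in\O(k)$ crossings; hence $\crkfcf[(k+1)](\Gl)\in\O(k)$. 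Since $\crkfcf(\Gl)\in\Omega((\ell k)^2)$ and $n\in\Theta(\ell k)$, the ratio $\crkfcf(\Gl)/\crkfcf[(k+1)](\Gl)$ is in $\Omega(\ell^2 k)=\Omega(n^2/k)=\Omega(\crr{\kfcf})$, so $\crr{\kfcf}$ is \ksharp. The only non-bookkeeping step is verifying that this weaving of the yellow edge is realizable inside the small neighborhood of the red connection's curve without creating extra crossings; picturing the $2k$ paths as parallel strands with their midpoints in a common column, and passing the yellow curve to one side of the column for the top $k$ strands and to the other side for the bottom $k$ strands, makes this straightforward. The bulk of the real work already lies in \Cref{thm:kfcf:lb}, which I am free to assume, so there is no deeper obstacle here.
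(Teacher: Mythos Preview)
Your proposal is correct and follows essentially the same route as the paper: the upper bound via \Cref{thm:ub}(a) using that $k$-fcf drawings are simple (hence $\O(m^2)$ crossings) together with the observation that any drawing with fewer than $k$ crossings is already $k$-fcf, giving $\Rs[\kfcf]\in\O(n^2/k)$; and \ksharpness from the fact that the red--yellow standard drawing of \Cref{thm:kfcf:lb} is itself $(k{+}1)$-fcf. Your explicit description of weaving the yellow edge so that it hits exactly $k$ edges of each red pole's fan is a detail the paper leaves implicit (it only states globally that the red--yellow drawings are $\bp[(k+1)]$-drawings), and it is indeed necessary here since a careless routing could cross all $2k$ edges of a single pole's fan.
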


Since NNIC-planarity is more restrictive than fan-crossing-free (which is $k$-FCF for $k=2$), but on the other hand NNIC-planar drawings are also necessarily simple, we also have:
\begin{corollary}
	\label{thm:nnic:crr}
	The NNIC crossing ratio $\crr{\NNIC}$ is in $\Theta(n^2)$.
\end{corollary}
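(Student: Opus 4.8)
The plan is to sandwich $\crr{\NNIC}$ between $\Omega(n^2)$ and $\O(n^2)$, reusing the fan-crossing-free machinery; the key structural facts are that every NNIC-planar drawing is simple and that every NNIC-planar drawing is also fan-crossing-free (i.e.\ $2$-fcf), while the converse of the latter fails.

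For the upper bound, since NNIC-planar drawings are simple they have at most $\binom{m}{2}\in\O(m^2)$ crossings, so $\crNNIC(G)\in\O(m^2)$. Hence the prerequisite of \Cref{thm:ub}(a) is met and $\crr{\NNIC}\in\O(\Rs[\NNIC]+n)$. For every $G\in\Gbps[\NNIC]$ we have $m\le 4n$ by definition of that class, so $\crNNIC(G)\le\binom{m}{2}\in\O(n^2)$, while every relevant (non-planar) $G$ has $\crn(G)\ge 1$; thus $\Rs[\NNIC]\in\O(n^2)$ and therefore $\crr{\NNIC}\in\O(n^2)$.

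For the lower bound I would first observe that every NNIC-planar drawing is fan-crossing-free: an edge $\{x,y\}$ crossing two edges of a fan at a common vertex $z$ yields two crossings that share the three vertices $x,y,z$, which NNIC forbids. Consequently the NNIC-planar drawings of any graph $G$ form a subset of its $2$-fcf drawings, so $\crNNIC(G)\ge\crkfcf[2](G)$. I would then take, for each $\ell\ge 109$, the framework graph $\Gl$ constructed in the proof of \Cref{thm:kfcf:lb} for $k=2$; recall that there the blue con-graphs are bundles of $\Theta(\ell)$ internally disjoint $s$-$t$-paths of length $3$, and that the proof already certifies $n\in\Theta(\ell)$, $\crn(\Gl)\in\O(1)$, and $\crkfcf[2](\Gl)\in\Omega(n^2)$. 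The only additional thing to verify is that $\Gl$ is actually NNIC-planar, i.e.\ that the blue-blue standard drawing used there, which is fan-crossing-free, is in fact NNIC. In that drawing every crossing is between a ``middle'' edge of one blue con-graph and a ``middle'' edge of the other; each middle edge joins two internal vertices private to its own length-$3$ path, so such a crossing meets four distinct vertices, and two distinct such crossings share at most one common middle edge and hence at most two vertices --- exactly the NNIC condition. No other crossings occur in a standard drawing, so this verification is the only, and essentially free, obstacle. It follows that $\Gl$ is NNIC-planar, whence $\crNNIC(\Gl)\ge\crkfcf[2](\Gl)\in\Omega(n^2)$ while $\crn(\Gl)\in\O(1)$, so $\crr{\NNIC}\in\Omega(n^2)$; together with the upper bound this gives $\crr{\NNIC}\in\Theta(n^2)$.
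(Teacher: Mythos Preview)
Your proof is correct and follows essentially the same approach as the paper: derive the lower bound from \Cref{thm:kfcf:lb} via the inclusion of NNIC-planar drawings in $2$-fcf drawings, and the upper bound from the simplicity of NNIC-planar drawings together with \Cref{thm:ub}(a). Your explicit verification that the blue-blue standard drawing is actually NNIC (two crossings share at most one middle edge, hence at most two vertices) is something the paper leaves implicit, so your write-up is in fact slightly more complete there.
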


A \emph{straight-line RAC} (right-angle crossings) drawing~\cite{RACPaper} is a drawing where edges are represented by straight lines and all crossings have to be at 90 degrees.
Consider any \Gl from \cref{thm:kfcf:lb}.
In a straight-line RAC drawing, there can be no edge crossing over two fan edges, so $\crbp[\slrac](\Gl) \geq \crkfcf[2](\Gl)$.
However, as witnessed by the blue-blue standard drawing (\Cref{fig:kfcf_blue_blue_cr}), each $\Gl$ is straight-line RAC drawable.
A straight-line RAC graph can have at most $4n-10$ edges \cite{RACPaper}, and thus no more than $\O(n^2)$ crossings in any simple drawing.
Therefore, our proof automatically also confirms \cite[Corollary 15]{BPS21}, without the need of an individual construction or proof:
\begin{corollary}
	\label{thm:slrac:crr}
	The straight-line RAC crossing ratio $\crr{\slrac}$ is in $\Theta(n^2)$.
\end{corollary}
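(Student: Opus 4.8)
The plan is to prove the corollary without any new construction, by reusing the $k$-fan-crossing-free family of \Cref{thm:kfcf:lb} for the lower bound and the known edge-density bound for straight-line RAC graphs for the upper bound.

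For the lower bound I would instantiate \Cref{thm:kfcf:lb} with the constant $k=2$ and take the resulting family $\{\Gl\}_{\ell\ge109}$, so that $n\in\Theta(\ell)$, $\crkfcf[2](\Gl)\in\Omega(n^2)$, and $\crn(\Gl)\in\O(1)$ (while $\crn(\Gl)\ge1$ since $\Gl$ contains the frame $K_{3,3}$, hence $\crn(\Gl)=\Theta(1)$). The key observation is that every straight-line RAC drawing is automatically a $2$-fan-crossing-free drawing: it is simple (two straight segments meet in at most one point, and adjacent edges cannot cross in a straight-line drawing), and if a straight segment $e$ crossed two edges $f_1,f_2$ emanating from a common vertex $z$, then $f_1$ and $f_2$ would both form a right angle with $e$ and hence be collinear, which is impossible for two distinct edges at $z$. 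Therefore $\crbp[\slrac](\Gl)\ge\crkfcf[2](\Gl)\in\Omega(n^2)$. Since the blue-blue standard drawing underlying \Cref{thm:kfcf:lb} can be realized with straight edges whose only crossings form a grid of mutually perpendicular segments (\Cref{fig:kfcf_blue_blue_cr}), we have $\Gl\in\Gbp[\slrac]$, and dividing by $\crn(\Gl)=\Theta(1)$ gives $\crr{\slrac}\in\Omega(n^2)$.

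For the upper bound I would use that a straight-line RAC graph has at most $4n-10$ edges~\cite{RACPaper}, so $m\in\O(n)$; in particular $\Gbpd[\slrac]=\emptyset$ and $\Gbp[\slrac]=\Gbps[\slrac]$. As straight-line drawings are simple, any straight-line RAC drawing of such a graph has at most $\binom{m}{2}\in\O(n^2)$ crossings, hence $\crbp[\slrac](G)\in\O(n^2)$ for every $G\in\Gbp[\slrac]$; since $\crn(G)\ge1$ for non-planar $G$ (while planar graphs contribute ratio $0$ via a straight-line planar drawing), we obtain $\crr{\slrac}\in\O(n^2)$, which together with the lower bound yields $\crr{\slrac}\in\Theta(n^2)$. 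Equivalently one may simply invoke \Cref{thm:ub}(a), noting $\Rs[\slrac]\in\O(n^2)$ because RAC graphs are sparse.

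I do not anticipate a genuine obstacle: the whole statement is a translation of \Cref{thm:kfcf:lb} at $k=2$, and the only two points that need a brief check are the right-angle argument ruling out fan crossings in RAC drawings and the fact that the crossing pattern of \Cref{fig:kfcf_blue_blue_cr} can be drawn with straight edges and $90^{\circ}$ crossings --- the latter being evident from the figure itself, and the former being essentially the same observation already used in~\cite{CKMV19,BPS21} to relate straight-line RAC to fan-crossing-freeness.
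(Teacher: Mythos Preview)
Your proposal is correct and follows essentially the same route as the paper: reuse the $k$-fan-crossing-free family of \Cref{thm:kfcf:lb} (at $k=2$), observe that every straight-line RAC drawing is $2$-fcf so $\crbp[\slrac](\Gl)\ge\crkfcf[2](\Gl)\in\Omega(n^2)$, note that the blue-blue standard drawing of \Cref{fig:kfcf_blue_blue_cr} is itself straight-line RAC, and bound the numerator via the $4n-10$ edge bound for RAC graphs. The only difference is cosmetic: the paper states the argument in three lines without spelling out the collinearity reason for ``RAC $\Rightarrow$ $2$-fcf'' or the $\Gbpd[\slrac]=\emptyset$ observation, whereas you make these explicit.
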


\subsection{Fan-Planar and Variants\texorpdfstring{: $\boldmath\crr{\ac},\crr{\fc},\crr{\wfp},\crr{\sfp}$}{}}
\label{sec:fp}
A drawing is \emph{adjacency-crossing}~\cite{Adjacency/FanCrossing}, if no edge crosses two independent edges. 
In other words, whenever an edge $e$ crosses some edges $f_1,f_2$, the latter two have to have a common incident vertex.
A drawing is \emph{fan-crossing}~\cite{Adjacency/FanCrossing}, a slight further restriction of the former, if all edges $F=\{f_1,f_2,...\}$ that cross $e$ are incident to a common vertex; observe that the only difference to before is whether $e$ may cross all edges of a triangle.
A drawing is \emph{weakly fan-planar}, if all edges from $F$ furthermore cross $e$ ``from the same side'';
it is \emph{strongly fan-planar}, if also furthermore no endpoint of $e$ is enclosed by $e$ and the edges from $F$, see~\cite{kaufmannueckerdtjournal2022,cheong23} for details. 
In our notations, we use the shorthands ``$\ac$'', ``$\fc$'', ``$\wfp$'', and ``$\sfp$'' for the above concepts,  where $\crac(G)\leq\crfc(G)\leq\crwfp(G)\leq\crsfp(G)$ for any graph $G$ by definition. When considering straight-line drawings, all four concepts are identical.
\begin{theorem}\label{thm:fp:lb}
    For every $\ell \geq 1$, there exists a strongly fan-planar graph $G_\ell$ with $n\in \Theta(\ell)$ vertices such that $\crac(G_\ell)\in \Omega(n^2)$ and $\crn(G_\ell)\in \O(1)$. Thus $\crr{\ac},\crr{\fc},\crr{\wfp},\crr{\sfp}\in \Omega(n^2)$.
\end{theorem}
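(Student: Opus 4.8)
The plan is to instantiate the framework of \Cref{sec:framework}. As the frame $F$ I take (a possibly slightly adjusted version of) the standard frame, so that blue--blue crossings are the ``expensive'' ones. The blue con-graphs are thick bundles of constant height and width in $\Theta(\ell)$, hence $n\in\Theta(\ell)$; the yellow con-graph is, as always, the single edge. The red con-graph is a \emph{fixed, constant-size gadget} rather than a bundle (cf.\ \cref{fig:fp_yellow_red_cr}): I choose it so that it admits a planar drawing with its two poles on the outer face, and so that the associated red--yellow standard drawing, where the single yellow edge is routed across the gadget, has only $\O(1)$ crossings but is not adjacency-crossing because there the yellow edge must cross two independent red edges; this yields $\crn(\Gl)\in\O(1)$. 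The gray con-graphs are chosen (again thick, of constant height) so that every standard drawing with a crossed gray connection both fails to be adjacency-crossing and already has $\Omega(\ell)$ crossings. For membership I give a blue--blue standard drawing (cf.\ \cref{fig:fp_blue_blue_cr}) in which the crossed edges of one blue bundle are crossed precisely by fans of the other bundle emanating from a common pole, all from the same side and enclosing no endpoint; this drawing is strongly fan-planar, and strong fan-planarity implies all four concepts, so $\Gl$ lies in each of the four graph classes.

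For the lower bound I take an arbitrary adjacency-crossing drawing $\D$ of $\Gl$ and, as in the framework, bound the fraction of the $K_{3,3}$-subdivisions in $\Kuras$ that can be covered by crossings which are not blue--blue; if at least a $(1-c)$-fraction remains for blue--blue crossings (each of which covers only a $\tfrac{1}{(\Theta(\ell))^2}$-fraction), this forces $\Omega(\ell^2)=\Omega(n^2)$ crossings, and since we argue for $\crac$, the weakest concept, and $\crac\le\crfc\le\crwfp\le\crsfp$, the stated ratios follow. The argument splits by the colours of the crossing edges. Crossings involving the yellow edge $e$ come first: because adjacency-crossing forbids $e$ from crossing two independent edges, the red edges crossed by $e$ are pairwise adjacent, and by the design of the red gadget this set cannot meet every $s$-$t$-path through it; hence there is a red $s$-$t$-path $R$ untouched by $e$, and I pass to the $R$-restriction $\Dred$, removing all red--yellow crossings and shrinking the red con-graph to width $1$. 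Red-involving crossings in $\Dred$ are then bounded directly, since the red con-graph contributes only $\O(1)$ edges, each crossing covering at most a $\tfrac{1}{\width{\Calt}}$-fraction \otksRed. The technical heart is bounding the crossings that involve a gray edge $g$: here I use that in an adjacency-crossing drawing the edges crossing $g$ are pairwise adjacent, hence form a single star (plus at most a bounded-size triangle), so $g$ can be crossed by the gray bundles incident to at most one frame node $z$, and only the four frame nodes that are not poles of $g$'s con-graph contribute to covering $\KurasRed$; crossing all first edges of a gray bundle of width $\Theta(\ell)$ at $z$ then covers only a $\tfrac{\O(1)}{\width{\C_g}}$-fraction, because each individual crossing covers merely a $\tfrac{1}{\Theta(\ell)\cdot\width{\C_g}}$-fraction. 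Summing these per-edge bounds over all non-blue edges and choosing the con-graph heights small enough and $\ell$ past the stated threshold keeps the resulting constant below $1$, leaving at least a $(1-c)$-fraction \otksRed for the blue--blue crossings.

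The step I expect to be the main obstacle is exactly this handling of the yellow edge together with the design of the red gadget. In the IC- and NIC-cases the analogous step is immediate because every edge participates in at most one crossing, so every small con-graph automatically has an uncrossed $s$-$t$-path; under adjacency-crossing, however, a single edge may be crossed by an arbitrarily large fan, so a priori the lone yellow edge could weave through an entire thick bundle incident to a frame node and thereby cover essentially all of $\Kuras$ with only $\O(\ell)$ crossings, which would defeat the $\Omega(n^2)$ bound. Ruling this out is precisely why the red con-graph must be a bespoke gadget rather than a bundle: its internal structure has to make ``the yellow edge meets every red $s$-$t$-path'' incompatible with the no-two-independent-edges rule, while still admitting a planar drawing so that $\crn(\Gl)\in\O(1)$. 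Verifying these two properties of the gadget, and checking that the gray con-graphs cannot be exploited in an analogous way, is where the concept-specific effort concentrates; the counting over $\Kuras$, the blue--blue bound, and the reduction of the four ratios to $\crac$ are routine within the framework.
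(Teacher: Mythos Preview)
Your approach has a genuine gap, and it lies exactly where you did \emph{not} expect it: the gray con-graphs. With gray chosen as thick bundles, your $\Gl$ does not satisfy $\crac(\Gl)\in\Omega(n^2)$. The phenomenon you correctly anticipate for the yellow--red interaction---that under adjacency-crossing the single yellow edge may be crossed by an arbitrarily large fan---applies just as well to yellow--gray, and bundles offer no protection there. Concretely, pick a gray connection $\Calt$ non-adjacent to the yellow one and consider the ``yellow--$\Calt$'' standard drawing of $\Gl$. If $S_{\Calt}$ is a $(\Theta(\ell),2)$-bundle, route the yellow edge so that it crosses all $\Theta(\ell)$ pole-incident edges of $S_{\Calt}$; these edges form a fan at that pole, so the drawing is adjacency-crossing, every other con-graph (including your red gadget, drawn planarly) is crossing-free, and the total is $\Theta(\ell)=\Theta(n)$ crossings while every subdivision in $\Kuras$ is covered. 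Hence $\crac(\Gl)\in O(n)$ for your construction. Your per-edge counting cannot salvage this either: the $\Theta(\ell)$ yellow--gray crossings each contribute a $\tfrac{1}{\Theta(\ell)}$-fraction, and their sum is $1$, not a constant below~$1$. The same issue resurfaces for red--gray after your $R$-restriction, since the width-$1$ red path can likewise be crossed by a full gray fan.

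The paper's fix is to use the \emph{same} bespoke gadget---a $K_7$---for the red \emph{and all five gray} con-graphs. The key structural fact (from~\cite{BGDMPT2014}) is that any adjacency-crossing drawing of $K_7$ contains an $s$-$t$-path each of whose edges is already crossed by another $K_7$-edge; consequently no edge outside that $K_7$ can cross this path. Thus every red and gray con-graph simultaneously contributes a pole-to-pole path uncrossable by the rest of the graph, and since yellow is adjacent to both blue connections, \emph{all} of $\Kuras$ must be covered by blue--blue crossings, yielding $\Omega(\ell^2)=\Omega(n^2)$ directly with no fraction-counting at all. In short: your instinct that a special gadget is required is correct, but it must replace the gray bundles as well, and that gadget is $K_7$.
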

\begin{proof}
    Let $\Gl$ be a framework graph, in which the red and gray con-graphs are complete graphs on seven vertices ($K_7$) and the blue con-graphs are $\bundle{\ell}{2}$s, so $n\in \Theta(\ell)$. We need to describe how a $K_7$ con-graph is drawn in a standard drawing, see \Cref{fig:fp_yellow_red_cr}: Let $\C=\{s,t\}$ be the connection corresponding to a $K_7$ con-graph; as per definition, we draw the $K_7$ in a small neighborhood of $\C$'s curve such that $s,t$ are on the local outer face. Observe that $K_7$ can then still be drawn strongly fan-planarly. However, any adjacency-crossing drawing of a $K_7$ contains at least one $s$-$t$-path $P$ where every edge in $P$ is crossed by another edge of the $K_7$~\cite{BGDMPT2014}\footnote{We note that \cite[Lemma 7]{BGDMPT2014} is stated in terms of (weak) fan-planarity, but it is simple to see that the same argument even holds for fan- and adjacency-crossing drawings.}. We may call this property $\langle$P$\rangle$.
    A red-yellow standard drawing shows that $\crn(\Gl)\in \O(1)$. A blue-blue standard drawing, where all blue-blue crossings are adjacent to the same two frame nodes shows that the graph is strongly fan-planar, cf.\ \Cref{fig:fp_blue_blue_cr}.
    
    Consider an adjacency-crossing drawing $\D$ of $\Gl$. By property $\langle$P$\rangle$, we know that each red or gray $K_7$ con-graph gives rise to a pole-connecting path that cannot be crossed by any edge outside of its con-graph. Since the yellow edge is adjacent to both blue con-graphs, all \Kuras-subdivisions need to be covered by blue-blue crossings.
	Since each such crossing covers at most $\frac{1}{\ell^2}$ \otks, we have $\crac(G_\ell)\in \Omega(\ell^2)=\Omega(n^2)$.
\end{proof}

For $G\in \Gbp[\wfp]$, the upper bound $\crwfp[](G) \in \O(m^2)$, is attained by a simple drawing~\cite{klemz2023simplifying,klemz2023simplifyingA}.
In strongly fan-planar drawings, edges cannot cross multiple times (and there can be at most $\O(m^2)$ crossings between adjacent edges, if they exist at all), so the same asymptotic upper bound holds.
Therefore, \cref{thm:fp:lb} together with \cref{thm:ub}(a) yields:
\begin{corollary}
 \label{thm:fp:crrWS}  
The weakly and strongly fan-planar crossing ratios $\crr{\wfp},\crr{\sfp}$ are in $\Theta(n^2)$.
\end{corollary}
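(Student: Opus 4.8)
The plan is to bracket $\crr{\wfp}$ and $\crr{\sfp}$ between matching $\Omega(n^2)$ and $\O(n^2)$ bounds, reusing the machinery already set up. The lower bound requires no new construction: the family $\{G_\ell\}$ from \Cref{thm:fp:lb} is strongly fan-planar, hence also weakly fan-planar, and because $\crac(G_\ell)\le\crwfp(G_\ell)\le\crsfp(G_\ell)$ and $\crn(G_\ell)\in\O(1)$, we read off $\crr{\wfp},\crr{\sfp}\ge\crac(G_\ell)/\crn(G_\ell)\in\Omega(n^2)$.

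For the upper bound I would first establish $\crwfp(G)\in\O(m^2)$ and $\crsfp(G)\in\O(m^2)$ for every graph in the respective class. For weak fan-planarity this is the known fact~\cite{klemz2023simplifying,klemz2023simplifyingA} that an optimal weakly fan-planar drawing can be chosen simple, hence it has at most $\binom{m}{2}$ crossings. For strong fan-planarity I would argue directly from the definition that no two edges cross more than once, so that the total crossing count of an optimal strongly fan-planar drawing is at most $\binom{m}{2}$ even if adjacent edges happen to cross, giving $\crsfp(G)\in\O(m^2)$. With these bounds in hand, \Cref{thm:ub}(a) applies to both concepts and yields $\crr{\wfp},\crr{\sfp}\in\O(\Rs+n)$. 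It then remains to bound $\Rs$: restricting, without loss of generality (crossing numbers are additive enough over connected components that the supremum is attained on a single component), to connected sparse $G$, we have $m\in\Theta(n)$, so $\crwfp(G),\crsfp(G)\in\O(m^2)\subseteq\O(n^2)$ while $\crn(G)\ge1$ on the non-planar instances that govern the supremum; thus $\Rs\in\O(n^2)$ and hence $\crr{\wfp},\crr{\sfp}\in\O(n^2)$. Combining the two directions gives $\crr{\wfp},\crr{\sfp}\in\Theta(n^2)$.

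I expect the only non-mechanical step to be the $\O(m^2)$ bound for strongly fan-planar drawings: it should be checked against the side- and enclosure-conditions in the definition rather than inherited from the simplicity statement cited for weak fan-planarity, so as to be certain the crossing count stays quadratic despite adjacent crossings not being excluded a priori. Everything else is the same bookkeeping around \Cref{thm:ub}(a) already used for the other concepts in this section.
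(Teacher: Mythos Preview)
Your proposal is correct and follows essentially the same route as the paper: the lower bound is read off directly from \Cref{thm:fp:lb}, the $\O(m^2)$ upper bound for $\crwfp$ is obtained via the cited simplification result, the $\O(m^2)$ bound for $\crsfp$ is argued from the definition (edges cannot cross twice, and adjacent-edge crossings contribute at most $\O(m^2)$), and then \Cref{thm:ub}(a) together with $m\in\Theta(n)$ on sparse graphs closes the gap. The paper is more terse about the $\Rs$ step, relying on the remark after \Cref{thm:ub} that $m\in\Theta(n)$ for connected sparse graphs, but the content is the same.
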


There are no results for the number of crossings a non-simple adjacency-crossing or fan-crossing drawing may have.
Since a simple drawing has at most $\O(m^2)$ crossings, \cref{thm:fp:lb} together with \cref{thm:ub}(a) yields:
\begin{corollary}
 \label{thm:fp:crrAF}  
The adjacency-crossing and fan-crossing crossing ratios $\crr[^{\text{simple}}(n)]{\ac},\crr[^{\text{simple}}(n)]{\fc}$ are in $\Theta(n^2)$ for simple drawings.
\end{corollary}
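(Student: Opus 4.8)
\emph{Proof plan.} The goal is to sandwich $\crr[^{\mathrm{simple}}(n)]{\ac}$ (and, analogously, $\crr[^{\mathrm{simple}}(n)]{\fc}$) between matching asymptotic bounds. For the lower bound I would simply reuse \Cref{thm:fp:lb}: its graphs $G_\ell$ satisfy $\crac(G_\ell)\in\Omega(n^2)$ and $\crn(G_\ell)\in\O(1)$, and the argument establishing $\crac(G_\ell)\in\Omega(n^2)$ there does not assume simplicity, so it bounds \emph{all} adjacency-crossing drawings and a fortiori the simple ones. Moreover the blue--blue standard drawing that certifies strong fan-planarity is itself a simple drawing, hence $G_\ell$ does admit a simple adjacency-crossing (and fan-crossing) drawing, so the simplicity-restricted crossing number is well defined and still at least $\Omega(n^2)$. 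Thus $\crr[^{\mathrm{simple}}(n)]{\ac}\in\Omega(n^2)$; the identical graphs serve for $\fc$ since $\crac(G)\le\crfc(G)$ for every $G$ while the same simple drawing is fan-crossing.

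For the upper bound, the single new ingredient is the trivial fact that every $n$-vertex simple drawing has at most $\binom{m}{2}\in\O(m^2)$ crossings. Consequently, once we restrict attention to simple drawings, the adjacency-crossing and fan-crossing crossing numbers satisfy $\crac(G),\crfc(G)\in\O(m^2)$, which is exactly the hypothesis of \Cref{thm:ub}(a); that observation is purely arithmetic (it only uses the crossing lemma, which concerns the ordinary crossing number) and therefore applies verbatim to the simplicity-restricted quantities, giving $\crr[^{\mathrm{simple}}(n)]{\ac},\crr[^{\mathrm{simple}}(n)]{\fc}\in\O(\Rs[\ac]+n)$. It then remains to bound $\Rs[\ac]=\sup_{G\in\Gbps[\ac]}\crac(G)/\crn(G)$ (again with the simplicity restriction on $\crac$): on the sparse subclass every graph has $m\le 4n$ and thus at most $\O(n^2)$ crossings in any simple drawing, while $\crn(G)\ge 1$ for every non-planar $G$ and planar graphs do not enter the supremum, so $\Rs[\ac]\in\O(n^2)$ and hence $\crr[^{\mathrm{simple}}(n)]{\ac},\crr[^{\mathrm{simple}}(n)]{\fc}\in\O(n^2)$.

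Combining the two directions yields $\Theta(n^2)$ for both concepts. The one point that needs care — and the reason the statement is confined to simple drawings — is that, unlike for weakly or strongly fan-planar drawings, there is no known bound on the number of crossings in an arbitrary (possibly non-simple) adjacency- or fan-crossing drawing; so throughout the argument $\crac$ and $\crfc$ must be read consistently as the minimum over \emph{simple} such drawings, and one must verify that the lower-bound family's membership witness is simple so that the $\Omega(n^2)$ bound persists under this restriction. Beyond that bookkeeping I expect no real obstacle: the proof is a direct plug-in of \Cref{thm:fp:lb} and \Cref{thm:ub}(a).
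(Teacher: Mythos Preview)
Your proposal is correct and follows essentially the same approach as the paper: combine the lower bound from \Cref{thm:fp:lb} with \Cref{thm:ub}(a), using that simple drawings have at most $\O(m^2)$ crossings. Your write-up is more explicit than the paper's (you verify that the witnessing blue--blue drawing is simple and spell out the $\Rs$ bound), but the underlying argument is identical.
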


\subsection{\texorpdfstring{$\boldmath\text{}k$}{k}-Edge-Crossing\texorpdfstring{: $\boldmath\crr{\kecr}$}{}}
\label{sec:ecr}
A drawing is \emph{$k$-edge-crossing} (shorthand ``\kecr'') if at most $k$ edges have crossings~\cite{I08,GSM10}.

\begin{theorem}
	\label{thm:ecr:lb}
	For every $\ell \geq 1$, there exists a $k$-edge-crossing graph \Gl with $n \in \Theta(\ell k)$ vertices such that $\crkecr(\Gl) \in \Omega(k^2)$ and $\crn(\Gl) \in \O(k)$. Thus $\crr{\kecr} \in \Omega(k)$.
\end{theorem}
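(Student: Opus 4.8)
The plan is to apply the framework of \Cref{sec:framework} with a frame graph $\Gl$ whose red con-graph is a $\bundle{3k}{}$, whose two blue con-graphs are $\bundle{\lfloor k/2\rfloor}{}$s, and whose five gray con-graphs are $\bundle{3\ell k}{}$s (the yellow con-graph being the single edge, as always); counting internal path vertices plus the six frame nodes gives $n=6+3k+2\lfloor k/2\rfloor+15\ell k\in\Theta(\ell k)$. I would then exhibit the two standard drawings. A red--yellow standard drawing routes the single yellow edge across the thin tube holding the red $\bundle{3k}{}$, crossing each of its $3k$ paths once, so $\crn(\Gl)\le 3k\in\O(k)$. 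For a standard drawing witnessing that $\Gl$ is $k$-edge-crossing I would use a blue--blue standard drawing in which the $\lfloor k/2\rfloor$ paths of one blue con-graph pass through those of the other like two interlocking combs, arranged so that all crossings on a given path occur on its first edge (cf.\ \Cref{fig:fp_blue_blue_cr}); then the $\lfloor k/2\rfloor^2$ blue--blue crossings lie on a total of only $2\lfloor k/2\rfloor\le k$ edges, and no other edge is crossed, so the drawing is $k$-edge-crossing. (Incidentally this also gives $\crkecr(\Gl)\le\lfloor k/2\rfloor^2\in\O(k^2)$.)

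For the lower bound I would take an arbitrary $k$-edge-crossing drawing $\D$ of $\Gl$ and let $X$, with $|X|\le k$, be its set of crossed edges. A key observation is that, since every con-graph is a bundle, any $K\in\Kuras$ covered by a crossing incident to an edge $e\in E(S_\C)$ must use the unique path of $P_{\C}$ through $e$; hence all subdivisions covered by crossings incident to a crossed edge of $S_\C$ together use at most $|X\cap E(S_\C)|$ of the $\width{S_\C}$ paths of $P_{\C}$, i.e.\ at most a $\frac{|X\cap E(S_\C)|}{\width{S_\C}}$-fraction of $\Kuras$. Applied to the five pairwise edge-disjoint gray con-graphs, whose widths are $3\ell k$ and for which $\sum_{\C\text{ gray}}|X\cap E(S_\C)|\le|X|\le k$, this bounds the fraction of $\Kuras$ covered by crossings involving a gray edge by $\frac{k}{3\ell k}=\frac{1}{3\ell}\le\frac13$; applied to the red con-graph of width $3k$ it bounds the fraction covered by red--yellow crossings by $\frac{k}{3k}=\frac13$. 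By the observation in \Cref{sec:framework}, every crossing covering some $K\in\Kuras$ lies between two non-adjacent connections, and in $F$ the only non-adjacent pairs of connections are the two blue connections, the red--yellow pair, and pairs involving a gray connection; hence every $K\in\Kuras$ not covered by a crossing between the two distinct blue con-graphs is accounted for above, so blue--blue crossings must cover at least $1-\frac13-\frac13=\frac13$ of $\Kuras$. Since each blue--blue crossing covers at most $\frac{1}{\lfloor k/2\rfloor^2}$ of $\Kuras$, there are $\Omega(\lfloor k/2\rfloor^2)=\Omega(k^2)$ of them, so $\crkecr(\Gl)\in\Omega(k^2)$ and $\crr{\kecr}\ge\crkecr(\Gl)/\crn(\Gl)\in\Omega(k)$.

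The main obstacle is conceptual rather than computational: for $k$-edge-crossing drawings the relevant budget is the \emph{number} of crossed edges, not the number of crossings per edge or per vertex, so the per-edge counts and the $R$-restriction trick of the earlier sections (which hinges on the yellow edge being crossed only $O(k)$ times) are unavailable, and one must argue directly from $|X|\le k$ together with the fact that a single crossed edge confines every subdivision it touches to one $\tfrac1w$-slice of its width-$w$ bundle. This dictates two competing design choices: the red and gray con-graphs must be wide --- $\Theta(k)$ and $\Theta(\ell k)$ respectively --- so that the at most $k$ crossed non-blue edges cannot reach a constant fraction of $\Kuras$, while the blue con-graphs must be taken as wide as the crossed-edge budget permits, namely $\lfloor k/2\rfloor$, since in the witnessing drawing $2\lfloor k/2\rfloor$ blue edges already carry every crossing. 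It is exactly this cap on the blue width that makes $\crkecr(\Gl)$ grow like $k^2$ and no faster, matching the $\O(mk)$-type upper bound needed for $\crr{\kecr}\in\Theta(k)$ via \Cref{thm:ub}(b); verifying simplicity of $\Gl$ and routability of the interlocking-comb blue--blue drawing is routine.
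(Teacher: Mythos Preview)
Your proof is correct and follows the framework, but the lower-bound argument is genuinely different from the paper's. The paper first proves a structural lemma: in any $k$-edge-crossing drawing at least two con-graphs must be \emph{cut} (every pole-to-pole path crossed), and since only the yellow and the two blue con-graphs have width below $k$, at least $\tfrac{k}{2}+1$ of the crossed edges are forced to be blue or yellow, leaving at most $\tfrac{k}{2}-1$ crossed red or gray edges; only then does it bound the coverage of the subdivisions. You bypass this cut lemma entirely: by inflating the red width to $3k$ and the gray width to $3\ell k$, you can charge the full crossed-edge budget $|X|\le k$ directly against each colour class via the observation that crossings on $S_\C$ can cover at most a $\frac{|X\cap E(S_\C)|}{\width{S_\C}}$-fraction of $\Kuras$, yielding the two $\tfrac13$ bounds by a plain union bound.

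Your route is shorter and more elementary; the paper's route is tighter in its constants (red width $k$ rather than $3k$) and additionally reveals \emph{where} the crossed edges must live, which is a mildly stronger structural conclusion even though it is not needed for the asymptotic ratio. Both arrive at the same $\Omega(k^2)$ blue--blue count and the same $\Theta(k)$ ratio. A minor cosmetic point: your construction degenerates at $k=1$ (blue width $0$), but so does the paper's (blue width $\tfrac12$); the statement is only meaningful for $k\ge 2$, and this should perhaps be made explicit.
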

\begin{proof}
	Let \Gl be a framework graph, where the red con-graph is a $\bundle{\kecrParam}{}$, the gray con-graphs are $\bundle{\ell k}{}$s, and blue con-graphs are $\bundle{\frac{k}{2}}{}$s, so $n\in \Theta(\ell k)$.
    A red-yellow standard drawing shows that $\crkecr(\Gl) \in \O(k)$.
    A blue-blue standard drawing shows that \Gl is $k$-edge-crossing, cf.\ \cref{fig:fp_blue_blue_cr}.
   
    Consider a drawing \D of \Gl with at most $k$ crossed edges.
    A con-graph is \emph{cut} if all paths between its poles are crossed.
    There are at least two cut con-graphs in \D, as otherwise, it would contain non-covered \Kuras-subdivisions.
    Each cut con-graph contains at least its width many crossed edges.
    Only yellow and blue con-graphs have widths less than $k$, so only they can be cut without exceeding $k$ crossed edges.
    Therefore, at least $\frac{k}{2} + 1$ crossed edges are blue or yellow. Consequently, at most $\frac{k}{2}-1$ crossed edges are red or gray.
    Any crossing of these red or gray edges with the unique yellow edge can cover at most $\frac{1}{k}$ \otks, so they together cover at most $(\frac{k}{2}-1)\frac{1}{k}\leq\frac12$ \otks.
    The remaining at least $\frac12$ \otks have to be covered by other crossings, but they  cover at most $\O(\frac{1}{k^2})$ \otks each. This yields $\crkecr(\Gl) \in \Omega(k^2)$. 
\end{proof}

In a non-\kecr drawing there are at least $\frac{k}{2}$ crossings as there are at least $k+1$ crossing edges.
In a \kecr drawing there are less than $\frac{k^2}{2}\in \O(mk)$ crossings as each of the at most $k$ crossing edges can cross each other crossing edge, leading to $\Rs[\kecr]\in \O(\frac{k^2}{k})=\O(k)$ in the context of \Cref{thm:ub}(b); together with \cref{thm:ecr:lb} we yield:
\begin{corollary}
	\label{thm:ecr:crr}
	The $k$-edge crossing crossing ratio $\crr{\kecr}$ is in $\Theta(k)$ and \ksharp.
\end{corollary}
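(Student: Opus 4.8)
The plan is to derive the corollary by pairing the lower bound of \Cref{thm:ecr:lb} with a matching upper bound obtained through the dense/sparse decomposition of \Cref{sec:prelim} and \Cref{thm:ub}(b), and then to verify \ksharpness by re-inspecting the red-yellow standard drawing from \Cref{thm:ecr:lb}. The lower bound is immediate: for every $\ell\geq1$, \Cref{thm:ecr:lb} supplies a $k$-edge-crossing graph $\Gl$ on $n\in\Theta(\ell k)$ vertices with $\crkecr(\Gl)\in\Omega(k^2)$ and $\crn(\Gl)\in\O(k)$, so $\crkecr(\Gl)/\crn(\Gl)\in\Omega(k)$, and taking the supremum over the family gives $\crr{\kecr}\in\Omega(k)$.

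For the upper bound I would first note that a $k$-edge-crossing drawing has at most $k$ crossed edges, hence fewer than $k^2/2$ crossings; since the concept is nontrivial only for $k\leq m$ (for $k>m$ every drawing is $k$-edge-crossing and the ratio equals $1$), this gives $\crkecr(G)\in\O(mk)$, so \Cref{thm:ub}(b) reduces the task to bounding $\Rs[\kecr]$ on connected sparse graphs. To bound $\Rs[\kecr]$ I would split on the value of $\crn(G)$: if $\crn(G)<k/2$, then a crossing-minimal drawing of $G$ has fewer than $k/2$ crossings and thus fewer than $k$ crossed edges, so it is already a $k$-edge-crossing drawing and $\crkecr(G)=\crn(G)$; if $\crn(G)\geq k/2$, then $\crkecr(G)<k^2/2$ while $\crn(G)\geq k/2$, so $\crkecr(G)/\crn(G)<k$. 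Either way the ratio is below $k$, so $\Rs[\kecr]\in\O(k)$, and \Cref{thm:ub}(b) yields $\crr{\kecr}\in\O(\Rs[\kecr]+k)=\O(k)$. Together with the lower bound this proves $\crr{\kecr}\in\Theta(k)$.

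It remains to establish \ksharpness, i.e.\ $\crr[^{1}(n)]{\kecr}\in\Omega(\crr{\kecr})=\Omega(k)$. I would reuse the family $\{\Gl\}$ of \Cref{thm:ecr:lb}: in its red-yellow standard drawing the only crossings occur between the single yellow edge and the red con-graph, which is a $\bundle{\kecrParam}{}$, and the yellow edge crosses each of its $k$ internally-disjoint $s$-$t$-paths exactly once. Hence that drawing has $k$ crossings and exactly $k+1$ crossed edges (the yellow edge together with one edge of each red path), so it is a $(k+1)$-edge-crossing drawing with $\O(k)$ crossings, witnessing $\crkecr[(k+1)](\Gl)\in\O(k)$. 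Since $\crkecr(\Gl)\in\Omega(k^2)$ by \Cref{thm:ecr:lb}, the quotient $\crkecr(\Gl)/\crkecr[(k+1)](\Gl)$ is $\Omega(k)$, whence $\crr[^{1}(n)]{\kecr}\in\Omega(k)$ and $\crr{\kecr}$ is \ksharp.

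I expect no real obstacle here; the argument is essentially bookkeeping over \Cref{thm:ecr:lb} and \Cref{thm:ub}(b). The two spots requiring a little care are the sparse bound, where one must not drop the small-$\crn(G)$ case in which the crossing-minimal drawing is automatically $k$-edge-crossing (so the ratio there is merely $1$), and, for \ksharpness, the verification that the red-yellow standard drawing is genuinely a $(k+1)$-edge-crossing drawing — which amounts to checking that the yellow edge plus one crossed edge on each of the $k$ red paths makes exactly $k+1$ crossed edges.
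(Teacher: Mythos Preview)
Your proposal is correct and follows essentially the same route as the paper: the upper bound combines $\crkecr(G)<k^2/2\in\O(mk)$ with the observation that any drawing with fewer than $k/2$ crossings is already $k$-edge-crossing (the paper phrases this contrapositively as ``a non-\kecr drawing has at least $k/2$ crossings''), giving $\Rs[\kecr]\in\O(k)$ via \Cref{thm:ub}(b). Your explicit verification of \ksharpness via the $k+1$ crossed edges in the red-yellow standard drawing is exactly what the paper intends by its blanket remark that the $\crn$ upper bounds are attained by $\bp[(k+1)]$-drawings.
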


\subsection{\texorpdfstring{$\boldmath\text{}k$}{k}-Gap-Planar \texorpdfstring{: $\boldmath\crr{\kgappl}$}{}}
\label{sec:kgappl}
Let $E(x)$ denote the two edges involved in a crossing $x$.
A drawing is \emph{$k$-gap-planar} (shorthand ``\kgappl'') if there exists a mapping from each crossing $x$ to one of $E(x)$, such that any edge $e\in E(G)$ is mapped to at most $k$ times overall~\cite{bae2018gap,bae2018gapA}. The intuition (and reason for the name) is that we may want to visualize crossings by drawing a ``gap'' on one of the two involved edge curves, but each edge curve may not attain too many gaps. 
\begin{figure}
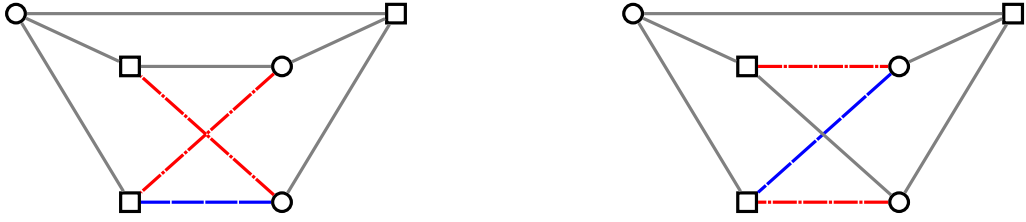

	\centering
	\begin{subfigure}[b]{0.42\textwidth}
        \frameDrawing{blueE}{gray}{redE}{redE}{}{}
	\end{subfigure}
 \hspace{\fill}
	\begin{subfigure}[b]{0.42\textwidth}
        \frameDrawing{redE}{redE}{blueE}{grayE}{}{}
	\end{subfigure}
	\caption{Drawings of the differently colored frame used in \cref{thm:kgap:lb,thm:kapex:lb,thm:skewk:lb} for the $k$-gap-planar, $k$-apex-planar, and skewness-$k$ crossing ratios, respectively.} 
	\label{fig:alt_frame}
\end{figure}

\begin{theorem}\label{thm:kgap:lb}
	For every $\ell \geq 5$, there exists a $k$-gap-planar graph \Gl with $n \in \Theta(\ell k)$ vertices such that $\crbp[\kgappl](\Gl) \in \Omega(nk)$ and $\crn(\Gl) \in \O(k^2)$. Thus $\crr{\kgappl} \in \Omega(\frac{n}{k})$.
\end{theorem}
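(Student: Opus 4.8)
The plan is to instantiate the framework of \Cref{sec:framework} with the alternative frame of \Cref{fig:alt_frame}: a $K_{3,3}$ with two non-adjacent \emph{red} connections, one \emph{blue} connection that is adjacent to both of them, and the remaining connections \emph{gray}, one of which (the \emph{partner} of blue) being non-adjacent to blue. For the con-graphs I would take the two red con-graphs to be bundles of width $\Theta(k)$ and constant height, and the blue con-graph as well as the gray con-graphs to be suitably wider bundles; the exact widths and heights are chosen so that $n\in\Theta(\ell k)$ (the blue and gray bundles dominating the vertex count) and so that the two standard drawings below attain the stated crossing numbers.

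First I would give the two standard drawings. A \emph{red--red} standard drawing routes the two red connections as the crossing pair of $F$; the two width-$\Theta(k)$ red bundles then cross each other (and nothing else crosses) in $\Theta(k^2)$ crossings, so $\crn(\Gl)\in\O(k^2)$. This drawing is intentionally not $k$-gap-planar: it has more crossings lying entirely among the red edges than $k$ times the number of red edges, so no valid gap-charging exists for it. A \emph{blue--gray} standard drawing (\Cref{fig:kgap_blue_gray_cr}) instead routes the blue connection and its partner gray connection as the crossing pair and lets the blue bundle sweep across the gray bundle; by spreading the blue paths evenly over the ``columns'' of the gray bundle, every crossing can be charged to the gray (resp.\ blue) edge it lies on so that each edge collects at most $k$ charges. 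Hence $\Gl\in\Gbp[\kgappl]$, and this drawing exhibits $\Theta(nk)$ crossings.

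For the lower bound I would fix an arbitrary $k$-gap-planar drawing $\D$ of $\Gl$ together with a valid gap-charging. As in the framework, every $K_{3,3}$-subdivision in $\Kuras$ must be covered, hence $\sum_x \tfrac{1}{w_1(x)\,w_2(x)} \ge 1$, where the sum runs over the crossings $x$ of $\D$ and $w_i(x)$ is the width of the con-graph containing the $i$-th edge of $x$. The replacement for the ``at most $k$ crossings per edge'' bound of the $k$-planar proof is the elementary fact that, for any edge set $S$, at most $k\lvert S\rvert$ crossings have \emph{both} of their edges in $S$, since each of them is charged inside $S$. Taking $S$ to be the set of red edges bounds the contribution of red--red crossings to the sum by $\O(k/w_{\mathrm{red}})$, which is below any prescribed $\varepsilon$ as soon as the red width is a large enough multiple of $k$ (this is the same condition that makes the cheap red--red drawing fail to be $k$-gap-planar); since red is adjacent to blue, red--blue crossings cannot cover any subdivision; and taking $S$ to be the set of gray edges bounds the combined contribution of red--gray and gray--gray crossings by $\O(1/\ell)$, which drops below $\varepsilon$ once $\ell$ exceeds a fixed constant --- this is where the hypothesis $\ell\ge 5$ comes from. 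Consequently the blue--gray crossings contribute at least a constant $c>0$ to the sum; since each contributes at most $\tfrac{1}{w_{\mathrm{blue}}\,w_{\mathrm{gray}}}$, there are at least $c\,w_{\mathrm{blue}}\,w_{\mathrm{gray}}$ of them, which by the choice of widths is $\Omega(nk)$. Thus $\crbp[\kgappl](\Gl)\in\Omega(nk)$, and together with $\crn(\Gl)\in\O(k^2)$ this gives $\crr{\kgappl}\ge\crbp[\kgappl](\Gl)/\crn(\Gl)\in\Omega(n/k)$.

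I expect the parameter tuning in the lower bound to be the main obstacle. In the $k$-planar case each edge is crossed at most $k$ times, so crossings between two con-graphs are bounded directly; here a single edge may be crossed arbitrarily often, and only the global charging bound ``$\le k\lvert S\rvert$ crossings inside any edge set $S$'' is available. Making this suffice forces several constraints to hold simultaneously: the red con-graphs must be a sufficiently large multiple of $k$ wide, the gray con-graphs must be wide enough that a gray--gray or red--gray crossing covers only an $\O(1/\ell)$-fraction of $\Kuras$, the blue con-graph (hence $n$) must be $\Theta(\ell k)$-sized, and the frame must be coloured so that no cheap colour combination (in particular red--blue) can help cover $\Kuras$ --- all the while keeping the red--red drawing at $\O(k^2)$ crossings and verifying that the blue--gray drawing really is $k$-gap-planar, i.e.\ that its grid-like crossing pattern admits an overload-free charging. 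Balancing these to obtain the clean threshold $\ell\ge 5$ is where most of the work lies.
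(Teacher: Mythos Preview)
Your overall plan---the recolored frame of \Cref{fig:alt_frame}, red--red for the $\O(k^2)$ upper bound, blue--gray for membership in $\Gbp[\kgappl]$, and the ``at most $k|S|$ crossings with both edges inside $S$'' charging bound for the lower bound---matches the paper exactly. The paper uses $(5k,2)$-bundles for red and blue and $(\ell k,5)$-bundles for gray.

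There is, however, a genuine gap in your lower-bound step. You claim that taking $S$ to be the gray edges bounds the contribution of \emph{red--gray} crossings. It does not: a red--gray crossing has only one of its two edges in $S$, so the inequality ``$\le k|S|$ crossings with \emph{both} endpoints in $S$'' says nothing about it. Trying to patch this by enlarging $S$ to $\text{red}\cup\text{gray}$ fails too: that set has $\Theta(\ell k)$ edges, so you only get $\le \Theta(\ell k^2)$ crossings, each covering $\frac{1}{w_{\mathrm{red}}\,w_{\mathrm{gray}}}=\Theta(\frac{1}{\ell k^2})$, which is a \emph{constant}---far too large to discard.

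The paper sidesteps this entirely by not trying to isolate the blue--gray crossings. After bounding the red--red contribution (at most $20k\cdot k$ crossings among $20k$ red edges, each covering $\frac{1}{25k^2}$, so at most $\tfrac{4}{5}$ of the \Kuras-subdivisions) and observing that blue is adjacent to both reds, \emph{every remaining covering crossing involves a gray edge}. Since every con-graph has width at least $5k$, any such crossing---red--gray, gray--gray, or blue--gray indifferently---covers at most $\O(\tfrac{1}{\ell k\cdot k})$ of the subdivisions. Hence $\Omega(\ell k^2)=\Omega(nk)$ of them are needed, and there is no reason to separate out red--gray or gray--gray at all. Your final inequality should read $w_{\min}\cdot w_{\mathrm{gray}}$, not $w_{\mathrm{blue}}\cdot w_{\mathrm{gray}}$; with $w_{\min}=\Theta(k)$ this already gives the claimed $\Omega(nk)$.
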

\begin{proof}
    Consider a coloring of the frame $F$ where a blue connection is adjacent to two independent red connections, and all other connections are gray, see \Cref{fig:alt_frame}.
    Let \Gl be a framework graph, where the blue and red con-graphs are $\bundle{5k}{}$s, and the gray con-graphs are $\bundle{\ell k}{5}$s, so $n\in \Theta(\ell k)$.
	A red-red standard drawing shows $\crn(\Gl) \in \O(k^2)$.
	A blue-gray standard drawing shows that \Gl is $k$-gap-planar, cf.\ \cref{fig:kgap_blue_gray_cr}.

    Let \D be a $k$-gap-planar drawing of \Gl.
    A red-red crossing covers at most $\frac{1}{(5k)^2}=\frac{1}{25k^2}$ \otks, but the $2 \cdot 2 \cdot 5k$ red edges can have at most $20k \cdot k=20k^2$ crossings between each other.
    As the blue connection is adjacent to the red connections, at least $1-\frac{20 k^2}{25 k^2}=\frac{1}{5}$ \otks have to be resolved by crossings involving a gray edge. Since such a crossing can cover at most $\O(\frac{1}{\ell k\cdot k})$ \otks, we have 
    $\crbp[\kgappl](\Gl) \in \Omega(\ell k^2)=\Omega(nk)$.
\end{proof}

A non-$k$-gap-planar drawing has over $k^2$ crossings, since there is an
edge that has at least 
$k+1$ crossings mapped to it, and all the corresponding crossing edges have at least $k$ crossings mapped to them.
In a $k$-gap-planar drawing, there can be at most $mk$ crossings as at most $k$ crossings can be mapped to each of the $m$ edges, leading to $\Rs[\kgappl]\in \O(\frac{mk}{k^2})=\O(\frac{n}{k})$ in the context of \Cref{thm:ub}(b); together with \cref{thm:kgap:lb} we yield:
\begin{corollary}
	\label{thm:kgappl:crr}
	The $k$-gap-planar crossing ratio $\crr{\kgappl}$ is in $\Theta(\frac{n}{k})$.%
\end{corollary}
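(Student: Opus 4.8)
The target statement is that $\crr{\kgappl} \in \Theta(n/k)$, and the plan is the standard two-sided argument. The lower bound $\crr{\kgappl} \in \Omega(n/k)$ is already delivered by the framework construction in \Cref{thm:kgap:lb}, so only the matching upper bound $\crr{\kgappl} \in \O(n/k)$ needs work, and I would establish it through the dense/sparse dichotomy of \Cref{thm:ub}.

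First I would record the crude global bound $\crbp[\kgappl](G) \le mk$ for every $k$-gap-planar graph $G$: a $k$-gap-planar drawing charges each of its crossings to one of the two edges involved in it, and each of the $m$ edges is charged at most $k$ times, so the drawing has at most $mk$ crossings. This meets the hypothesis of \Cref{thm:ub}(b), so it suffices to show $\Rs[\kgappl] \in \O(n/k)$. The key lemma I would isolate here is that a drawing which is \emph{not} $k$-gap-planar must contain $\Omega(k^2)$ crossings. Granting this, fix a connected sparse $k$-gap-planar graph $G$, so $m \in \Theta(n)$, and take one of its crossing-minimal drawings, achieving $\crn(G)$ crossings: if $\crn(G) \in \O(k^2)$ then that drawing is itself $k$-gap-planar, hence $\crbp[\kgappl](G) = \crn(G)$ and the ratio equals $1$; if instead $\crn(G) \in \Omega(k^2)$, then $\crbp[\kgappl](G)/\crn(G) \le mk / \Omega(k^2) = \O(m/k) = \O(n/k)$. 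Either way $\Rs[\kgappl] \in \O(n/k)$, and feeding this into \Cref{thm:ub}(b) gives $\crr{\kgappl} \in \O(n/k)$, which together with \Cref{thm:kgap:lb} yields the claim.

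The one step that needs care is the key lemma, and the cleanest way to prove it is the orientation characterization of $k$-gap-planarity: a drawing is $k$-gap-planar iff the multigraph whose vertices are the edges of the drawing and whose (multi-)edges are the crossings admits an orientation with all in-degrees at most $k$, equivalently every edge set $S$ spans at most $k|S|$ crossings. If the drawing is not $k$-gap-planar, pick an inclusion-minimal violating set $S$; deleting any edge $f \in S$ drops the number of crossings spanned by $S$ to at most $k(|S|-1)$, so $f$ lies on more than $k$ of the crossings within $S$. Hence every edge of $S$ meets at least $k+1$ crossings inside $S$, which forces $|S| \ge k+2$ and shows that $S$ alone spans $\Omega(k^2)$ crossings, as required. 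I would also note that, unlike the other parametrized concepts treated in \Cref{sec:use}, this argument does \emph{not} yield \ksharpness: the $\O(k^2)$-crossing red--red standard drawing behind $\crn(\Gl) \in \O(k^2)$ in \Cref{thm:kgap:lb} is far from being $(k{+}1)$-gap-planar (some red edge carries $\Theta(k)$ crossings there), so the reduction to the consecutive parameters $k$ and $k{+}1$ breaks down, which is exactly why the corollary only asserts $\Theta(n/k)$ rather than additionally claiming \ksharpness.
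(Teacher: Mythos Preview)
Your argument is correct and matches the paper's: lower bound from \Cref{thm:kgap:lb}, upper bound via \Cref{thm:ub}(b) using $\crbp[\kgappl](G)\le mk$ together with the fact that a non-$k$-gap-planar drawing has more than $k^2$ crossings, giving $\Rs[\kgappl]\in\O(n/k)$. Your justification of that last fact via the orientation/Hall characterization and a minimal violating set $S$ is a clean alternative to the paper's terser optimal-mapping argument; just note that your step ``$|S|\ge k+2$'' (and equally the paper's count of $k+1$ distinct neighbouring edges) tacitly uses that the crossing-minimal drawing under consideration is simple, which is fine since $\crn$ is always attained by a simple drawing.
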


\subsection{\texorpdfstring{$\boldmath\text{}k$}{k}-Apex-Planar and Skewness-\texorpdfstring{$\boldmath\text{}k$}{k}\texorpdfstring{: $\boldmath\crr{\kapex}$ and $\boldmath\crr{\skewk}$}{}}
\label{sec:kapex}
A drawing is \emph{$k$-apex-planar} (\emph{skewness-$k$})---with shorthand ``$k$-apex'' (``skew-$k$'')---if there are $k$ vertices, called \emph{apex vertices} (edges, called \emph{skewness edges}, respectively), whose removal yields a planar subdrawing.
Let \bp be either of the two concepts.
To prove crossing ratio \emph{upper} bounds, \cite{BPS21} furthermore needs to assume that a crossing minimal \bp-drawing is simple (which is, unfortunately, not true in general).
We do not want to impose the simplicity requirement, and for our bounds it suffices:
\begin{lemma}\label{lem:simple}
    Let $\bp\in\{\kapex, \skewk\}$.
    All \bp-graphs allow simple \bp-drawings. Thus, the (non-simple) \bp-crossing number is always upper bounded by a well-defined simple \bp-crossing number.
\end{lemma}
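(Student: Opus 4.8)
The plan is to reduce the statement to a purely combinatorial one and then build a simple witnessing drawing from scratch. If $G$ admits a \kapex-drawing with apex set $A$, then by definition deleting the $k$ vertices of $A$ leaves a crossing-free subdrawing, so $G-A$ is planar; conversely any graph possessing a vertex set $A$ with $|A|=k$ and $G-A$ planar is a \kapex-graph (draw $G-A$ planarly, then add $A$ and its incident edges arbitrarily). The analogous equivalence holds for \skewk with an edge set $S$ of size $k$ and $G-S$ planar. So it suffices to show: whenever $G-A$ (resp.\ $G-S$) is planar, $G$ has a \emph{simple} drawing in which the edges of $G-A$ (resp.\ of $G-S$) form a crossing-free subdrawing.

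First I would fix a crossing-free drawing of the planar graph $G-A$ (resp.\ $G-S$) and reinsert the deleted objects one at a time, never rerouting or deleting an edge that is already drawn. For \skewk this is direct: process the edges of $S$ in any order, and insert the current edge $\{x,y\}$ into the drawing built so far by a single-edge insertion, i.e.\ route it as a simple curve that crosses every other edge at most once and crosses no edge incident to $x$ or $y$; such a curve always exists (for instance as a shortest curve through the arrangement of the current drawing, or by cleaning up self-crossings and lenses from an arbitrary routing). Since this does not touch any already-present edge, the subdrawing on $E(G)\setminus S$ stays crossing-free and the whole drawing stays simple; after $|S|=k$ steps we have a simple \skewk-drawing of $G$. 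For \kapex we proceed identically: to reinsert an apex vertex $a$ we place it in an arbitrary face and then add its incident edges one by one as single-edge insertions --- first the edges from $a$ to $V(G)\setminus A$, and each edge between two apex vertices only once both of its endpoints have been placed. Again no edge of $G-A$ is ever rerouted, so that subdrawing remains crossing-free, and after handling all $k$ apex vertices we obtain a simple \kapex-drawing.

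Once every \bp-graph is known to admit a simple \bp-drawing, the simple \bp-crossing number is well defined (and finite) on all of $\Gbpnn$, and since it is an infimum over a sub-collection of all \bp-drawings it is at least the (non-simple) \bp-crossing number, which is exactly the claimed bound. The step I expect to be the main obstacle is justifying the single-edge insertion: one must argue that a single edge can be routed into an arbitrary simple drawing (not necessarily a planar one) so that the result is again simple --- in particular crossing each other edge at most once --- and, when reinserting an apex vertex, that its incident edges can be added without creating crossings among themselves. An alternative that avoids insertion is to start from the given \bp-drawing \D and repeatedly apply the standard uncrossing moves (deleting a self-crossing, uncrossing two adjacent edges, or removing a lens of a doubly-crossing pair) until the drawing is simple; the crucial point there is that every crossing of a \bp-drawing involves an edge incident to $A$ (resp.\ an edge of $S$), so each move can be realised by rerouting only such an edge and hence preserves the crossing-free part --- but one must still verify that this restricted sequence of moves terminates.
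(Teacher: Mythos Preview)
Your reduction to the combinatorial characterisation is fine, and the iterative edge-insertion strategy you outline can indeed be made to work, but the paper takes a much shorter route that sidesteps exactly the obstacles you flag. Instead of inserting edges one by one and having to argue that each insertion can be done simply (crossing every present edge at most once and no adjacent edge), the paper simply invokes F\'ary's theorem: draw the planar part $G-A$ (respectively $G-S$) with straight-line segments, keeping the apex vertices as isolated points, and then draw every removed edge as a straight segment as well. After an $\varepsilon$-perturbation into general position, a straight-line drawing is automatically simple---no self-crossings, no pair of segments meeting twice, no adjacent segments crossing, no three segments concurrent---and the planar part is still crossing-free. This buys the result in two lines without any insertion lemma or uncrossing argument. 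Your approach has the advantage of not needing F\'ary and of being closer in spirit to standard ``clean-up'' arguments, but it does require you to actually prove the single-edge-insertion step (or the termination of restricted uncrossing), which you correctly identify as the non-trivial part; the paper's straight-line trick eliminates that burden entirely.
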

\begin{proof}
    Let $R$ be the edges whose removal leaves a planar subgraph $G'$ (we retain apex vertices, all of which now have degree~0).
    Draw $G'$ planar with straight lines. 
    Now add all edges $R$ as straight lines. The resulting straight-line drawing (possibly after $\varepsilon$-perturbations to avoid collinearities) is a simple $k$-apex-planar (or skewness-$k$) drawing by definition.
\end{proof}

\begin{theorem}
	\label{thm:kapex:lb}
	For every $\ell \geq 1$, there exists a $k$-apex-planar graph \Gl with $n \in \Theta(\ell k)$ vertices such that $\crbp[\kapex](\Gl) \in \Omega(n^2)$ and $\crn(\Gl) \in \O(k)$. Thus $\crr{\kapex} \in \Omega(\frac{n^2}{k})$.
\end{theorem}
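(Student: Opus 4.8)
The plan is to use the framework with the three-colour frame of \Cref{fig:alt_frame} (exactly as in \Cref{thm:kgap:lb}): one \emph{blue} connection adjacent to two independent \emph{red} connections, and the remaining six connections \emph{gray}. I would let each gray con-graph be a plain bundle of width $\ell k$ and constant height, and each red con-graph be a single edge. The blue con-graph is the only nontrivial ingredient: it consists of $\ell k$ pairwise edge-disjoint $s$-$t$-paths split into $k$ groups of $\ell$ paths; the $\ell$ paths of group $g$ all pass through one private hub vertex $a_g$ (each via two edges private to its own path), and a private copy of $K_5$ is glued on with $a_g$ as one of its five vertices and the other four off all paths; the $k$ gadgets are vertex-disjoint. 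Declaring $P_{\mathrm{blue}}$ to be these $\ell k$ paths gives $\width{S_{\mathrm{blue}}}=\ell k$ and $n\in\Theta(\ell k)$.

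For membership I would exhibit a blue-gray standard drawing: draw the frame with exactly one crossing, between blue and a gray non-adjacent to it, route every blue-gray crossing onto one of the two blue edges at the relevant hub $a_g$, and draw each glued $K_5$ with its unique crossing on an edge incident to $a_g$; then deleting $a_1,\dots,a_k$ kills every blue-gray crossing and turns each $K_5$ into a planar $K_4$, so the subdrawing is crossing-free and $\Gl$ is $k$-apex-planar. For the ordinary crossing number I would instead use a red-red standard drawing: draw the frame with its single crossing between the two non-adjacent red edges (so blue and all grays are pairwise disjoint), whence the two red edges cross once, the grays stay crossing-free, and the blue con-graph carries only its $k$ forced $K_5$-crossings, giving $\crn(\Gl)\le k+1\in\O(k)$. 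Note that this cheap drawing is \emph{not} $k$-apex: planarising the $k$ vertex-disjoint $K_5$'s already costs $k$ deletions and the red-red crossing needs at least one more.

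The lower bound is the heart of the argument, and I would run it on an arbitrary $k$-apex-planar drawing \D of \Gl with apex set $A$, $|A|=k$. Since $K_5$ is non-planar, in \D each of the $k$ glued copies has two of its own edges crossing, and removing that crossing forces deleting one of that copy's five vertices; as the gadgets are vertex-disjoint, $A$ consists of exactly one vertex per gadget and nothing else. Moreover, if the chosen vertex in some gadget were not its hub, the blue paths through that hub would have no edge incident to $A$, so they could never be crossed and the Kuratowski subdivisions using them would stay uncovered; hence $A=\{a_1,\dots,a_k\}$. Now $A$ meets no gray and no red con-graph, so every gray-gray, gray-red, red-red and gray-(non-hub-blue) crossing would survive after deleting $A$ and therefore cannot occur in \D; and since blue-blue and blue-red crossings never cover a Kuratowski subdivision, every covering crossing must be a blue-gray crossing on a blue edge at some $a_g$. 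Such a blue edge lies on exactly one of the $\ell k$ blue paths and its gray partner on exactly one of the $\ell k$ gray paths, so the crossing covers at most $\frac{1}{(\ell k)^2}$ \otks. As every Kuratowski subdivision must be covered (this needs no simplicity assumption), \D has $\Omega((\ell k)^2)=\Omega(n^2)$ crossings; hence $\crbp[\kapex](\Gl)\in\Omega(n^2)$ and, together with $\crn(\Gl)\in\O(k)$, $\crr{\kapex}\in\Omega(n^2/k)$.

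The main obstacle I anticipate is engineering the blue gadget so that it discharges all of its roles at once: being non-planar (so it is forced into a crossing and costs an apex vertex in \emph{every} drawing), being $1$-apex via a hub vertex placed so that a whole group's blue-gray crossings can be funnelled onto edges at it (so the membership drawing genuinely is $k$-apex), carrying $\ell$ edge-disjoint $s$-$t$-paths without pulling $\width{S_{\mathrm{blue}}}$ below $\ell k$ and with every blue edge used for covering lying on a single path (so the $\frac{1}{(\ell k)^2}$ bound is tight), and opening no cheaper route to covering the frame in a $k$-apex drawing. The combinatorial kernel — that a $k$-apex drawing must burn its entire apex budget on the $k$ gadgets, which is precisely what removes the red con-graphs and all gray-gray options from the adversary's reach — is where the care is needed; the standard-drawing constructions for membership and for $\crn(\Gl)$ are routine.
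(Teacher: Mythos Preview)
Your proposal is correct and matches the paper's approach almost exactly: same frame colouring, same con-graphs (single-edge reds, $(\ell k,2)$-bundle grays, a blue con-graph of $k$ hubs each carrying $\ell$ paths and a pendant $K_5$), and the same standard drawings for membership and for $\crn(\Gl)\le k+1$. The paper's lower-bound argument is slightly leaner than yours: it does not pin $A$ down to the hubs but only uses $A\subset V(S_{\mathrm{blue}})$, from which every crossing in $\D$ involves a blue edge and hence every covering crossing is blue--gray, giving the $\tfrac{1}{(\ell k)^2}$ bound directly. (Your intermediate claim that blue paths through a non-apex hub ``could never be crossed'' is not literally true---they can still be crossed by edges incident to $A$---but your actual conclusion, that the Kuratowski subdivisions using them stay uncovered, is correct, since no edge on \emph{any} Kuratowski path of such a subdivision is incident to $A$.)
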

\begin{proof}
    Consider a coloring of the frame $F$ where a blue connection is adjacent to two independent red connections, and all other connections are gray, see \Cref{fig:alt_frame}.
    Let \Gl be a framework graph, where
    the red con-graphs are single edges, the gray con-graphs are $\bundle{\ell k}{2}$s, but the blue con-graph is more involved:
    Start with a \bundle{k}{} $B$ and let $A$ be its $k$ degree-$2$ vertices;
    replace every edge of $B$ with a \bundle{\ell}{};
    finally, for each $v\in A$, add four new vertices that form a $K_5$ together with $v$.
    This yields $n\in \Theta(\ell k)$. 
    For standard drawings, we draw the blue con-graph without the $K_5$s planarly and then, for each $v\in A$, we draw its $K_5$ in a small neighborhood of $v$ with one crossing that is adjacent to $v$.
    In such a drawing, there are $k$ crossings and we have to remove $k$ vertices (for example~$A$) to obtain a planar subdrawing.
    A red-red standard drawing shows that $\crn(\Gl) \leq k+1$.
    A blue-gray standard drawing with all crossings incident to a vertex of $A$ shows that $\Gl$ is $k$-apex-planar, cf.\ \Cref{fig:kapex_blue_gray_cr}.
   
    Consider a $k$-apex-planar drawing \D of \Gl.
    Each of the $k$ disjoint $K_5$ subgraphs in $S_\mathrm{blue}$ needs to be covered by a crossing.
    Thus, all $k$ apex nodes are in $S_\mathrm{blue}$ and so all \otks are covered by blue-gray crossings.
    As each such crossing can cover at most $\frac{1}{(\ell k)^2}$ \otks, we have
    $\crbp[\kapex](\Gl) \in \Omega((\ell k)^2)=\Omega(n^2)$.
\end{proof}

Any non-\kapex-planar drawing has at least $k+1$ crossings; by \Cref{lem:simple} we have $\Rs[\kapex]\in \O(\frac{m^2}{k})=\O(\frac{n^2}{k})$ in the context of \Cref{thm:ub}(a).
With \cref{thm:kapex:lb} we yield\rlap{:}
\begin{corollary}\label{thm:kapex:crr}
    The $k$-apex-planar crossing ratio $\crr{\kapex}$ is in $\Theta(\frac{n^2}{k})$ and \ksharp.
\end{corollary}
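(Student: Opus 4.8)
The plan is to assemble the $\Theta$-bound from three ingredients: the $\Omega(n^2/k)$ lower bound is exactly \Cref{thm:kapex:lb}; the matching $\O(n^2/k)$ upper bound follows from \Cref{lem:simple} together with \Cref{thm:ub}(a) once $\Rs[\kapex]$ is estimated; and \ksharpness is read off from the very construction used in \Cref{thm:kapex:lb}.

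For the upper bound, the core step is to show $\Rs[\kapex]\in\O(n^2/k)$. Fix a connected sparse graph $G\in\Gbps[\kapex]$, so $m\in\Theta(n)$. By \Cref{lem:simple}, $G$ admits a simple $k$-apex-planar drawing; since any simple drawing has only $\O(m^2)=\O(n^2)$ crossings, this gives $\crkapex(G)\in\O(n^2)$. For the denominator I would take a crossing-minimal drawing $\mathcal D^*$ of $G$ and split into two cases. If $\mathcal D^*$ happens to be $k$-apex-planar, then $\crkapex(G)=\crn(G)$ and the ratio is $1$. Otherwise, recalling that from any drawing with $t$ crossings one obtains a planar subdrawing by deleting at most $t$ vertices (greedily, one endpoint per still-present crossing), the fact that $\mathcal D^*$ is not $k$-apex-planar forces it to have at least $k+1$ crossings, so $\crn(G)\ge k+1$ and hence $\crkapex(G)/\crn(G)\in\O(n^2/k)$. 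In both cases the ratio lies in $\O(n^2/k)$, so $\Rs[\kapex]\in\O(n^2/k)$. Since every $k$-apex-planar graph satisfies $\crkapex(G)\in\O(m^2)$ (again \Cref{lem:simple}), \Cref{thm:ub}(a) now yields $\crr{\kapex}\in\O(\Rs[\kapex]+n)=\O(n^2/k)$; the degenerate range $k\ge n$, where every $n$-vertex graph is $k$-apex-planar and the ratio is $1$, is irrelevant. Together with \Cref{thm:kapex:lb} this gives $\crr{\kapex}\in\Theta(n^2/k)$.

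For \ksharpness I would return to the red-red standard drawing of the graph $\Gl$ from \Cref{thm:kapex:lb}. It has exactly $k+1$ crossings: one red-red crossing of the frame, plus one crossing inside each of the $k$ vertex-disjoint copies of $K_5$ attached to the vertices of $A$, drawn so that this crossing is incident to the corresponding vertex of $A$. Deleting the $k$ vertices of $A$ and one endpoint of a crossed red edge removes all crossings, so this drawing is $(k{+}1)$-apex-planar, whence $\crkapex[k+1](\Gl)\le k+1\in\O(k)$. As $\crkapex(\Gl)\in\Omega(n^2)$ by \Cref{thm:kapex:lb}, we obtain $\crr[^{1}(n)]{\kapex}\ge\crkapex(\Gl)/\crkapex[k+1](\Gl)\in\Omega(n^2/k)=\Omega(\crr{\kapex})$, i.e.\ \ksharpness holds.

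The only part that is not completely mechanical is the bound on $\Rs[\kapex]$: the obvious estimate ``$\crkapex(G)\in\O(m^2)$ and $\crn(G)\ge 1$'' only yields $\O(n^2)$, and the graphs one must not overlook are the $k$-apex-planar graphs whose ordinary crossing number is below $k+1$. Observing that those already have a $k$-apex-planar crossing-minimal drawing -- so that their ratio is $1$, not $\O(n^2)$ -- is precisely what shaves off the factor $k$ and makes the upper bound meet the lower bound from \Cref{thm:kapex:lb}.
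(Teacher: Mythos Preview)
Your proof is correct and follows the paper's approach: the paper's one-line justification ``Any non-$k$-apex-planar drawing has at least $k+1$ crossings; by \Cref{lem:simple} we have $\Rs[\kapex]\in \O(m^2/k)=\O(n^2/k)$'' is exactly the argument you spell out with the case split on whether the crossing-minimal drawing is already $k$-apex-planar, and your \ksharpness verification via the red-red standard drawing being $(k{+}1)$-apex-planar is precisely what the paper alludes to in \Cref{sec:prelim} (``our upper bounds on the normal crossing numbers are already attained by $\bp[(k+1)]$-drawings''). You have simply made explicit what the paper leaves implicit.
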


\begin{theorem}
	\label{thm:skewk:lb}
	For every $\ell > k$, there exists a skewness-$k$ graph \Gl with $n \in \Theta(\ell k)$ vertices such that $\crbp[\skewk](\Gl) \in \Omega(nk)$ and $\crn(\Gl) \in \O(k)$. Thus $\crr{\skewk} \in \Omega(n)$.
\end{theorem}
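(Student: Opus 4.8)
The plan is to follow the $k$-apex-planar argument of \Cref{thm:kapex:lb} as closely as possible, reusing the frame of \Cref{fig:alt_frame}: a blue connection adjacent to two independent red connections, with the remaining six connections gray. I would make the gray con-graphs $\bundle{\ell k}{2}$s and build the blue con-graph from roughly $k$ internally-vertex-disjoint $s$-$t$-paths, decorated (e.g.\ with small pendant $K_5$-type gadgets, one per path, or with a single more global gadget) so that $S_{\mathrm{blue}}$ has skewness exactly $k$ while still admitting an $s$-$t$ edge cut of size~$k$; the red con-graphs are chosen only as wide as necessary. A red-red standard drawing then gives $\crn(\Gl)\in\O(k)$ (the decoration's $\Theta(k)$ crossings plus one red-red crossing), and a blue-gray standard drawing in which the decoration is drawn with one crossing per gadget exhibits a skewness-$k$ drawing of $\Gl$, so $\Gl$ is a skewness-$k$ graph with $n\in\Theta(\ell k)$ vertices.

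For the lower bound I would fix an arbitrary skewness-$k$ drawing $\D$ of $\Gl$ with skewness-edge set $R$, $|R|=k$. Since $\D-R$ is crossing-free, $\Gl-R$ is planar. Once the $k$ obstructions in $S_{\mathrm{blue}}$ have each claimed one edge of $R$, no further budget remains to sever any of the thick red or gray con-graphs; and since the frame with the blue connection removed is the planar graph $K_{3,3}-e$, the only way $\Gl-R$ can be free of a full-frame $K_{3,3}$-subdivision is that $S_{\mathrm{blue}}-R$ has no $s$-$t$-path. Hence $R\subseteq E(S_{\mathrm{blue}})$ is a (necessarily minimum, size-$k$) $s$-$t$ cut of $S_{\mathrm{blue}}$, which lets us take $P_{\mathrm{blue}}$ to be $k$ edge-disjoint $s$-$t$-paths saturating it, so each edge of $R$ lies on exactly one path of $P_{\mathrm{blue}}$. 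Now every crossing of $\D$ uses an edge of $R$, i.e.\ a blue edge, and blue is non-adjacent only to the four gray connections; thus every $K\in\Kuras$ is covered by a crossing between an $R$-edge (on one of the $k$ blue paths) and a gray edge (on one of the $\ell k$ gray paths), so such a crossing covers at most $\tfrac{1}{k\cdot\ell k}$ of the subdivisions in \Kuras. Covering all of \Kuras therefore forces $\Omega(k\cdot\ell k)=\Omega\bigl((\ell k)\,k\bigr)=\Omega(nk)$ crossings in $\D$, giving $\crbp[\skewk](\Gl)\in\Omega(nk)$ and hence $\crr{\skewk}\in\Omega(n)$; \Cref{thm:ub}(a) together with \Cref{lem:simple} then supplies the matching upper bound.

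I expect the main obstacle to be the design of $S_{\mathrm{blue}}$ (and the calibration of the red con-graphs) so that, simultaneously, $\Gl$ genuinely has skewness $k$, every skewness-$k$ drawing is forced to spend its entire budget on a minimum $s$-$t$ cut of $S_{\mathrm{blue}}$ rather than on cheaply cutting a red connection, and yet $\crn(\Gl)$ stays in $\O(k)$. These requirements pull against each other --- the thinner the red con-graphs are made to keep $\crn$ small, the easier it becomes to sidestep the blue con-graph when looking for a skewness-$k$ drawing --- and threading this is precisely where the extra freedom of vertex deletion used in \Cref{thm:kapex:lb} is unavailable, which is also the reason the bound degrades from $\Omega(n^2)$ there to $\Omega(nk)$ here.
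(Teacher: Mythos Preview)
Your high-level plan matches the paper's: same alternative frame coloring (\Cref{fig:alt_frame}), single-edge red con-graphs, gray $(\ell k,2)$-bundles, and the final counting via blue--gray crossings each covering at most $\tfrac{1}{k\cdot\ell k}$ of $\Kuras$. You are also right that the design of $S_{\mathrm{blue}}$ is where the proof lives. (Your extra step---that $R$ must be a minimum $s$-$t$ cut of $S_{\mathrm{blue}}$---is true but not needed: once $R\subseteq E(S_{\mathrm{blue}})$, every covering crossing is already blue--gray and the bound follows directly.)

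The genuine gap is exactly where you anticipate it, and your candidate of ``pendant $K_5$-type gadgets'' entirely inside $S_{\mathrm{blue}}$ does not close it. If the $k$ gadgets are vertex-disjoint $K_5$'s disjoint from the $k$ $s$-$t$ paths, then any size-$k$ set $R$ that kills all the gadgets leaves the $s$-$t$ paths intact, so $G_\ell-R$ still contains full-frame $K_{3,3}$-subdivisions and is not planar---hence $G_\ell$ would not be skewness-$k$ in the first place. More generally, requiring ``$S_{\mathrm{blue}}$ has skewness exactly $k$'' is too weak; you need a \emph{single} set of $k$ blue edges that simultaneously kills the obstructions and severs the frame's $K_{3,3}$, and self-contained pendant gadgets cannot do this.

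The paper's key idea is to let the obstructions leave $S_{\mathrm{blue}}$. It starts from a $(k,2)$-bundle with degree-$2$ vertices $A$ and poles $s,t$; for each $v\in A$ it adds three vertices $W_v$ forming a $K_5$ on $\{s,v\}\cup W_v$, but \emph{deletes} the edge $\{s,v\}$. The missing $\{s,v\}$ is then realized as a Kuratowski path via $\{v,t\}$ together with any of the $\ell k$ edge-disjoint gray $t$--$s$ paths in $G_\ell$. For each fixed $v$ this yields $\ell k$ $K_5$-subdivisions whose only common edges are blue; since $\ell k>k$, they cannot all be killed by gray skewness edges, so one blue skewness edge per $v$ is forced, and these are pairwise distinct---hence all of $R$ is blue. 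Removing the $k$ edges $\{v,t\}$ certifies that $G_\ell$ really is skewness-$k$. This ``gadgets-through-gray'' trick is precisely the missing ingredient in your plan.
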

\begin{proof}
    Consider the same coloring of the frame $F$ as in \Cref{thm:kapex:lb}, see \Cref{fig:alt_frame}.
    Let \Gl be a framework graph, where the red con-graphs are single edges, the gray con-graphs are $\bundle{\ell k}{2}$s, but the blue con-graph is more involved:
    Start with a \bundle{k}{2} $B$ and let $A$ be its $k$ degree-$2$ vertices and $s$ and $t$ its poles;
    for each $v\in A$, add three new vertices $W_v$ that form a $K_5$ together with $v$ and $s$, but remove the edge $\{s,v\}$.
    This yields $n \in \Theta(\ell k)$.
    Let $Q$ be the $\ell k$ edge-disjoint $s$-$t$-paths of gray edges in \Gl. 
    We observe $k$ edge-disjoint $K_5$ subdivisions, one for each $v$, induced by the vertices $\{s,v\}\cup W_v$ together with the edge $\{v,t\}$ and a path of $Q$ to establish the $v$-$s$-Kuratowski path.
    In standard drawings, we draw the blue con-graph with $k$ crossings, one for each $v\in A$, such that $\{v,t\}$ is crossed by an edge between vertices of $W_v$.
    We can remove the $k$ blue edges incident to $t$ to obtain a planar blue subdrawing.
    A red-red standard drawing shows that $\crn(\Gl) \leq k+1$.
    A blue-gray standard drawing, where each crossing involves a blue edge incident to $t$, shows that $\Gl$ is skewness-$k$, cf.\ \Cref{fig:skewk_blue_gray_cr}.
   
    Consider a skewness-$k$ drawing \D of \Gl.
    For each $v\in A$, consider the $\ell k$ $K_5$ subdivisions that differ only by the chosen path of $Q$. Since there cannot be $\ell k$ (gray) skewness edges, we have a blue skewness edge per $v$, which is not in any considered $K_5$ subdivision for any other vertex in $A$. 
    Thus, all $k$ skewness edges are blue and all \otks have to be covered by  blue-gray crossings.
    As each such crossing can cover at most $\frac{1}{\ell k^2}$ \otks, we have
    $\crbp[\kapex](\Gl) \in \Omega(\ell k^2)=\Omega(nk)$.
\end{proof}

A non-skewness-$k$ drawing has at least $k+1$ crossings.
A simple skewness-$k$ drawing has at most $mk$ crossings, attained when each of the $k$ edges we want to remove crosses all other edges.
Via \Cref{lem:simple}, this leads to $\Rs[\skewk]\in \O(\frac{mk}{k}) = \O(n)$ for \cref{thm:ub}(b). Thus
\begin{corollary}
	\label{thm:skewk:crr}
	The skewness-$k$ crossing ratio $\crr{\skewk}$ is in $\Theta(n)$ and \ksharp.
\end{corollary}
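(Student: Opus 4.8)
The plan is to instantiate the framework with the alternative frame of \Cref{fig:alt_frame} (blue adjacent to two independent red connections, the remaining five connections gray), exactly as in \Cref{thm:kapex:lb}, but with the blue con-graph adapted so that its ``$K_5$-walls'' force \emph{edges} rather than \emph{vertices} to be skewness edges. Concretely, I would build the blue con-graph from a \bundle{k}{2} $B$ with pole vertices $s,t$ and degree-$2$ vertices $A=\{v_1,\dots,v_k\}$; for each $v\in A$ I attach three fresh vertices $W_v$ so that $\{s,v\}\cup W_v$ induces a $K_5$, and then \emph{delete} the edge $\{s,v\}$. The point of the deletion is that the $v$--$s$ Kuratowski path of each such $K_5$-structure is \emph{not} carried by a single blue edge but must be closed externally through one of the $\ell k$ edge-disjoint gray $s$--$t$-paths $Q$; this turns ``which edge do we remove'' into a real choice and is what drives the lower bound. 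This yields $n\in\Theta(\ell k)$.

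For the two standard drawings: a red--red standard drawing (the red con-graphs are single edges, so the two crossing red connections contribute $\O(1)$, plus the $k$ planned crossings inside the blue con-graph) gives $\crn(\Gl)\le k+1\in\O(k)$. A blue--gray standard drawing, drawing $B$ planarly with the gray bundles routed alongside their connections and arranging for the $k$ edges $\{v_j,t\}$ each to be crossed by one edge inside $W_{v_j}$, is skewness-$k$: deleting the $k$ blue edges incident to $t$ removes every crossing, leaving a planar subdrawing (cf.\ \Cref{fig:skewk_blue_gray_cr}). Hence $\Gl$ is a skewness-$k$ graph.

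For the lower bound I take an arbitrary skewness-$k$ drawing \D of \Gl and apply the framework counting argument. The key combinatorial input is that for each $v\in A$ there are $\ell k$ \emph{edge-disjoint} $K_5$-subdivisions, one per path of $Q$, all sharing the vertices $\{s,v\}\cup W_v$ and the edge $\{v,t\}$ but differing in the chosen gray $v$--$s$-path. A single $K_5$-subdivision needs at least one of its edges to be a skewness edge; since the $\ell k$ gray paths are edge-disjoint and $\ell>k$, we cannot use $\ell k$ distinct gray skewness edges to kill all of them, so at least one skewness edge must lie in the ``fixed'' part $\{s,v\}\cup W_v \cup \{v,t\}$ — and that edge is blue and is \emph{not} in the fixed part of any $K_5$-structure for another $v'\ne v$ (the $W_v$ are disjoint and the edges $\{v,t\}$ are distinct). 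Counting over all $v\in A$, all $k$ skewness edges are thus forced to be blue edges of the con-graph; none is red or gray. Consequently every Kuratowski subdivision in \Kuras\ must still be covered by a crossing in \D, and that crossing cannot involve a red connection (red is adjacent to blue in the frame), so every covering crossing is a blue--gray crossing. Since the blue con-graph's relevant paths have width $\ell k$ and the gray bundles have width $\ell k$ and height $2$, each blue--gray crossing covers at most $\frac{1}{\ell k\cdot \ell k}\cdot h(\mathrm{gray})$, i.e.\ $\O(\frac{1}{\ell k^2})$ \otks (the factor $h=2$ from the gray height is absorbed), giving $\crbp[\skewk](\Gl)\in\Omega(\ell k^2)=\Omega(nk)$, and thus $\crr{\skewk}=\crbp[\skewk](\Gl)/\crn(\Gl)\in\Omega(nk/k)=\Omega(n)$.

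The main obstacle, and the step deserving the most care, is the forcing argument that \emph{all $k$ skewness edges are blue}: one must argue cleanly that a gray edge can be a skewness edge for at most the $\le \ell k$ $K_5$-subdivisions of a single $v$ that route through it, so using gray skewness edges is never enough because $\ell k > k$ of those $K_5$-subdivisions (per $v$, and $k$ values of $v$) would be needed, whereas a single blue skewness edge inside the fixed part handles all $\ell k$ of them for its $v$ at once and is ``wasted'' for every other $v$. Everything else — the two standard drawings, the vertex count $n\in\Theta(\ell k)$, and the per-crossing covering bound — is routine framework bookkeeping of the same shape as in \Cref{thm:kapex:lb}. Combined with \Cref{lem:simple} (which supplies $\Rs[\skewk]\in\O(mk/k)=\O(n)$ via \Cref{thm:ub}(b), since $m\in\Theta(n)$ for the sparse graphs) and the observation that a non-skewness-$k$ drawing has at least $k+1$ crossings while the $k+1$ crossings of the construction are already attained by a $\bp[(k+1)]$-drawing, this yields \Cref{thm:skewk:crr}: $\crr{\skewk}\in\Theta(n)$ and \ksharpness.
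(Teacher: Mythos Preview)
Your proposal is correct and takes essentially the same approach as the paper: identical alternative frame coloring, identical blue con-graph (a $(k,2)$-bundle with a $K_5$ attached at each degree-$2$ vertex $v$ and the edge $\{s,v\}$ deleted), the same forcing argument that all $k$ skewness edges must be blue, and the same upper-bound reasoning via \Cref{lem:simple} and \Cref{thm:ub}(b). One small slip to fix: the blue con-graph has width $k$, not $\ell k$, and no $h(\mathrm{gray})$ factor belongs in the per-crossing covering bound---the correct bound is simply $\tfrac{1}{k\cdot \ell k}=\tfrac{1}{\ell k^2}$, which is exactly the $\O(\tfrac{1}{\ell k^2})$ you then use.
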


\subsection{Rectilinear Crossing Ratios\texorpdfstring{: $\boldmath\text{}\bcrrbp$}{}}
\label{sec:rectilinear}
The rectilinear crossing number $\bcr(G)\geq \crn(G)$ is the minimum number of crossings in a straight-line drawing of $G$~\cite{rectilinearCrossingNumber}. Analogously, we can define $\bcrbp(G)\geq \crbp(G)$ as the minimum number of crossings over all straight-line drawings of $G$ subject to the beyond planarity concept $\bp$. We naturally yield the rectilinear crossing ratio $\bcrrbp$, defined analogously to $\crrbp$ but w.r.t.\ the rectilinear crossing numbers.

All previous crossing number \emph{lower} bounds still hold in the straight-line setting. 
Further, we establish the crossing number \emph{upper} bounds via straight-line drawings in nearly all above theorems---only for the fan-planar variants, its $K_7$s require non-straight edges. Thus:
\begin{corollary}
    The rectilinear crossing ratio bounds for $k$-planar, $k$-vertex-planar, IC-planar, NIC-planar, NNIC-planar, $k$-fan-crossing-free, $k$-gap-planar, $k$-edge-crossing, skewness-$k$, and $k$-apex-planar graphs coincide with those of the above crossing ratios.
\end{corollary}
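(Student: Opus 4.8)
The plan is to show that $\bcrrbp$ is squeezed between the same two expressions that bound $\crrbp$, for every concept in the list. Everything rests on two trivial inequalities: since straight-line drawings are a special case of arbitrary drawings, $\bcr(G)\ge\crn(G)$ and $\bcrbp(G)\ge\crbp(G)$ hold for every graph $G$ (whenever the right-hand side is defined).

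For the \emph{lower} bounds I would re-use verbatim the families $\{\Gl\}$ from the corresponding theorems, namely \Cref{thm:kpl:lb,thm:kvp:lb,thm:ic:lb,thm:nic:lb,thm:kfcf:lb,thm:ecr:lb,thm:kgap:lb,thm:kapex:lb,thm:skewk:lb}, and the $k=2$ instance of \Cref{thm:kfcf:lb} for NNIC-planarity (cf.\ \Cref{thm:nnic:crr}). Three observations suffice. First, each $\Gl$ admits a straight-line \bp-drawing: the blue-blue (resp.\ blue-gray, blue-red) \emph{standard} drawing used to prove $\Gl\in\Gbpnn$ is assembled from bundles, i.e.\ collections of internally disjoint paths, each drawable as a polyline of straight segments; the finitely many prescribed crossings are realizable with straight segments, and a final $\varepsilon$-perturbation removes accidental collinearities (for the $k$-apex and skewness constructions, \Cref{lem:simple} already delivers such a drawing). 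Hence $\bcrbp(\Gl)$ is well defined and, by the inequality above, inherits the $\Omega(\cdot)$ lower bound proved for $\crbp(\Gl)$. Second, the matching $\O(\cdot)$ upper bound on $\crn(\Gl)$ in each theorem is witnessed by a red-yellow (resp.\ red-red) standard drawing, which for all the listed concepts is straight-line up to $\varepsilon$-perturbation; hence $\bcr(\Gl)$ obeys the same $\O(\cdot)$ bound. Dividing these two facts gives $\bcrbp(\Gl)/\bcr(\Gl)$ the same growth rate as $\crbp(\Gl)/\crn(\Gl)$, i.e.\ the claimed lower bound on $\bcrrbp$.

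For the \emph{upper} bounds I would re-run \Cref{thm:ub} and each concept's argument bounding $\Rs$, replacing ``drawing'' throughout by ``straight-line drawing''. Split $\Gbpnn$ into its dense part $\Gbpd$ and sparse part $\Gbps$ as in \Cref{sec:prelim}. On $\Gbpd$ we still have $\bcr(G)\ge\crn(G)\in\Omega(m^3/n^2)$ by the crossing lemma, while $\bcrbp(G)\in\O(m^2)$ holds for every straight-line drawing (two segments cross at most once) and even $\bcrbp(G)\in\O(mk)$ for the parametrized concepts (the per-edge, per-vertex, per-gap, or per-apex restriction defining the concept caps the crossing count regardless of straightness); so \Cref{thm:ub}(a),(b) apply unchanged and give the same $\O(\cdot)$ bound on $\bcrbp(G)/\bcr(G)$. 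On $\Gbps$ (where $m\in\Theta(n)$ on connected graphs), each concept's bound on $\Rs$ was obtained from a lower bound on $\crn(G)$ extracted from a crossing-minimal drawing that violates the concept (e.g.\ an edge with $k+1$ crossings forces $\crn(G)\ge k+1$) together with the combinatorial upper bound on $\crbp(G)$ (e.g.\ $mk$, $nk$, $m/2$, or $mk^2$); both ingredients hold verbatim for straight-line drawings, so the sparse term of $\bcrrbp$ is bounded by the same expression as $\Rs$. Hence the $\O(\cdot)$ side of $\bcrrbp$ matches that of $\crrbp$ for each listed concept.

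The one genuinely delicate point --- and precisely why the fan-planar variants are excluded --- is the transfer of the crossing-number \emph{upper} bounds in the lower-bound step: one must verify that the \emph{specific} standard drawing achieving $\crn(\Gl)\in\O(\cdot)$ can be realized (or $\varepsilon$-perturbed) with straight edges. This is routine for all path-bundle and $K_5$-gadget constructions, but it fails for the fan-planar constructions, whose red/gray con-graph is a $K_7$ that cannot be placed crossing-optimally with straight edges inside the frame while preserving the structure the proof needs. Combining the two directions, for every concept in the list the rectilinear crossing ratio is bounded above and below by the same functions of $n$ (and $k$) as the ordinary crossing ratio, which is the assertion of the corollary.
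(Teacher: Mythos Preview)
Your argument is correct and matches the paper's approach. The paper's own justification is only two sentences: lower bounds on $\crbp(\Gl)$ transfer to $\bcrbp(\Gl)$ trivially (straight-line is more restrictive), and the upper bounds on $\crn(\Gl)$ are witnessed by straight-line standard drawings (hence bound $\bcr(\Gl)$ as well), with the fan-planar $K_7$ being the sole obstruction.

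You are more explicit than the paper on one point: you spell out why the \emph{upper} bound on the ratio also transfers (crossing lemma plus the combinatorial caps $mk$, $nk$, $m/2$, $k^2$, etc.\ apply to any straight-line \bp-drawing, and $\bcr(G)\ge\crn(G)$ handles the denominator). The paper leaves this entirely implicit. Your treatment is therefore slightly more complete, but not a different route.
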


\section{Conclusion}%

We presented a framework to yield (short) proofs for the crossing ratios of many different beyond-planarity concepts, summarized in \Cref{tab:res}. We also included straight-line and non-simple variants of these concepts. Most importantly, and in contrast to most previous approaches, this allows us to attain asymptotically tight bounds.
The key idea is to consider graphs arising from a single $K_{3,3}$ by ``thickening'' edges via subgraphs, and than count how many of the arising 
$K_{3,3}$-subdivisions can be resolved by certain crossings, yielding a lower bound on the total number of crossings in beyond-planarity-restricted drawings.

There are only three beyond-planarity concepts which we did not discuss although their crossing ratios (on simple drawings) have previously been studied:
For \emph{planarly connected} graphs, the tight bound $\Theta(n^2)$ is already known~\cite{BPS21}.
For \emph{$k$-quasi-planar} graphs, our \cref{thm:ub}(a) easily yields the improved upper bound $\crr{k\text{-qp}} \in \O(n^2/ k^2)$ instead of $\O(f(k) \cdot n^2 (\log n)^2)$~\cite{CKMV19} for some growing function $f$.
A \emph{$(k,l)$-grid-free} drawing, $l \leq k$, can have at most $O(k^{1/l} \cdot m^{2-1/l})$ crossings~\cite{KST54}.
Thus, for $(k,l)$-grid-free graphs, our \cref{thm:ub}(a) yields the upper bound $\crr{(k,l)\text{-grid-free}} \in \O(n^{2-1/l}/(lk^{1-1/l}))\subset \O(n^2/l)$ instead of $\O(g(k,l) \cdot n^2)$~\cite{BPS21}, where $g$ is exponential in $l$.
The idea of counting crossings that resolve certain percentages of Kuratowski subdivisions seems to allow us to also improve, at least slightly, the known crossing ratio lower bounds for both.
However, this cannot easily be done purely within our framework and the lower bounds remain linear in $n$ and therefore far from matching the upper bounds.

Interestingly, for every but one considered beyond-planarity concept \bp, we were able to show that the ratio is \emph{\ksharpness}, i.e., we can observe the worst-case crossing ratio already for graphs that would attain the optimal crossing number if we were to increase $k$ by just~1.
Only for the $k$-gap-planar crossing ratio, our construction does not yield \ksharpness{};
it is unclear whether this is a limitation of our framework or if the $k$-gap-planar crossing ratio is indeed not \ksharpness.

\bibliography{bib2doi}

\clearpage
\appendix

\section{APPENDIX: Fan-crossing-free vs.\ NNIC-planar}

We give the full proof (omitted above) to show that FCF and NNIC are different concepts.

\begin{figure}[b!]
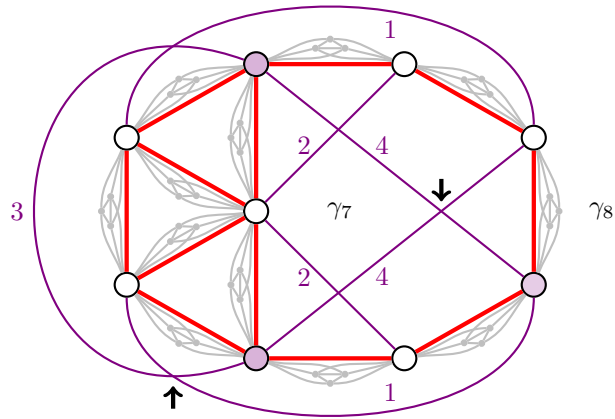

\centering
        \FCFvsNNICdrawing{true}
    \caption{\nolinenumbers{}Labeled version of \Cref{fig:nnic_vs_fcf}: A fan-crossing free but not NNIC-planar graph $G=(V_L\cup V_S, E_W\cup E_G\cup E_L)$. Large vertices $V_L$ are drawn as large circles, small vertices $V_L$ as small gray dots. Wall edges $E_W$ are drawn red; guards $E_G$ gray; loners $E_L$ violet. Once it is established that walls can never be crossed, the labels at the loners gives the order used to argue that the fan-crossing-free embedding is unique (up to mirroring and irrelevant details for the guards). The embedding is not NNIC-planar: arrows point to two crossings that share three vertices (shaded in light violet).}
    \label{fig:nnic_vs_fcf_extended}
\end{figure}

\THMnnicfcf*

\begin{proof}
We will establish that the graph $G$ given in \Cref{fig:nnic_vs_fcf_extended} is fan-crossing-free but not NNIC-planar. To understand the graph better, we partition the vertices of $G$ into \emph{large} vertices $V_L$ and \emph{small} vertices $V_S$, as depicted. We partition the edges into \emph{walls} $E_W$, \emph{guards} $E_G$, and \emph{loners} $E_L$.
Guards are precisely those edges that are adjacent to a small vertex. Let $e$ be a wall, then $K_e$ is the unique $K_5$ consisting of $e$ and nine guards. Observe that $V_S$ are the only vertices of degree $4$, while all $V_L$ vertices have degree at least $9$.
In the following, let $\mathcal{D}$ be a fan-crossing-free drawing of $G$ (for example, but not necessarily precisely the one depicted in \Cref{fig:nnic_vs_fcf_extended}). Let $\mathcal{D}_e$ be the (fan-crossing-free) subdrawing of $K_e$, for any wall $e$.
It is easy to see that any fan-crossing-free drawing of a $K_5$ has exactly one crossing, and each of its $8$ cells (regions bounded by edges, vertices and crossing points) has degree $3$ ($3$ parts of different edge curves). Furthermore, any edge can cross at most one edge of a $K_e$ without a fan-crossing, and so its end points have to reside in two distinct cells of $\mathcal{D}_e$ (property $\boldmath\langle\textbf{A}\rangle$).

Let $e,f\in E_W$ be two walls, and assume that there is a crossing between some edge $e'\in E(K_e)$ and edge $f'\in E(K_f)$. Consider the cells of $\mathcal{D}_f$.
If there would be a cell $\gamma$ of $\mathcal{D}_f$ that contains only one of the vertices $V(K_e)$ (say $v$), the $4$ $v$-incident edges of $E(K_e)$ would have to leave $\gamma$. Since $\gamma$ has degree 3, by pigeonhole principle there would be an edge bordering $\gamma$ crossed by a fan (two $v$-incident edges); a contradiction.
Thus, let $\gamma_1$ be the $\mathcal{D}_f$-cell containing two vertices $v_1,v_2\in V(K_e)$, and $\gamma_2$ the (different) $\mathcal{D}_f$-cell containing the other three $U_3$. Each of the three $v_1$-incident edges connecting to $U_3$ has to cross a different edge bordering $\gamma_1$. But then, by property $\langle\text{A}\rangle$, neither of them may cross another edge of $K_f$ without introducing a fan-crossing. So, the three edges cannot all end at a common cell $\gamma_2$.

Thus, we know that there is no crossing between any edges of the different $K_e$, for $e\in E_W$. Furthermore, observe that in $G[W_E]\subset G$, the subgraph containing only the walls, there is no edge whose end points form a separation pair (i.e., removing any pair of adjacent vertices in $G[W_E]$ leaves a connected subgraph). Thus in $\mathcal{D}$, for each $e\in W_E$, all of $G\setminus K_e$ is drawn in a common cell of $\mathcal{D}_e$ (which we may call the outer cell; property $\boldmath\langle\textbf{B}\rangle$), and $e$ is also not crossed by any edge of $K_e$.

Overall, the subdrawing of \D of subgraph $G[W_E]$ has to be a planar and its embedding is unique, up to mirroring. Let $\gamma_8$ and $\gamma_7$ be the unique largest and second-largest face of this embedding.
Since loners $E_L$ always connect large vertices, they cannot cross any wall or guard edges by properties $\langle\text{A}\rangle$ and $\langle\text{B}\rangle$.

Consider the loners $E_L$ in the order specified in \Cref{fig:nnic_vs_fcf_extended}: the four loners labeled ``1'' or ``2'' have a unique embedding that yields no crossing with wall edges. Now, the loner labeled ``3'' has to be routed through $\gamma_8$ (crossing over the two non-adjacent loners labeled ``1''), as routing through $\gamma_7$ would cross the fan of the two loners labeled ``2''. Now, the loners labeled ``4'' cannot be routed through $\gamma_8$ (each would form a fan together with ``3'' that is crossed by a loner labeled ``1''), and have to be routed through $\gamma_7$ to establish a fan-crossing-free drawing. In particular, the constructed fan-crossing-free subembedding of $G[E_W\cup E_L]$ is unique up to mirroring.
But it contains even two pairs of crossings, each of which shares three vertices---a contradition to NNIC-planarity.
\end{proof}

\end{document}